\DeclareSymbolFont{extraup}{U}{zavm}{m}{n}
\DeclareMathSymbol{\vardiamond}{\mathalpha}{extraup}{87}
\newcommand{\nomi}{\mathbf{i}}
\newcommand{\nomj}{\mathbf{j}}
\newcommand{\nomk}{\mathbf{k}}
\newcommand{\bigamp}{\mathop{\mbox{\Large \&}}}
\renewcommand{\phi}{\varphi}
\renewcommand{\emptyset}{\varnothing}
\newcommand{\Diamondblack}{\vardiamond}
\renewcommand{\epsilon}{\varepsilon}
\theoremstyle{definition}
\newtheorem{theorem}{Theorem}[section]
\newtheorem{lemma}[theorem]{Lemma}
\newtheorem{proposition}[theorem]{Proposition}
\newtheorem{example}[theorem]{Example}
\newtheorem{definition}[theorem]{Definition}
\title{Algorithmic Correspondence for Hybrid Logic with Binder}
\author{Zhiguang Zhao}
\affil{\small School of Mathematics and Statistics, Taishan University, Tai'an, 271000, China}
\affil{\small zhaozhiguang23@gmail.com}
\date{}
\begin{document}
\maketitle
\begin{abstract}
In the present paper, we develop the algorithmic correspondence theory for hybrid logic with binder $\mathcal{H}(@, \downarrow)$. We define the class of Sahlqvist inequalities for $\mathcal{H}(@, \downarrow)$, each inequality of which is shown to have a first-order frame correspondent effectively computable by an algorithm $\mathsf{ALBA}^{\downarrow}$. 

\emph{Keywords}: correspondence theory, hybrid logic with binder, ALBA algorithm
\end{abstract}

\section{Introduction}

\paragraph{Hybrid Logic}Hybrid logics \cite{BeBlWo06} refer to a number of extensions of modal logic where it is possible to refer to states by the so-called \emph{nominals} which are true at exactly one world. In addition, there are different connectives that can be added to the hybrid language to further extend the expressive power. Two such examples are the \emph{satisfaction operator} $@_{\mathbf{i}}$ which allows one to jump to the world denoted by the nominal $\mathbf{i}$, and the \emph{binder} $\downarrow x$ which binds the current world and can refer to the world later in the formula. 

\paragraph{Correspondence Theory}Correspondence theory concerns the relation between modal formulas and first-order formulas. We say that a modal formula and a first-order formula correspond to each other if they define the same class of Kripke frames. Early results concerning correspondence theory are Sahlqvist's \cite{Sa75} and van Benthem's \cite{vB83}, who gave a syntactic characterization of certain modal formulas (later called \emph{Sahlqvist formulas}) which have first-order correspondents and they are canonical, which implies that any normal modal logic axiomatized with Sahlqvist formulas is strongly complete with respect to the class of all Kripke frames satisfying the corresponding first-order conditions. The Sahlqvist-van Benthem algorithm \cite{Sa75,vB83} was given to transform a Sahlqvist formula into its first-order correspondent.

\paragraph{Correspondence Theory for Hybrid Logic}In the literature, there are many existing works on the correspondence theory for hybrid logic \cite{BetCMaVi04,Co09,
CoGoVa06b,ConRob,GaGo93,GoVa01,Ho06,HoPa10,Ta05,tCMaVi06}. In particular, ten Cate et al.\ \cite{tCMaVi06} showed that any hybrid logic obtained by adding modal Sahlqvist formulas to the basic hybrid logic H is strongly complete. Gargov and Goranko showed that any extension of H with pure axioms (formulas containing no propositional variables but only possibly nominals) is strongly complete. In \cite{tCMaVi06} it was shown that these two results cannot be combined in general, since there is a
modal Sahlqvist formula and a pure formula which together give a Kripke-incomplete logic when added to H. Conradie and Robinson \cite{ConRob} investigated to what extent these two results can be combined. In the end of \cite{ConRob}, it was mentioned that a further direction would be to extend results concerning extending correspondence theory to more expressive hybrid languages e.g.\ hybrid logic with binder, which is the focus of the present paper.

\paragraph{Unified correspondence} The present paper belongs to the theory of \emph{unified correspondence} \cite{ConPalSou,CoGhPa14}. One major part of this theory is the algorithm $\mathsf{ALBA}$ (Ackermann Lemma Based Algorithm), which computes the first-order correspondents of input formulas/inequalities and is guaranteed to succeed on Sahlqvist inequalities.

\paragraph{Structure of the paper}
In the present paper, we will use the algorithmic methodology to provide a correspondence theory treatment of hybrid logic with binder. Section \ref{Sec:Prelim} presents preliminaries on hybrid logic with binder, including syntax and semantics. Section \ref{Sec:Prelim:ALBA} provides preliminaries on algorithmic correspondence theory. Section \ref{Sec:Sahl} defines Sahlqvist inequalities. Section \ref{Sec:ALBA} gives the Ackermann Lemma Based Algorithm ($\mathsf{ALBA}^{\downarrow}$) for hybrid logic with binder. Section \ref{Sec:Soundness} gives the soundness proof of the algorithm. Section \ref{Sec:Success} shows that $\mathsf{ALBA}^{\downarrow}$ succeeds on Sahlqvist inequalities. Section \ref{Sec:Conclusion} gives conclusions.

\section{Preliminaries on hybrid logic with binder}\label{Sec:Prelim}

In the present section we collect the preliminaries on hybrid logic with binder. For more details, see \cite[Chapter 14]{BeBlWo06}.

\subsection{Language and Syntax}\label{Subsec:Lan:Syn}

\begin{definition}

Given countably infinite sets $\mathsf{Prop}$ of propositional variables, $\mathsf{Nom}$ of nominals, $\mathsf{Svar}$ of state variables, which are pairwise disjoint, the hybrid language $\mathcal{H}(@,\downarrow)$ is defined as follows:

$$\varphi::=p \mid \bot \mid \top \mid \mathbf{i} \mid x \mid \neg\varphi \mid (\varphi\land\varphi) \mid (\varphi\lor\varphi) \mid (\varphi\to\varphi) \mid \Box\varphi \mid \Diamond\varphi \mid @_{\mathbf{i}}\varphi \mid @_{x}\varphi\mid \downarrow x.\varphi,$$
where $p\in \mathsf{Prop}$, $\mathbf{i}\in\mathsf{Nom}$, $x\in\mathsf{Svar}$. We will follow the standard rules for omission of the parentheses. We also use $\mathsf{Prop}(\phi)$ to denote the propositional variables occuring in $\phi$. We use the notation $\vec p$ to denote a set of propositional variables and use $\phi(\vec p)$ to indicate that the propositional variables occur in $\phi$ are all in $\vec p$. We call a formula \emph{pure} if it does not contain propositional variables. In the present article we will consider only the hybrid language with one unary modality.
\end{definition}

Throughout the article, we will also make substantial use of the following expressions:

\begin{definition}
$\ $
\begin{itemize}
\item An \emph{inequality} is of the form $\phi\leq\psi$, where $\phi$ and $\psi$ are formulas.

\item A \emph{quasi-inequality} is of the form $\phi_1\leq\psi_1\ \&\ \ldots\ \&\ \phi_n\leq\psi_n\ \Rightarrow\ \phi\leq\psi$. 

\item A \emph{Mega-inequality} is defined inductively as follows:
$$\mathsf{Mega}::=\mathsf{Ineq}\mid \mathsf{Mega}\bigamp \mathsf{Mega} \mid \forall x(\mathsf{Mega})$$

where $\mathsf{Ineq}$ is an inequality, $\bigamp$ is the meta-conjunction and $\forall x$ is a universal state quantifier.

\item A \emph{universally quantified inequality} is defined as $\forall x_1\ldots\forall x_n(\phi\leq\psi)$.

\item A \emph{quasi-universally quantified inequality} is defined as $\mathsf{UQIneq_1}\&\ldots\& \mathsf{UQIneq_n}\Rightarrow\mathsf{UQIneq}$ where $\mathsf{UQIneq}, \mathsf{UQIneq_i}$ are universally quantified inequalities.
\end{itemize}
\end{definition}

We will find it easy to work with inequalities $\phi\leq\psi$ in place of implicative formulas $\phi\to\psi$ in Section \ref{Sec:Sahl}.

\subsection{Semantics}\label{Subsec:Seman}

\begin{definition}

A \emph{Kripke frame} is a pair $\mathbb{F}=(W,R)$ where $W\neq\emptyset$ is the \emph{domain} of $\mathbb{F}$, the \emph{accessibility relation} $R$ is a binary relation on $W$. A \emph{pointed Kripke frame} is a pair $(\mathbb{F}, w)$ where $w\in W$.
A \emph{Kripke model} is a pair $\mathbb{M}=(\mathbb{F}, V)$ where $V:\mathsf{Prop}\cup\mathsf{Nom}\to P(W)$ is a \emph{valuation} on $\mathbb{F}$ such
that for all nominals $\nomi\in\mathsf{Nom}$, $V(\nomi)$ is a singleton subset of $W$.

An assignment $g$ on $\mathbb{M}=(W,R,V)$ is a map $g:\mathsf{Svar}\to W$. Given an assignment $g$,  $x\in\mathsf{Svar}$, $w\in W$, we can define $g^{x}_{w}$, the \emph{$x$-variant of $g$} as follows: $g^{x}_{w}(y)=g(y)$ for all $y\in\mathsf{Svar}\setminus\{x\}$ and $g^{x}_{w}(x)=w$.

Now the satisfaction relation can be defined as follows: given any Kripke model $\mathbb{M}=(W,R,V)$, any assignment $g$ on $\mathbb{M}$, any $w\in W$, 

\begin{center}
\begin{tabular}{l c l}
$\mathbb{M},g,w\Vdash p$ & iff & $w\in V(p)$;\\
$\mathbb{M},g,w\Vdash \bot$ & : & never;\\
$\mathbb{M},g,w\Vdash \top$ & : & always;\\
$\mathbb{M},g,w\Vdash\nomi$ & iff & $\{w\}=V(\nomi)$;\\
$\mathbb{M},g,w\Vdash x$ & iff & $g(x)=w$;\\
$\mathbb{M},g,w\Vdash \neg\varphi$ & iff & $\mathbb{M},g,w\nVdash\varphi$;\\
$\mathbb{M},g,w\Vdash\varphi\land\psi$ & iff & $\mathbb{M},g,w\Vdash \varphi$ and $\mathbb{M},g,w\Vdash\psi$;\\
$\mathbb{M},g,w\Vdash\varphi\lor\psi$ & iff & $\mathbb{M},g,w\Vdash \varphi$ or $\mathbb{M},g,w\Vdash\psi$;\\
$\mathbb{M},g,w\Vdash\varphi\to\psi$ & iff & $\mathbb{M},g,w\nVdash \varphi$ or $\mathbb{M},g,w\Vdash\psi$;\\
$\mathbb{M},g,w\Vdash \Box\varphi$ & iff & $\forall v(Rwv\ \Rightarrow\ \mathbb{M},g,v\Vdash\varphi)$;\\
$\mathbb{M},g,w\Vdash\Diamond\varphi$ & iff & $\exists v(Rwv\ \mbox{ and }\ \mathbb{M},g,v\Vdash\varphi)$;\\
$\mathbb{M},g,w\Vdash @_{\mathbf{i}}\varphi$ & iff & $\mathbb{M},g,V(\nomi)\Vdash\varphi$;\\
$\mathbb{M},g,w\Vdash @_{x}\varphi$ & iff & $\mathbb{M},g,g(x)\Vdash\varphi$;\\
$\mathbb{M},g,w\Vdash \downarrow x.\varphi$ & iff & $\mathbb{M},g^{x}_{w},w\Vdash\varphi$.\label{page:downarrow}\\
\end{tabular}
\end{center}
For any formula $\phi$, we let $\llbracket\varphi\rrbracket^{\mathbb{M},g}=\{w\in W\mid \mathbb{M},g,w\Vdash\varphi\}$ denote the \emph{truth set} of $\varphi$ in $(\mathbb{M},g)$. The formula $\varphi$ is \emph{globally true} on $(\mathbb{M},g)$ (notation: $\mathbb{M},g\Vdash\varphi$) if $\mathbb{M},g,w\Vdash\varphi$ for every $w\in W$. We say that $\varphi$ is \emph{valid} on a Kripke frame $\mathbb{F}$ (notation: $\mathbb{F}\Vdash\varphi$) if $\varphi$ is globally true on $(\mathbb{F},V,g)$ for every valuation $V$ and every assignment $g$.
\end{definition}

For the semantics of inequalities, quasi-inequalities, mega-inequalities, universally quantified inequalities, quasi-universally quantified inequalities, they are given as follows:

\begin{definition}
$\ $
\begin{itemize}
\item An inequality is interpreted as follows:
$$(W,R,V),g\Vdash\phi\leq\psi\mbox{ iff }$$$$(\mbox{for all }w\in W, \mbox{ if }(W,R,V),g,w\Vdash\phi, \mbox{ then }(W,R,V),g,w\Vdash\psi).$$

\item A quasi-inequality is interpreted as follows:
$$(W,R,V),g\Vdash\phi_1\leq\psi_1\ \&\ \ldots\ \&\ \phi_n\leq\psi_n\ \Rightarrow\ \phi\leq\psi\mbox{ iff }$$
$$(W,R,V),g\Vdash\phi\leq\psi\mbox{ holds whenever }(W,R,V),g\Vdash\phi_i\leq\psi_i\mbox{ for all }1\leq i\leq n.$$

\item A Mega-inequality is interpreted as follows:

\begin{itemize}
\item $(W,R,V),g\Vdash \mathsf{Ineq}$ iff the inequality holds as defined in the definition above;

\item $(W,R,V),g\Vdash \mathsf{Mega_{1}}\bigamp\mathsf{Mega_{2}}$ iff $(W,R,V),g\Vdash \mathsf{Mega_{1}}$ and $(W,R,V),g\Vdash \mathsf{Mega_{2}}$;

\item $(W,R,V),g\Vdash\forall x(\mathsf{Mega})$ iff $(W,R,V),g^{x}_{w}\Vdash\mathsf{Mega}$ for all $w\in W$.

\end{itemize}

\item A universally quantified inequality is interpreted as follows:

$(W,R,V),g\Vdash\forall x_1\ldots\forall x_n(\phi\leq\psi)$ iff for all $w_1, \ldots, w_n\in W$, $(W,R,V),g^{x_1,\ldots,x_n}_{w_1, \ldots, w_n}\Vdash\phi\leq\psi$;

\item A quasi-universally quantified inequality is interpreted as follows:

$$(W,R,V),g\Vdash\mathsf{UQIneq_1}\&\ldots\& \mathsf{UQIneq_n}\Rightarrow\mathsf{UQIneq}\mbox{ iff }$$
$$(W,R,V),g\Vdash\mathsf{UQIneq}\mbox{ holds whenever }(W,R,V),g\Vdash\mathsf{UQIneq_i}\mbox{ for all }1\leq i\leq n.$$

\end{itemize}
\end{definition}

The definitions of validity are similar to formulas. It is easy to see that $(W,R,V),g\Vdash\phi\leq\psi$ iff $(W,R,V),g\Vdash\phi\to\psi$.

\section{Preliminaries on algorithmic correspondence}\label{Sec:Prelim:ALBA}

In this section, we give necessary preliminaries on the correspondence algorithm $\mathsf{ALBA}^{\downarrow}$ for hybrid logic with binder in the style of \cite{CoGoVa06,CoPa12,Zh21}. The algorithm $\mathsf{ALBA}^{\downarrow}$ transforms the input hybrid inequality $\varphi(\vec p)\leq\psi(\vec p)$ into an equivalent set of pure quasi-(universally quantified) inequalities which does not contain occurrences of propositional variables, and therefore can be translated into the first-order correspondence language via the standard translation of the expanded language of hybrid logic with binder (see page \pageref{page:expanded:language}).

The ingredients for the algorithmic correspondence proof to go through can be listed as follows:

\begin{itemize}
\item An expanded hybrid modal language as the syntax of the algorithm, as well as its interpretations in the relational semantics;
\item An algorithm $\mathsf{ALBA}^{\downarrow}$ which transforms a given hybrid inequality $\varphi(\vec p)\leq\psi(\vec p)$ into equivalent pure quasi-(universally quantified) inequalities $\mathsf{Pure}(\varphi(\vec p)\leq\psi(\vec p))$;
\item A soundness proof of the algorithm;
\item A syntactically identified class of inequalities on which the algorithm is successful;
\item A first-order correspondence language and first-order translation which transforms pure quasi-(universally quantified) inequalities into their equivalent first-order correspondents.
\end{itemize}

In the remainder of the paper, we will define an expanded hybrid modal language which the algorithm will manipulate (Section \ref{Sub:expanded:language}), define the first-order correspondence language of the expanded hybrid modal language and the standard translation (Section \ref{Sub:FOL:ST}). We give the definition of Sahlqvist inequalities (Section \ref{Sec:Sahl}), define a modified version of the algorithm $\mathsf{ALBA}^{\downarrow}$ (Section \ref{Sec:ALBA}), and show its soundness (Section \ref{Sec:Soundness}) and success on Sahlqvist inequalities (Section \ref{Sec:Success}).

\subsection{The expanded hybrid modal language}\label{Sub:expanded:language}

In the present subsection, we give the definition of the expanded hybrid modal language \label{page:expanded:language}, which will be used in the execution of the algorithm:
$$\varphi::=p \mid \bot \mid \top \mid \mathbf{i} \mid x \mid \neg\varphi \mid (\varphi\land\varphi) \mid (\varphi\lor\varphi) \mid (\varphi\to\varphi) \mid \Box\varphi \mid \Diamond\varphi \mid @_{\mathbf{i}}\varphi \mid @_{x}\varphi\mid \downarrow x.\varphi \mid$$
$$\blacksquare\phi \mid \Diamondblack\phi \mid \mathsf{A}\phi \mid \mathsf{E}\phi \mid \forall x\phi \mid \exists x\phi$$

For $\blacksquare$ and $\Diamondblack$, they are interpreted as the box and diamond modality on the inverse relation $R^{-1}$. $\mathsf{A}$ and $\mathsf{E}$ are global box and diamond modalities respectively, $\forall x\phi$ indicates that for all $x$-variant $g^{x}_{v}$ of $g$, $(W,R,V),g^{x}_{v},w\Vdash\phi$, and $\exists x\phi$ is the corresponding existential statement.

For the semantics of the expanded hybrid modal language, the additional semantic clauses can be given as follows:
\begin{center}
\begin{tabular}{l c l}
$\mathbb{M},g,w\Vdash \blacksquare\phi$ & iff & for all $v$ s.t. $(v,w)\in R$, $\mathbb{M},g,v\Vdash \phi$\\
$\mathbb{M},g,w\Vdash \Diamondblack\phi$ & iff & there exists a $v$ s.t. $(v,w)\in R$ and $\mathbb{M},g,v\Vdash \phi$\\
$\mathbb{M},g,w\Vdash \mathsf{A}\phi$ & iff & for all $v\in W$, $\mathbb{M},g,v\Vdash\phi$\\
$\mathbb{M},g,w\Vdash \mathsf{E}\phi$ & iff & there exists a $v\in W$ s.t.\ $\mathbb{M},g,v\Vdash\phi$\\
$\mathbb{M},g,w\Vdash\forall x\phi$ & iff & for all $v\in W$, $\mathbb{M},g^{x}_{v},w\Vdash\phi$\\
$\mathbb{M},g,w\Vdash\exists x\phi$ & iff & there exists a $v\in W$ s.t.\ $\mathbb{M},g^{x}_{v},w\Vdash\phi$.\\
\end{tabular}
\end{center}

\subsection{The first-order correspondence language and the standard translation}\label{Sub:FOL:ST}

In the first-order correspondence language, we have a binary predicate symbol $R$ corresponding to the binary relation, a set of constant symbols $i$ corresponding to each nominal $\nomi$, a set of unary predicate symbols $P$ corresponding to each propositional variable $p$. The state variables $x$ correspond to individual variables $x$ in the first-order language.

\begin{definition}
The standard translation of the expanded hybrid modal language is defined as follows:

\begin{itemize}
\item $ST_{x}(p):=Px$;
\item $ST_{x}(\bot):=\bot$;
\item $ST_{x}(\top):=\top$;
\item $ST_{x}(\nomi):=x=i$;
\item $ST_{x}(y):=x=y$;
\item $ST_{x}(\neg\phi):=\neg ST_{x}(\phi)$;
\item $ST_{x}(\phi\land\psi):=ST_{x}(\phi)\land ST_{x}(\psi)$;
\item $ST_{x}(\phi\lor\psi):=ST_{x}(\phi)\lor ST_{x}(\psi)$;
\item $ST_{x}(\phi\to\psi):=ST_{x}(\phi)\to ST_{x}(\psi)$;
\item $ST_{x}(\Box\phi):=\forall y(Rxy\to ST_{y}(\phi))$ ($y$ does not occur in $\phi$);
\item $ST_{x}(\Diamond\phi):=\exists y(Rxy\land ST_{y}(\phi))$ ($y$ does not occur in $\phi$);
\item $ST_{x}(@_{\nomi}\phi):=\exists y(y=i\land ST_{y}(\phi))$ ($y$ does not occur in $\phi$);
\item $ST_{x}(@_{z}\phi):=\exists y(y=z\land ST_{y}(\phi))$ ($y$ does not occur in $\phi$);
\item $ST_{x}(\downarrow y.\phi):=\exists y(y=x\land ST_{x}(\phi))$;
\item $ST_{x}(\blacksquare\phi):=\forall y(Ryx\to ST_{y}(\phi))$ ($y$ does not occur in $\phi$);
\item $ST_{x}(\Diamondblack\phi):=\exists y(Ryx\land ST_{y}(\phi))$ ($y$ does not occur in $\phi$);
\item $ST_{x}(\mathsf{A}\phi):=\forall y ST_{y}(\phi)$ ($y$ does not occur in $\phi$);
\item $ST_{x}(\mathsf{E}\phi):=\exists y ST_{y}(\phi)$ ($y$ does not occur in $\phi$);
\item $ST_{x}(\forall y\phi):=\forall y ST_{x}(\phi)$;
\item $ST_{x}(\exists y\phi):=\exists y ST_{x}(\phi)$.
\end{itemize}
\end{definition}

It is easy to see that this translation is correct:

\begin{proposition}
For any Kripke model $\mathbb{M}$, any assignment $g$ on $\mathbb{M}$, any $w\in W$ and any expanded hybrid modal formula $\phi$, 
$$\mathbb{M},g,w\Vdash\phi\mbox{ iff }\mathbb{M},g^{x}_{w}\vDash ST_{x}(\phi),$$
where $x$ is a fresh variable not occurring in $\phi$.
\end{proposition}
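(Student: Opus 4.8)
The plan is to prove the equivalence by induction on the structure of the expanded hybrid formula $\phi$, strengthening the statement slightly so that the fresh variable is universally quantified: I would show that for every $\phi$, every variable $x$ not occurring in $\phi$, every model $\mathbb{M}$, every assignment $g$ and every world $w$, we have $\mathbb{M},g,w\Vdash\phi$ iff $\mathbb{M},g^{x}_{w}\vDash ST_{x}(\phi)$. Quantifying over $x$ is essential, because the translations of $\Box,\Diamond,\blacksquare,\Diamondblack,@,\mathsf{A},\mathsf{E}$ all recurse with a newly chosen fresh variable $y$, so the inductive hypothesis must be applied to the immediate subformula with free variable $y$ rather than with $x$.

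Before starting the induction I would record two standard auxiliary facts about the first-order correspondence language: (i) a coincidence lemma, namely that satisfaction of a first-order formula depends only on the values the assignment gives to its free variables; and (ii) that the free variables of $ST_{x}(\phi)$ are contained in $\{x\}$ together with the free state variables of $\phi$ (constants aside). Combined with the freshness hypothesis that $x$ does not occur in $\phi$, these let me replace an assignment such as $(g^{x}_{w})^{y}_{v}$ by $g^{y}_{v}$ whenever it is fed to $ST_{y}(\phi)$, since the two agree off $x$ and $x$ is not free there; this replacement is exactly what makes the clauses line up.

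For the base cases I would unwind the definitions directly: for $p$ the atom $Px$ holds at $g^{x}_{w}$ iff $g^{x}_{w}(x)=w\in V(p)$; for $\nomi$ the clause $x=i$ is matched against $\{w\}=V(\nomi)$; and for a state variable $z$ the equation $x=z$ at $g^{x}_{w}$ gives $w=g(z)$ precisely because $x\neq z$ by freshness. The Boolean cases are immediate from the inductive hypothesis. For the modal cases I would push the first-order quantifier outward: for $\Box\phi$, the relational atom $Rxy$ at $(g^{x}_{w})^{y}_{v}$ reduces to $Rwv$ (using $x\neq y$), the inductive hypothesis applied to $\phi$ with fresh variable $y$ and world $v$ rewrites $ST_{y}(\phi)$ as satisfaction of $\phi$ at $v$, and the coincidence step discards the $x$-binding; the cases of $\Diamond,\blacksquare,\Diamondblack,\mathsf{A},\mathsf{E}$ are analogous, with $\blacksquare,\Diamondblack$ using $Ryx$ and $\mathsf{A},\mathsf{E}$ dropping the relational guard. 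The $@_{\nomi}$ and $@_{z}$ cases additionally use the equality guard to pin the existential witness to the denotation of $\nomi$, respectively to the value $g(z)$.

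The step I expect to be the most delicate is the binder clause $\downarrow y.\phi$, together with the quantifier clauses $\forall y\phi$ and $\exists y\phi$, because here the translation reuses the bound variable $y$ rather than introducing a fresh one. For $\downarrow y.\phi$ I would note that $ST_{x}(\downarrow y.\phi)=\exists y(y=x\land ST_{x}(\phi))$ at $g^{x}_{w}$ forces the witness for $y$ to equal $x$, i.e.\ to be $w$, so the right-hand side reduces to $(g^{x}_{w})^{y}_{w}\vDash ST_{x}(\phi)$; the crucial observation is that $(g^{x}_{w})^{y}_{w}$ and $(g^{y}_{w})^{x}_{w}$ are the \emph{same} assignment, which lets me apply the inductive hypothesis to $\phi$ at world $w$ under the assignment $g^{y}_{w}$ and recover exactly $\mathbb{M},g^{y}_{w},w\Vdash\phi$, the semantic clause for the binder. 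The quantifier cases run the same way, using that $(g^{x}_{w})^{y}_{v}$ and $(g^{y}_{v})^{x}_{w}$ coincide. The argument is routine once these commutations of variable updates and the free-variable and coincidence bookkeeping are set up carefully; that bookkeeping, rather than any single case, is the real content.
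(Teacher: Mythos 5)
Your proof is correct: the paper offers no proof of this proposition at all (it is stated with ``it is easy to see''), and your induction --- strengthened to quantify over the fresh variable, with the coincidence lemma, the free-variable bound on $ST_x(\phi)$, and the key commutation $(g^{x}_{w})^{y}_{v}=(g^{y}_{v})^{x}_{w}$ handling the binder clause $ST_{x}(\downarrow y.\phi)=\exists y(y=x\land ST_{x}(\phi))$ --- is exactly the standard argument the paper implicitly relies on. The only point worth noting is that you (rightly) read the side condition on the freshly introduced variable $y$ in the clauses for $\Box,\Diamond,\blacksquare,\Diamondblack,\mathsf{A},\mathsf{E},@$ as requiring $y$ to be distinct from $x$ as well, which is slightly stronger than the paper's literal ``$y$ does not occur in $\phi$'' but is necessary (e.g.\ $y=x$ would turn $ST_x(\Box\phi)$ into $\forall x(Rxx\to ST_x(\phi))$).
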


For inequalities, quasi-inequalities, mega-inequalities, universally quantified inequalities and quasi-universally quantified inequalities, the standard translation is given in a global way:

\begin{definition}
\begin{itemize}
\item $ST(\phi\leq\psi):=\forall x(ST_{x}(\phi)\to ST_{x}(\psi))$ ($x$ does not occur in $\phi$ and $\psi$);
\item $ST(\phi_1\leq\psi_1\&\ldots\&\phi_n\leq\psi_n\Rightarrow\phi\leq\psi):= ST(\phi_1\leq\psi_1)\land\ldots\land ST(\phi_n\leq\psi_n)\to ST(\phi\leq\psi)$;
\item $ST(\mathsf{Mega}_1\ \&\ \mathsf{Mega}_2)=ST(\mathsf{Mega}_1)\land ST(\mathsf{Mega}_2)$;
\item $ST(\forall x(\mathsf{Mega})):=\forall x ST(\mathsf{Mega})$;
\item $ST(\forall x_1\ldots\forall x_n\mathsf{Ineq}):=\forall x_1\ldots\forall x_n ST(\mathsf{Ineq})$;
\item $ST(\mathsf{UQIneq_1}\&\ldots\&\mathsf{UQIneq_n}\Rightarrow\mathsf{UQIneq}):= ST(\mathsf{UQIneq_1})\land\ldots\land ST(\mathsf{UQIneq_n})\to ST(\mathsf{UQIneq})$.
\end{itemize}
\end{definition}

\begin{proposition}\label{Prop:ST:ineq:quasi:mega}
For any Kripke model $\mathbb{M}$, any assignment $g$ on $\mathbb{M}$, and inequality $\mathsf{Ineq}$, quasi-inequality $\mathsf{Quasi}$, mega-inequality $\mathsf{Mega}$, universally quantified inequality $\mathsf{UQIneq}$, quasi-universally quantified inequality $\mathsf{QUQIneq}$,

$$\mathbb{M},g\Vdash\mathsf{Ineq}\mbox{ iff }\mathbb{M},g\vDash ST(\mathsf{Ineq});$$
$$\mathbb{M},g\Vdash\mathsf{Quasi}\mbox{ iff }\mathbb{M},g\vDash ST(\mathsf{Quasi});$$
$$\mathbb{M},g\Vdash\mathsf{Mega}\mbox{ iff }\mathbb{M},g\vDash ST(\mathsf{Mega});$$
$$\mathbb{M},g\Vdash\mathsf{UQIneq}\mbox{ iff }\mathbb{M},g\vDash ST(\mathsf{UQIneq});$$
$$\mathbb{M},g\Vdash\mathsf{QUQIneq}\mbox{ iff }\mathbb{M},g\vDash ST(\mathsf{QUQIneq}).$$
\end{proposition}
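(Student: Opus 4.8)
The plan is to prove all five equivalences by a single structural induction, taking the formula-level translation result stated in the previous proposition (that $\mathbb{M},g,w\Vdash\phi$ iff $\mathbb{M},g^{x}_{w}\vDash ST_{x}(\phi)$ for fresh $x$) as the base ingredient on which every clause ultimately rests. The cleanness of the whole argument hinges on one structural observation that I would record at the outset: the hybrid assignment $g$ and the first-order variable assignment are literally the same object, since state variables are precisely the first-order individual variables, so the hybrid $x$-variant $g^{x}_{w}$ coincides with the first-order assignment that sends $x$ to $w$.

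First I would treat the inequality $\phi\leq\psi$ as the genuine base case. Unfolding the semantics, $\mathbb{M},g\Vdash\phi\leq\psi$ means that for every $w\in W$, $\mathbb{M},g,w\Vdash\phi$ implies $\mathbb{M},g,w\Vdash\psi$. Applying the formula-level proposition to each of $\phi$ and $\psi$ with the fresh variable $x$ rewrites these local forcing statements as $\mathbb{M},g^{x}_{w}\vDash ST_{x}(\phi)$ and $\mathbb{M},g^{x}_{w}\vDash ST_{x}(\psi)$. Hence the statement ``for all $w$, $ST_{x}(\phi)$ under $g^{x}_{w}$ implies $ST_{x}(\psi)$ under $g^{x}_{w}$'' is, by definition of first-order satisfaction, exactly $\mathbb{M},g\vDash\forall x(ST_{x}(\phi)\to ST_{x}(\psi))$, which is $ST(\phi\leq\psi)$. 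The quasi-inequality and quasi-universally-quantified cases are then immediate: the meta-conjunction $\amp$ and meta-implication $\Rightarrow$ in the antecedent/consequent structure translate clause-by-clause to first-order $\land$ and $\to$, and the semantics of a (quasi-)universally quantified inequality is defined precisely so that it holds iff the consequent holds whenever all antecedents do, matching the truth table of the translated implication.

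For the mega-inequality and the universally quantified inequality I would induct on the build-up given by the grammar. The conjunction clause $\mathsf{Mega}_1\bigamp\mathsf{Mega}_2$ follows from the induction hypothesis together with the fact that $ST$ sends $\bigamp$ to $\land$. The quantifier clause $\forall x(\mathsf{Mega})$ is where the content sits: semantically it holds iff $\mathbb{M},g^{x}_{w}\Vdash\mathsf{Mega}$ for all $w$, while $ST(\forall x\,\mathsf{Mega})=\forall x\,ST(\mathsf{Mega})$ holds iff $\mathbb{M},g^{x}_{w}\vDash ST(\mathsf{Mega})$ for all $w$; these align by the induction hypothesis, again using the agreement of the two notions of $x$-variant. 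The nested universally quantified inequality $\forall x_1\ldots\forall x_n(\phi\leq\psi)$ is handled by iterating this quantifier step down to the already-proved inequality base.

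The main obstacle I anticipate is variable management rather than logical content. The translation of an inequality introduces its own universally quantified ``world variable'' $x$ (chosen fresh for $\phi$ and $\psi$), whereas the mega-inequality and universally quantified grammars also bind genuine state variables with $\forall x$ that do occur inside the embedded formulas, raising the risk of capture between the two roles of $x$. I would therefore fix a disciplined convention in which the world variable used to translate each embedded inequality is chosen disjoint from all the state variables bound by the surrounding quantifiers, and verify at every application that the freshness hypothesis of the formula-level proposition is met. Once this bookkeeping of fresh and bound variables is pinned down, each inductive step reduces to a routine unfolding of the matching semantic and first-order clauses.
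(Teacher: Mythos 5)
Your proof is correct and matches the paper's intent: the paper states this proposition without any explicit proof, treating it as an immediate consequence of the formula-level correctness of the standard translation (the preceding proposition) together with the clause-by-clause match between the semantic definitions and the $ST$ clauses, and your structural induction --- with the inequality case as base, the identification of hybrid and first-order assignments on state variables, and the freshness bookkeeping for the world variable --- is precisely the routine argument being left implicit. Nothing in your proposal diverges from that intended route.
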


\section{Sahlqvist inequalities}\label{Sec:Sahl}

In the present section, since we will use the algorithm $\mathsf{ALBA}^{\downarrow}$ which is based on the classsification of nodes in the signed generation trees of hybrid modal formulas, we will use the unified correspondence style definition (cf.\ \cite{CPSZ,CPZ:Trans,PaSoZh16,Zh21}) to define Sahlqvist inequalities. We will collect all the necessary preliminaries on Sahlqvist inequalities. 

\begin{definition}[Order-type of propositional variables](cf.\ \cite[page 346]{CoPa12})
For an $n$-tuple $(p_1, \ldots, p_n)$ of propositional variables, an order-type $\epsilon$ of $(p_1, \ldots, p_n)$ is an element in $\{1,\partial\}^{n}$. We say that $p_i$ has order-type 1 with respect to $\epsilon$ if $\epsilon_i=1$, and denote $\epsilon(p_i)=1$ or $\epsilon(i)=1$; we say that $p_i$ has order-type $\partial$ with respect to $\epsilon$ if $\epsilon_i=\partial$, and denote $\epsilon(p_i)=\partial$ or $\epsilon(i)=\partial$. We use $\epsilon^{\partial}$ to denote the order-type such that $\epsilon^{\partial}(p_i)=1$ (resp.\ $\epsilon^{\partial}(p_i)=\partial$) iff $\epsilon(p_i)=\partial$ (resp.\ $\epsilon(p_i)=1$).
\end{definition}

\begin{definition}[Signed generation tree]\label{adef: signed gen tree}(cf.\ \cite[Definition 4]{CPZ:Trans})
The \emph{positive} (resp.\ \emph{negative}) {\em generation tree} of any given formula $\phi$ is defined by first labelling the root of the generation tree of $\phi$ with $+$ (resp.\ $-$) and then labelling the children nodes as follows:
\begin{itemize}
\item Assign the same sign to the children nodes of any node labelled with $\lor, \land, \Box$, $\Diamond$, $\downarrow x$, $\blacksquare, \Diamondblack,\mathsf{A},\mathsf{E},\forall x, \exists x$;
\item Assign the opposite sign to the child node of any node labelled with $\neg$;
\item Assign the opposite sign to the first child node and the same sign to the second child node of any node labelled with $\to$;
\item Assign the same sign to the second child node labelled with $@$ (notice that we do not label the first child node with nominal or state variable).
\end{itemize}
Nodes in signed generation trees are \emph{positive} (resp.\ \emph{negative}) if they are signed $+$ (resp.\ $-$).
\end{definition}

\begin{example}

The positive generation tree of $+\Box (p \lor \neg\Diamond q)\to\Box q$ is given in Figure \ref{fig:Church:Rosser}.

\begin{figure}[htb]
\centering
\begin{tikzpicture}
\tikzstyle{level 1}=[level distance=0.8cm, sibling distance=1.5cm]
\tikzstyle{level 2}=[level distance=0.8cm, sibling distance=1.5cm]
\tikzstyle{level 3}=[level distance=0.8cm, sibling distance=1.5cm]
 \node {$+\to$}         
              child{node{$-\Box$}
                     child{node{$-\lor$}
                           child{node{$-p$}}
                            child{node{$-\neg$}
                             child{node{$+\Diamond$}
                                child{node{$+q$}}}}}}
              child{node{$+\Box$}
                        child{node{$+q$}}} 
;
\end{tikzpicture}
\caption{Positive generation tree for $\Box (p \lor \neg\Diamond q)\to\Box q$}
\label{fig:Church:Rosser}
\end{figure}
\end{example}

Signed generation trees will be used in the inequalities $\phi\leq\psi$, where the positive generation tree $+\phi$ and the negative generation tree $-\psi$ will be considered. We will also say that an inequality $\phi\leq\psi$ is \emph{uniform} in a variable $p_i$ if all occurrences of $p_i$ in $+\phi$ and $-\psi$ have the same sign, and that $\phi\leq\psi$ is $\epsilon$-\emph{uniform} in an array $\vec{p}$ if $\phi\leq\psi$ is uniform in $p_i$, occurring with the sign indicated by $\epsilon$ (i.e., $p_i$ has the sign $+$ if $\epsilon(p_i)=1$, and has the sign $-$ if $\epsilon(p_i)=\partial$), for each propositional variable $p_i$ in $\vec{p}$.

For any given formula $\phi(p_1,\ldots p_n)$, any order-type $\epsilon$ over $n$, and any $1 \leq i \leq n$, an \emph{$\epsilon$-critical node} in a signed generation tree of $\phi$ is a leaf node $+p_i$ when $\epsilon_i = 1$ or $-p_i$ when $\epsilon_i = \partial$. An $\epsilon$-{\em critical branch} in a signed generation tree is a branch from an $\epsilon$-critical node. The $\epsilon$-critical occurrences are intended to be those which the algorithm $\mathsf{ALBA}^{\downarrow}$ will solve for. We say that $+\phi$ (resp.\ $-\phi$) {\em agrees with} $\epsilon$, and write $\epsilon(+\phi)$ (resp.\ $\epsilon(-\phi)$), if every leaf node with a propositional variable in the signed generation tree of $+\phi$ (resp.\ $-\phi$) is $\epsilon$-critical.

We will also use the notation $+\psi\prec \ast \phi$ (resp.\ $-\psi\prec \ast \phi$) to indicate that an occurrence (it does not matter which occurrence it is) of a subformula $\psi$ inherits the positive (resp.\ negative) sign from the signed generation tree $\ast \phi$, where $\ast\in\{+,-\}$. We will write $\epsilon(\gamma) \prec \ast \phi$ (resp.\ $\epsilon^\partial(\gamma) \prec \ast \phi$) to indicate that the signed generation subtree $\gamma$, with the sign inherited from $\ast \phi$, agrees with $\epsilon$ (resp.\ with $\epsilon^\partial$). We say that a propositional variable $p$ is \emph{positive} (resp.\ \emph{negative}) in $\phi$ if $+p\prec+\phi$ (resp.\ $-p\prec+\phi$) for all occurrences of $p$ in $\phi$.\label{page:epsilon:subtree}

\begin{definition}\label{adef:good:branches}(cf.\ \cite[Definition 5]{CPZ:Trans})
Nodes in signed generation trees are called \emph{outer nodes} and \emph{inner nodes}, according to Table \ref{aJoin:and:Meet:Friendly:Table}. For the names of outer nodes and inner nodes in the classification, see Example \ref{Example:Sahlqvist}.

A branch in a signed generation tree is called a \emph{excellent branch} if it is the concatenation of two paths $P_1$ and $P_2$, one of which might be of length $0$, such that $P_1$ is a path from the leaf consisting (apart from variable nodes) of inner nodes only, and $P_2$ consists (apart from variable nodes) of outer nodes only.
\begin{table}
\begin{center}
\begin{tabular}{| c | c |}
\hline
Outer &Inner\\
\hline
\begin{tabular}{c c c c c c c c c c }
$+$ & $\vee$ & $\wedge$ &$\Diamond$ & $\neg$ & $\downarrow x$ & @\\
$-$ & $\wedge$ & $\vee$ &$\Box$ & $\neg$ & $\downarrow x$ & @ & $\to$\\
\end{tabular}
&
\begin{tabular}{c c c c c c}
$+$ &$\wedge$ &$\Box$ & $\neg$ & $\downarrow x$ & @\\
$-$ &$\vee$ &$\Diamond$ & $\neg$ & $\downarrow x$ & @\\
\end{tabular}\\
\hline
\end{tabular}
\end{center}
\caption{Outer and Inner nodes.}\label{aJoin:and:Meet:Friendly:Table}
\vspace{-1em}
\end{table}
\end{definition}

\begin{definition}[Sahlqvist inequalities]\label{aInducive:Ineq:Def}(cf.\ \cite[Definition 6]{CPZ:Trans})
For any order-type $\epsilon$, the signed generation tree $*\phi$ (where $*\in\{+,-\}$) of a formula $\phi(p_1,\ldots p_n)$ is \emph{$\epsilon$-Sahlqvist} if for all $1 \leq i \leq n$, every $\epsilon$-critical branch with leaf $p_i$ is excellent. An inequality $\phi\leq\psi$ is \emph{$\epsilon$-Sahlqvist} if the signed generation trees $+\phi$ and $-\psi$ are $\epsilon$-Sahlqvist. An inequality $\phi\leq\psi$ is \emph{Sahlqvist} if it is \emph{$\epsilon$}-Sahlqvist for some $\epsilon$.
\end{definition}

The classification of outer nodes and inner nodes is based on how different connectives behave in the algorithm. When the input inequality is a Sahlqvist inequality, the algorithm first decompose the outer part of the formula, and then decompose the inner part of the formula, which will be shown in Example \ref{Example:ALBA:execution}.

\begin{example}\label{Example:Sahlqvist}
Here we give an example of a Sahlqvist inequality for the order-type $\epsilon=(1,1)$, where the outer nodes are \textcolor{red}{red} and the inner nodes are \textcolor{green}{green}, and $\epsilon$-critical branches are ended with \textcolor{blue}{blue} leaf nodes. It is clear that the branch from $+p_2$ to $+\downarrow x$ and the branch from $+p_1$ to $+\downarrow x$ are both $\epsilon$-critical and excellent. This example also shows why the outer nodes and inner nodes are named as they are.

\begin{figure}[htb]
\centering
\begin{multicols}{3}
\begin{tikzpicture}
\tikzstyle{level 1}=[level distance=0.8cm, sibling distance=2cm]
\tikzstyle{level 2}=[level distance=0.8cm, sibling distance=2cm]
\tikzstyle{level 3}=[level distance=0.8cm, sibling distance=1.5cm]
 \node {\textcolor{red}{$+\downarrow x$}} 
              child{node{\textcolor{red}{$+\land$}}
              child{node{\textcolor{red}{$+\Diamond$}}
                     child{node{\textcolor{green}{$+\Box$}}
                           child{node{\textcolor{blue}{$+p_1$}}}}}
              child{node{\textcolor{green}{$+\Box$}}
                        child{node{\textcolor{green}{$+\land$}}
                        child{node{\textcolor{green}{$+@$}}
                        child{node{$\nomi$}}
                        child{node{$+x$}}}
                        child{node{\textcolor{green}{$+\Box$}}
                        child{node{\textcolor{blue}{$+p_2$}}}}}}} 
;
\end{tikzpicture}

\columnbreak

$\leq$

\columnbreak

\begin{tikzpicture}
\tikzstyle{level 1}=[level distance=1cm, sibling distance=1cm]
\tikzstyle{level 2}=[level distance=1cm, sibling distance=1cm]
\tikzstyle{level 3}=[level distance=1cm, sibling distance=1cm]
 \node {\textcolor{red}{$-\lor$}}         
                     child{node{\textcolor{green}{$-\Diamond$}}
                           child{node{$-\Box$}
                                 child{node{$-\Diamond$}
                                       child{node{$-p_1$}}}}}
                     child{node{\textcolor{green}{$-\Diamond$}}
                           child{node{$-\Box$}
                                 child{node{$-\Diamond$}
                                       child{node{$-p_2$}}}}}
;
\end{tikzpicture}
\end{multicols}
\caption{(1,1)-Sahlqvist inequality $\downarrow x.(\Diamond\Box p_1\land\Box(@_\nomi x\land\Box p_2))\leq \Diamond\Box\Diamond p_1\lor\Diamond\Box\Diamond p_2$}
\label{fig:Sahlqvist}
\end{figure}
\end{example}

As we can see from the example, the major difference of Sahlqvist formula here from the ones in \cite{ConRob} and \cite{tCMaVi06} is mainly that here we have downarrow binders as a part of the Sahlqvist structure.

\section{The algorithm $\mathsf{ALBA}^{\downarrow}$}\label{Sec:ALBA}

In the present section, we define the correspondence algorithm $\mathsf{ALBA}^{\downarrow}$ for hybrid logic with binder, in the style of \cite{CoGoVa06,CoPa12,Zh21}. The algorithm goes in three steps. 

\begin{enumerate}

\item \textbf{Preprocessing and first approximation}:

In the generation tree of $+\phi$ and $-\psi$\footnote{The discussion below relies on the definition of signed generation tree in Section \ref{Sec:Sahl}. In what follows, we identify a formula with its signed generation tree.},

\begin{enumerate}
\item Apply the distribution rules:

\begin{enumerate}
\item Push down $+\Diamond, -\neg, +\land, +\downarrow x, +@_{\nomi}, +@_{x}, -\to$ by distributing them over nodes labelled with $+\lor$ which are outer nodes (see Figure \ref{Figure:distribution:rules}; notice that here we treat $@_{\nomi}$ and $@_{x}$ as if they are unary modality with only the right branch as the input), and

\item Push down $-\Box,+\neg, -\lor, -\downarrow x, -@_{\nomi}, -@_{x}, -\to$ by distributing them over nodes labelled with $-\land$ which are outer nodes (see Figure \ref{Figure:distribution:rules:2}).

\end{enumerate}

\begin{figure}[htb]
\centering
\begin{multicols}{8}
\begin{tikzpicture}[scale=0.7]
\tikzstyle{level 1}=[level distance=1cm, sibling distance=1cm]
\tikzstyle{level 2}=[level distance=1cm, sibling distance=1cm]
\tikzstyle{level 3}=[level distance=1cm, sibling distance=1cm]
 \node {$+\Diamond$}         
              child{node{$+\lor$}
                     child{node{$+\alpha$}}
                           child{node{$+\beta$}}}
;
\end{tikzpicture}
\columnbreak

$\Rightarrow$
\columnbreak

\begin{tikzpicture}[scale=0.7]
\tikzstyle{level 1}=[level distance=1cm, sibling distance=1cm]
\tikzstyle{level 2}=[level distance=1cm, sibling distance=1cm]
\tikzstyle{level 3}=[level distance=1cm, sibling distance=1cm]
 \node {$+\lor$}
              child{node{$+\Diamond$}
                     child{node{$+\alpha$}}}
              child{node{$+\Diamond$}
                           child{node{$+\beta$}}}
 ;
\end{tikzpicture}

\columnbreak

$\ $
\columnbreak

$\ $
\columnbreak

\begin{tikzpicture}[scale=0.7]
\tikzstyle{level 1}=[level distance=1cm, sibling distance=1cm]
\tikzstyle{level 2}=[level distance=1cm, sibling distance=1cm]
\tikzstyle{level 3}=[level distance=1cm, sibling distance=1cm]
 \node {$-\neg$}         
              child{node{$+\lor$}
                     child{node{$+\alpha$}}
                           child{node{$+\beta$}}}
 ;
\end{tikzpicture}

\columnbreak

$\Rightarrow$
\columnbreak

\begin{tikzpicture}[scale=0.7]
\tikzstyle{level 1}=[level distance=1cm, sibling distance=1cm]
\tikzstyle{level 2}=[level distance=1cm, sibling distance=1cm]
\tikzstyle{level 3}=[level distance=1cm, sibling distance=1cm]
 \node {$-\land$}
              child{node{$-\neg$}
                     child{node{$+\alpha$}}}
              child{node{$-\neg$}
                           child{node{$+\beta$}}}
 ;
\end{tikzpicture}
\end{multicols}

\centering
\begin{multicols}{8}
\begin{tikzpicture}[scale=0.7]
\tikzstyle{level 1}=[level distance=1cm, sibling distance=1cm]
\tikzstyle{level 2}=[level distance=1cm, sibling distance=1cm]
\tikzstyle{level 3}=[level distance=1cm, sibling distance=1cm]
 \node {$+\land$}         
              child{node{$+\lor$}
                     child{node{$+\alpha$}}
                           child{node{$+\beta$}}}
              child{node{+$\gamma$}}
;
\end{tikzpicture}
\columnbreak

$\ \ \ \ \ \ \ \ \Rightarrow$
\columnbreak

\begin{tikzpicture}[scale=0.7]
\tikzstyle{level 1}=[level distance=1cm, sibling distance=2cm]
\tikzstyle{level 2}=[level distance=1cm, sibling distance=1cm]
\tikzstyle{level 3}=[level distance=1cm, sibling distance=1cm]
 \node {$+\lor$}
              child{node{$+\land$}
                     child{node{$+\alpha$}}
                     child{node{$+\gamma$}}}
              child{node{$+\land$}
                           child{node{$+\beta$}}
                           child{node{$+\gamma$}}}
 ;
\end{tikzpicture}

\columnbreak

$\ $
\columnbreak

$\ $
\columnbreak

\begin{tikzpicture}[scale=0.7]
\tikzstyle{level 1}=[level distance=1cm, sibling distance=1cm]
\tikzstyle{level 2}=[level distance=1cm, sibling distance=1cm]
\tikzstyle{level 3}=[level distance=1cm, sibling distance=1cm]
 \node {$+\land$}         
              child{node{+$\alpha$}}
              child{node{$+\lor$}
                     child{node{$+\beta$}}
                           child{node{$+\gamma$}}}
 ;
\end{tikzpicture}

\columnbreak

$\ \ \ \ \ \ \ \ \Rightarrow$
\columnbreak

\begin{tikzpicture}[scale=0.7]
\tikzstyle{level 1}=[level distance=1cm, sibling distance=2cm]
\tikzstyle{level 2}=[level distance=1cm, sibling distance=1cm]
\tikzstyle{level 3}=[level distance=1cm, sibling distance=1cm]
 \node {$+\lor$}
              child{node{$+\land$}
                     child{node{$+\alpha$}}
                     child{node{$+\beta$}}}
              child{node{$+\land$}
                           child{node{$+\alpha$}}
                           child{node{$+\gamma$}}}
 ;
\end{tikzpicture}
\end{multicols}

\centering
\begin{multicols}{8}
\begin{tikzpicture}[scale=0.7]
\tikzstyle{level 1}=[level distance=1cm, sibling distance=1cm]
\tikzstyle{level 2}=[level distance=1cm, sibling distance=1cm]
\tikzstyle{level 3}=[level distance=1cm, sibling distance=1cm]
 \node {$+\downarrow x$}         
              child{node{$+\lor$}
                     child{node{$+\alpha$}}
                           child{node{$+\beta$}}}
;
\end{tikzpicture}
\columnbreak

$\Rightarrow$
\columnbreak

\begin{tikzpicture}[scale=0.7]
\tikzstyle{level 1}=[level distance=1cm, sibling distance=1.5cm]
\tikzstyle{level 2}=[level distance=1cm, sibling distance=1.5cm]
\tikzstyle{level 3}=[level distance=1cm, sibling distance=1.5cm]
 \node {$+\lor$}
              child{node{$+\downarrow x$}
                     child{node{$+\alpha$}}}
              child{node{$+\downarrow x$}
                           child{node{$+\beta$}}}
 ;
\end{tikzpicture}

\columnbreak

$\ $
\columnbreak

$\ $
\columnbreak

\begin{tikzpicture}[scale=0.7]
\tikzstyle{level 1}=[level distance=1cm, sibling distance=1cm]
\tikzstyle{level 2}=[level distance=1cm, sibling distance=1cm]
\tikzstyle{level 3}=[level distance=1cm, sibling distance=1cm]
 \node {$+@_{\nomi}$}         
              child{node{$+\lor$}
                     child{node{$+\alpha$}}
                           child{node{$+\beta$}}}
;
\end{tikzpicture}
\columnbreak

$\Rightarrow$
\columnbreak

\begin{tikzpicture}[scale=0.7]
\tikzstyle{level 1}=[level distance=1cm, sibling distance=1cm]
\tikzstyle{level 2}=[level distance=1cm, sibling distance=1cm]
\tikzstyle{level 3}=[level distance=1cm, sibling distance=1cm]
 \node {$+\lor$}
              child{node{$+@_{\nomi}$}
                     child{node{$+\alpha$}}}
              child{node{$+@_{\nomi}$}
                           child{node{$+\beta$}}}
 ;
\end{tikzpicture}
\end{multicols}

\centering
\begin{multicols}{8}
\begin{tikzpicture}[scale=0.7]
\tikzstyle{level 1}=[level distance=1cm, sibling distance=1cm]
\tikzstyle{level 2}=[level distance=1cm, sibling distance=1cm]
\tikzstyle{level 3}=[level distance=1cm, sibling distance=1cm]
 \node {$+@_{x}$}         
              child{node{$+\lor$}
                     child{node{$+\alpha$}}
                           child{node{$+\beta$}}}
;
\end{tikzpicture}
\columnbreak

$\Rightarrow$
\columnbreak

\begin{tikzpicture}[scale=0.7]
\tikzstyle{level 1}=[level distance=1cm, sibling distance=1cm]
\tikzstyle{level 2}=[level distance=1cm, sibling distance=1cm]
\tikzstyle{level 3}=[level distance=1cm, sibling distance=1cm]
 \node {$+\lor$}
              child{node{$+@_{x}$}
                     child{node{$+\alpha$}}}
              child{node{$+@_{x}$}
                           child{node{$+\beta$}}}
 ;
\end{tikzpicture}

\columnbreak

$\ $
\columnbreak

$\ $
\columnbreak

\begin{tikzpicture}[scale=0.7]
\tikzstyle{level 1}=[level distance=1cm, sibling distance=1cm]
\tikzstyle{level 2}=[level distance=1cm, sibling distance=1cm]
\tikzstyle{level 3}=[level distance=1cm, sibling distance=1cm]
 \node {$-\to$}         
              child{node{$+\lor$}
                     child{node{$+\alpha$}}
                           child{node{$+\beta$}}}
              child{node{$-\gamma$}}
;
\end{tikzpicture}
\columnbreak

$\ \ \ \ \ \ \ \ \Rightarrow$
\columnbreak

\begin{tikzpicture}[scale=0.7]
\tikzstyle{level 1}=[level distance=1cm, sibling distance=2cm]
\tikzstyle{level 2}=[level distance=1cm, sibling distance=1cm]
\tikzstyle{level 3}=[level distance=1cm, sibling distance=1cm]
 \node {$-\land$}
              child{node{$-\to$}
                     child{node{$+\alpha$}}
                     child{node{$-\gamma$}}}
              child{node{$-\to$}
                           child{node{$+\beta$}}
                           child{node{$-\gamma$}}}
 ;
\end{tikzpicture}
\end{multicols}
\caption{Distribution rules for $+\lor$}
\label{Figure:distribution:rules}
\end{figure}

\begin{figure}[htb]

\centering
\begin{multicols}{8}
\begin{tikzpicture}[scale=0.7]
\tikzstyle{level 1}=[level distance=1cm, sibling distance=1cm]
\tikzstyle{level 2}=[level distance=1cm, sibling distance=1cm]
\tikzstyle{level 3}=[level distance=1cm, sibling distance=1cm]
 \node {$-\Box$}         
              child{node{$-\land$}
                     child{node{$-\alpha$}}
                           child{node{$-\beta$}}}
 ;
\end{tikzpicture}

\columnbreak

$\ \ \ \ \ \ \ \ \Rightarrow$
\columnbreak

\begin{tikzpicture}[scale=0.7]
\tikzstyle{level 1}=[level distance=1cm, sibling distance=1cm]
\tikzstyle{level 2}=[level distance=1cm, sibling distance=1cm]
\tikzstyle{level 3}=[level distance=1cm, sibling distance=1cm]
 \node {$-\land$}
              child{node{$-\Box$}
                     child{node{$-\alpha$}}}
              child{node{$-\Box$}
                           child{node{$-\beta$}}}
 ;
\end{tikzpicture}

\columnbreak

$\ $
\columnbreak

$\ $
\columnbreak

\begin{tikzpicture}[scale=0.7]
\tikzstyle{level 1}=[level distance=1cm, sibling distance=1cm]
\tikzstyle{level 2}=[level distance=1cm, sibling distance=1cm]
\tikzstyle{level 3}=[level distance=1cm, sibling distance=1cm]
 \node {$+\neg$}         
              child{node{$-\land$}
                     child{node{$-\alpha$}}
                           child{node{$-\beta$}}}
;
\end{tikzpicture}
\columnbreak

$\Rightarrow$
\columnbreak

\begin{tikzpicture}[scale=0.7]
\tikzstyle{level 1}=[level distance=1cm, sibling distance=1cm]
\tikzstyle{level 2}=[level distance=1cm, sibling distance=1cm]
\tikzstyle{level 3}=[level distance=1cm, sibling distance=1cm]
 \node {$+\lor$}
              child{node{$+\neg$}
                     child{node{$-\alpha$}}}
              child{node{$+\neg$}
                           child{node{$-\beta$}}}
 ;
\end{tikzpicture}
\end{multicols}

\centering
\begin{multicols}{8}
\begin{tikzpicture}[scale=0.7]
\tikzstyle{level 1}=[level distance=1cm, sibling distance=1cm]
\tikzstyle{level 2}=[level distance=1cm, sibling distance=1cm]
\tikzstyle{level 3}=[level distance=1cm, sibling distance=1cm]
 \node {$-\lor$}         
              child{node{$-\land$}
                     child{node{$-\alpha$}}
                           child{node{$-\beta$}}}
              child{node{$-\gamma$}}
 ;
\end{tikzpicture}

\columnbreak

$\Rightarrow$
\columnbreak

\begin{tikzpicture}[scale=0.7]
\tikzstyle{level 1}=[level distance=1cm, sibling distance=2cm]
\tikzstyle{level 2}=[level distance=1cm, sibling distance=1cm]
\tikzstyle{level 3}=[level distance=1cm, sibling distance=1cm]
 \node {$-\land$}
              child{node{$-\lor$}
                     child{node{$-\alpha$}}
                     child{node{$-\gamma$}}}
              child{node{$-\lor$}
                           child{node{$-\beta$}}
                           child{node{$-\gamma$}}}
 ;
\end{tikzpicture}

\columnbreak

$\ $
\columnbreak

$\ $
\columnbreak

\begin{tikzpicture}[scale=0.7]
\tikzstyle{level 1}=[level distance=1cm, sibling distance=1cm]
\tikzstyle{level 2}=[level distance=1cm, sibling distance=1cm]
\tikzstyle{level 3}=[level distance=1cm, sibling distance=1cm]
 \node {$-\lor$}         
              child{node{$-\alpha$}}
              child{node{$-\land$}
                     child{node{$-\beta$}}
                           child{node{$-\gamma$}}}
;
\end{tikzpicture}
\columnbreak

$\Rightarrow$
\columnbreak

\begin{tikzpicture}[scale=0.7]
\tikzstyle{level 1}=[level distance=1cm, sibling distance=2cm]
\tikzstyle{level 2}=[level distance=1cm, sibling distance=1cm]
\tikzstyle{level 3}=[level distance=1cm, sibling distance=1cm]
 \node {$-\land$}
              child{node{$-\lor$}
                     child{node{$-\alpha$}}
                     child{node{$-\beta$}}}
              child{node{$-\lor$}
                           child{node{$-\alpha$}}
                           child{node{$-\gamma$}}}
 ;
\end{tikzpicture}
\end{multicols}

\centering
\begin{multicols}{8}
\begin{tikzpicture}[scale=0.7]
\tikzstyle{level 1}=[level distance=1cm, sibling distance=1cm]
\tikzstyle{level 2}=[level distance=1cm, sibling distance=1cm]
\tikzstyle{level 3}=[level distance=1cm, sibling distance=1cm]
 \node {$-\downarrow x$}         
              child{node{$-\land$}
                     child{node{$-\alpha$}}
                           child{node{$-\beta$}}}
 ;
\end{tikzpicture}

\columnbreak

$\ \ \ \ \ \ \ \ \Rightarrow$
\columnbreak

\begin{tikzpicture}[scale=0.7]
\tikzstyle{level 1}=[level distance=1cm, sibling distance=1.5cm]
\tikzstyle{level 2}=[level distance=1cm, sibling distance=1.5cm]
\tikzstyle{level 3}=[level distance=1cm, sibling distance=1.5cm]
 \node {$-\land$}
              child{node{$-\downarrow x$}
                     child{node{$-\alpha$}}}
              child{node{$-\downarrow x$}
                           child{node{$-\beta$}}}
 ;
\end{tikzpicture}

\columnbreak

$\ $
\columnbreak

$\ $
\columnbreak

\begin{tikzpicture}[scale=0.7]
\tikzstyle{level 1}=[level distance=1cm, sibling distance=1cm]
\tikzstyle{level 2}=[level distance=1cm, sibling distance=1cm]
\tikzstyle{level 3}=[level distance=1cm, sibling distance=1cm]
 \node {$-@_{\nomi}$}         
              child{node{$-\land$}
                     child{node{$-\alpha$}}
                           child{node{$-\beta$}}}
 ;
\end{tikzpicture}

\columnbreak

$\ \ \ \ \ \ \ \ \Rightarrow$
\columnbreak

\begin{tikzpicture}[scale=0.7]
\tikzstyle{level 1}=[level distance=1cm, sibling distance=1cm]
\tikzstyle{level 2}=[level distance=1cm, sibling distance=1cm]
\tikzstyle{level 3}=[level distance=1cm, sibling distance=1cm]
 \node {$-\land$}
              child{node{$-@_{\nomi}$}
                     child{node{$-\alpha$}}}
              child{node{$-@_{\nomi}$}
                           child{node{$-\beta$}}}
 ;
\end{tikzpicture}\end{multicols}

\centering
\begin{multicols}{8}
\begin{tikzpicture}[scale=0.7]
\tikzstyle{level 1}=[level distance=1cm, sibling distance=1cm]
\tikzstyle{level 2}=[level distance=1cm, sibling distance=1cm]
\tikzstyle{level 3}=[level distance=1cm, sibling distance=1cm]
 \node {$-@_{x}$}         
              child{node{$-\land$}
                     child{node{$-\alpha$}}
                           child{node{$-\beta$}}}
 ;
\end{tikzpicture}

\columnbreak

$\ \ \ \ \ \ \ \ \Rightarrow$
\columnbreak

\begin{tikzpicture}[scale=0.7]
\tikzstyle{level 1}=[level distance=1cm, sibling distance=1cm]
\tikzstyle{level 2}=[level distance=1cm, sibling distance=1cm]
\tikzstyle{level 3}=[level distance=1cm, sibling distance=1cm]
 \node {$-\land$}
              child{node{$-@_{x}$}
                     child{node{$-\alpha$}}}
              child{node{$-@_{x}$}
                           child{node{$-\beta$}}}
 ;
\end{tikzpicture}

\columnbreak

$\ $
\columnbreak

$\ $
\columnbreak

\begin{tikzpicture}[scale=0.7]
\tikzstyle{level 1}=[level distance=1cm, sibling distance=1cm]
\tikzstyle{level 2}=[level distance=1cm, sibling distance=1cm]
\tikzstyle{level 3}=[level distance=1cm, sibling distance=1cm]
 \node {$-\to$}         
              child{node{+$\alpha$}}
              child{node{$-\land$}
                     child{node{$-\beta$}}
                           child{node{$-\gamma$}}}
 ;
\end{tikzpicture}

\columnbreak

$\Rightarrow$
\columnbreak

\begin{tikzpicture}[scale=0.7]
\tikzstyle{level 1}=[level distance=1cm, sibling distance=2cm]
\tikzstyle{level 2}=[level distance=1cm, sibling distance=1cm]
\tikzstyle{level 3}=[level distance=1cm, sibling distance=1cm]
 \node {$-\land$}
              child{node{$-\to$}
                     child{node{$+\alpha$}}
                     child{node{$-\beta$}}}
              child{node{$-\to$}
                           child{node{$+\alpha$}}
                           child{node{$-\gamma$}}}
 ;
\end{tikzpicture}
\end{multicols}
\caption{Distribution rules for $-\land$}
\label{Figure:distribution:rules:2}
\end{figure}

\item Apply the splitting rules:

$$\infer{\alpha\leq\beta\ \ \ \alpha\leq\gamma}{\alpha\leq\beta\land\gamma}
\qquad
\infer{\alpha\leq\gamma\ \ \ \beta\leq\gamma}{\alpha\lor\beta\leq\gamma}
$$

\item Apply the monotone and antitone variable-elimination rules\footnote{Here the monotone and antitone variable elimination rules eliminate propositional variables $p$ such that the inequality is semantically monotone or antitone with respect to $p$.}:

$$\infer{\alpha(\perp)\leq\beta(\perp)}{\alpha(p)\leq\beta(p)}
\qquad
\infer{\beta(\top)\leq\alpha(\top)}{\beta(p)\leq\alpha(p)}
$$

for $\beta(p)$ positive in $p$ and $\alpha(p)$ negative in $p$.

\end{enumerate}

We denote by $\mathsf{Preprocess}(\phi\leq\psi)$ the finite set $\{\phi_i\leq\psi_i\}_{i\in I}$ of inequalities obtained after the exhaustive application of the previous rules. Then we apply the following first approximation rule to every inequality in $\mathsf{Preprocess}(\phi\leq\psi)$:

$$\infer{\nomi_0\leq\phi_i\ \ \ \psi_i\leq \neg\nomi_1}{\phi_i\leq\psi_i}
$$

Here, $\nomi_0$ and $\nomi_1$ are special fresh nominals. Now we get a set of inequalities $\{\nomi_0\leq\phi_i, \psi_i\leq \neg\nomi_1\}_{i\in I}$. We call the set $\{\nomi_0\leq\phi_i, \psi_i\leq \neg\nomi_1\}$ a \emph{system}.

\item \textbf{The reduction stage}:

In this stage, for each $\{\nomi_0\leq\phi_i, \psi_i\leq \neg\nomi_1\}$, we apply the following rules to prepare for eliminating all the proposition variables in $\{\nomi_0\leq\phi_i, \psi_i\leq\neg\nomi_1\}$:

\begin{enumerate}

\item \textbf{Substage 1: Decomposing the outer part}

In the current substage, the following rules are applied to decompose the outer part of the Sahlqvist signed formula:

\begin{enumerate}

\item Splitting rules:

$$
\infer{\alpha\leq\beta\ \ \ \alpha\leq\gamma}{\alpha\leq\beta\land\gamma}
\qquad
\infer{\alpha\leq\gamma\ \ \ \beta\leq\gamma}{\alpha\lor\beta\leq\gamma}
$$

\item Approximation rules:
$$
\infer{\nomj\leq\alpha\ \ \ \nomi\leq\Diamond\nomj}{\nomi\leq\Diamond\alpha}
\qquad
\infer{\nomj\leq\alpha\ \ \ x\leq\Diamond\nomj}{x\leq\Diamond\alpha}
$$

$$
\infer{\alpha\leq\neg\nomj\ \ \ \Box\neg\nomj\leq\neg\nomi}{\Box\alpha\leq\neg\nomi}
\qquad
\infer{\alpha\leq\neg\nomj\ \ \ \Box\neg\nomj\leq\neg x}{\Box\alpha\leq\neg x}
$$

$$
\infer{\nomj\leq\alpha}{\nomi\leq @_{\nomj}\alpha}
\qquad
\infer{\nomj\leq\alpha}{x\leq @_{\nomj}\alpha}
$$

$$
\infer{\alpha\leq\neg\nomj}{@_{\nomj}\alpha\leq\neg\nomi}
\qquad
\infer{\alpha\leq\neg\nomj}{@_{\nomj}\alpha\leq\neg x}
$$

$$
\infer{x\leq\alpha}{\nomi\leq @_{x}\alpha}
\qquad
\infer{x\leq\alpha}{y\leq @_{x}\alpha}
$$

$$
\infer{\alpha\leq\neg x}{@_{x}\alpha\leq\neg\nomi}
\qquad
\infer{\alpha\leq\neg x}{@_{x}\alpha\leq\neg y}
$$

$$
\infer{\nomi\leq\alpha[\nomi/x]}{\nomi\leq \downarrow x.\alpha}
\qquad
\infer{y\leq\alpha[y/x]}{y\leq \downarrow x.\alpha}
$$

$$
\infer{\alpha[\nomi/x]\leq\neg\nomi}{\downarrow x.\alpha\leq\neg\nomi}
\qquad
\infer{\alpha[y/x]\leq\neg y}{\downarrow x.\alpha\leq\neg y}
$$

$$
\infer{\nomj\leq\alpha\ \ \ \ \ \ \ \beta\leq\neg\nomk\ \ \ \ \ \ \ \nomj\rightarrow\neg\nomk\leq\neg\nomi}{\alpha\rightarrow\beta\leq\neg\nomi}
$$

$$\infer{\nomj\leq\alpha\ \ \ \ \ \ \ \beta\leq\neg\nomk\ \ \ \ \ \ \ \nomj\rightarrow\neg\nomk\leq\neg x}{\alpha\rightarrow\beta\leq\neg x}
$$

The nominals introduced by the approximation rules must not occur in the system before applying the rule, and $\alpha[\nomi/x]$ (resp.\ $\alpha[y/x]$) indicates that all occurrences of $x$ in $\alpha$ are replaced by $\nomi$ (resp.\ $y$).

\item Residuation rules:

$$
\infer{\alpha\leq\neg\nomi}{\nomi\leq\neg\alpha}
\qquad
\infer{\nomi\leq\alpha}{\neg\alpha\leq\neg\nomi}
\qquad
\infer{\alpha\leq\neg x}{x\leq\neg\alpha}
\qquad
\infer{x\leq\alpha}{\neg\alpha\leq\neg x}
$$

\end{enumerate}

\item \textbf{Substage 2: Decomposing the inner part}

In the current substage, the following rules are applied to decompose the inner part of the Sahlqvist signed formula. We allow these rules to be applied in the scope of universal quantifiers, as we will show in Example \ref{Example:ALBA:execution:2}:

\begin{enumerate}

\item Splitting rules:

$$\infer{\alpha\leq\beta\ \ \ \alpha\leq\gamma}{\alpha\leq\beta\land\gamma}
\qquad
\infer{\alpha\leq\gamma\ \ \ \beta\leq\gamma}{\alpha\lor\beta\leq\gamma}
$$

\item Residuation rules:

$$\infer{\beta\leq\neg\alpha}{\alpha\leq\neg\beta}
\qquad
\infer{\neg\beta\leq\alpha}{\neg\alpha\leq\beta}
\qquad
\infer{\alpha\leq\blacksquare\beta}{\Diamond\alpha\leq\beta}
\qquad
\infer{\Diamondblack\alpha\leq\beta}{\alpha\leq\Box\beta}
$$

$$\infer{\mathsf{E}\alpha\land\nomj\leq\beta}{\alpha\leq @_{\nomj}\beta}
\qquad
\infer{\beta\leq\nomj\to\mathsf{A}\alpha}{@_{\nomj}\beta\leq\alpha}
\qquad
\infer{\mathsf{E}\alpha\land x\leq\beta}{\alpha\leq @_{x}\beta}
\qquad
\infer{\beta\leq x\to\mathsf{A}\alpha}{@_{x}\beta\leq\alpha}
$$

$$\infer{\forall y(\mathsf{A}(y\to\alpha)\land y\leq\beta[y/x])}{\alpha\leq\downarrow x.\beta}
\qquad
\infer{\forall y(\beta[y/x]\leq y\to\mathsf{E}(y\land\alpha))}{\downarrow x.\beta\leq\alpha}
$$

The state variables introduced by the residuation rules must not occur in the system before applying the rule.

\item Second splitting rule:
$$\infer{\forall x(\mathsf{Mega_1})\ \ \ \forall x(\mathsf{Mega_2})}
{\forall x(\mathsf{Mega_1} \bigamp\mathsf{Mega_2})}$$

Here $\mathsf{Mega_1}$ and $\mathsf{Mega_2}$ denote mega-inequalities.

\end{enumerate}

\item \textbf{Substage 3: Preparing for the Ackermann rules}

In this substage, we prepare for eliminating the propositional variables by the Ackermann rules, with the help of the following packing rules. Here we also allow this rule to be applied inside the scope of universal quantifiers.\\

Packing rules:

$$\infer{(\exists x\alpha)\leq\beta}{\forall x(\alpha\leq\beta)}
\qquad
\infer{\beta\leq(\forall x\alpha)}{\forall x(\beta\leq\alpha)}
$$

where $\beta$ does not contain occurrences of $x$.

\item \textbf{Substage 4: The Ackermann stage}

In this substage, we compute the minimal/maximal valuation for propositional variables and use the Ackermann rules to eliminate all the propositional variables.  These two rules are the core of $\mathsf{ALBA}$, since their application eliminates proposition variables. In fact, all the preceding steps are aimed at reaching a shape in which the rules can be applied. Notice that an important feature of these rules is that they are executed on the whole set of (universally quantified) inequalities, and not on a single inequality.\\

The right-handed Ackermann rule:

The system 
$\left\{ \begin{array}{ll}
\alpha_1\leq p \\
\vdots\\
\alpha_n\leq p \\
\forall{\vec x_1}(\beta_1\leq\gamma_1)\\
\vdots\\
\forall{\vec x_m}(\beta_m\leq\gamma_m)\\

\end{array} \right.$ 

is replaced by 
$\left\{ \begin{array}{ll}
\forall{\vec x_1}(\beta_1[(\alpha_1\lor\ldots\lor\alpha_n)/p]\leq\gamma_1[(\alpha_1\lor\ldots\lor\alpha_n)/p]) \\
\vdots\\
\forall{\vec x_m}(\beta_m[(\alpha_1\lor\ldots\lor\alpha_n)/p]\leq\gamma_m[(\alpha_1\lor\ldots\lor\alpha_n)/p]) \\

\end{array} \right.$

where:

\begin{enumerate}
\item $p, {\vec x_1}, \ldots, {\vec x_m}$ do not occur in $\alpha_1, \ldots, \alpha_n$;
\item Each $\beta_i$ is positive, and each $\gamma_i$ negative in $p$, for $1\leq i\leq m$;
\item Each $\alpha_i$ is pure.
\end{enumerate}

The left-handed Ackermann rule:

The system
$\left\{ \begin{array}{ll}
p\leq\alpha_1 \\
\vdots\\
p\leq\alpha_n \\
\forall{\vec x_1}(\beta_1\leq\gamma_1)\\
\vdots\\
\forall{\vec x_m}(\beta_m\leq\gamma_m)\\

\end{array} \right.$

is replaced by
$\left\{ \begin{array}{ll}
\forall{\vec x_1}(\beta_1[(\alpha_1\lor\ldots\lor\alpha_n)/p]\leq\gamma_1[(\alpha_1\lor\ldots\lor\alpha_n)/p]) \\
\vdots\\
\forall{\vec x_m}(\beta_m[(\alpha_1\lor\ldots\lor\alpha_n)/p]\leq\gamma_m[(\alpha_1\lor\ldots\lor\alpha_n)/p]) \\

\end{array} \right.$

where:
\begin{enumerate}
\item $p, {\vec x_1}, \ldots, {\vec x_m}$ do not occur in $\alpha_1, \ldots, \alpha_n$;
\item Each $\beta_i$ is negative, and each $\gamma_i$ positive in $p$, for $1\leq i\leq m$.
\item Each $\alpha_i$ is pure.
\end{enumerate}
\end{enumerate}

\item \textbf{Output}: If in the previous stage, for some $\{\nomi_0\leq\phi_i, \psi_i\leq \neg\nomi_1\}$, the algorithm gets stuck, i.e.\ some proposition variables cannot be eliminated by the application of the reduction rules, then the algorithm halts and output ``failure''. Otherwise, each initial tuple $\{\nomi_0\leq\phi_i, \psi_i\leq \neg\nomi_1\}$ of inequalities after the first approximation has been reduced to a set of pure (universally quantified) inequalities $\mathsf{Reduce}(\phi_i\leq\psi_i)$, and then the output is a set of quasi-(universally quantified) inequalities $\{\&\mathsf{Reduce}(\phi_i\leq\psi_i)\Rightarrow \nomi_0\leq \neg\nomi_1: \phi_i\leq\psi_i\in\mathsf{Preprocess}(\phi\leq\psi)\}$. Then the algorithm use the standard translation to transform the quasi-(universally quantified) inequalities into first-order formulas. Finally, use universal quantifiers to quantify all free individual variables $x$ and individual constants $i$ in the standard translation.
\end{enumerate}

Here we give an example of the execution of $\mathsf{ALBA}^{\downarrow}$ on the for the $(1,1)$-Sahlqvist inequality $\downarrow x.(\Diamond\Box p_1\land\Box(@_\nomi x\land\Box p_2))\leq \Diamond\Box\Diamond p_1\lor\Diamond\Box\Diamond p_2$.

\begin{example}\label{Example:ALBA:execution}

The algorithm receives the input inequality $\downarrow x.(\Diamond\Box p_1\land\Box(@_\nomi x\land\Box p_2))\leq \Diamond\Box\Diamond p_1\lor\Diamond\Box\Diamond p_2$;\\

\textbf{Stage 1}:\\

The distribution rules, the splitting rules and the monotone and antitone variable elimination rules are not applicable here, so\\

\begin{tabular}{r l}

& $\mathsf{Preprocess}(\downarrow x.(\Diamond\Box p_1\land\Box(@_\nomi x\land\Box p_2))\leq \Diamond\Box\Diamond p_1\lor\Diamond\Box\Diamond p_2)$\\
$=$ & $\{\downarrow x.(\Diamond\Box p_1\land\Box(@_\nomi x\land\Box p_2))\leq \Diamond\Box\Diamond p_1\lor\Diamond\Box\Diamond p_2\}$;\\

\end{tabular}
$\ $\\

then we apply the first approximation rule and get\\

$\{\nomi_{0}\leq\downarrow x.(\Diamond\Box p_1\land\Box(@_\nomi x\land\Box p_2)), \Diamond\Box\Diamond p_1\lor\Diamond\Box\Diamond p_2\leq \neg\nomi_{1}\}$;\\

\textbf{Stage 2}:\\

\textbf{Substage 1}:\\

by applying the splitting rule for $\lor$ we get\\

$\{\nomi_{0}\leq\downarrow x.(\Diamond\Box p_1\land\Box(@_\nomi x\land\Box p_2)), \Diamond\Box\Diamond p_1\leq\neg\nomi_{1}, \Diamond\Box\Diamond p_2\leq \neg\nomi_{1}\}$;\\

by applying the approximation rule for $\downarrow x$ we get\\

$\{\nomi_{0}\leq\Diamond\Box p_1\land\Box(@_\nomi \nomi_{0}\land\Box p_2), \Diamond\Box\Diamond p_1\leq\neg\nomi_{1}, \Diamond\Box\Diamond p_2\leq \neg\nomi_{1}\}$;\\

by applying the splitting rule for $\land$ we get\\

$\{\nomi_{0}\leq\Diamond\Box p_1,\nomi_{0}\leq\Box(@_\nomi \nomi_{0}\land\Box p_2), \Diamond\Box\Diamond p_1\leq\neg\nomi_{1}, \Diamond\Box\Diamond p_2\leq \neg\nomi_{1}\}$;\\

by applying the approximation rule for $\Diamond$ we get\\

$\{\nomi_{0}\leq\Diamond\nomi_{2},\nomi_{2}\leq\Box p_1,\nomi_{0}\leq\Box(@_\nomi \nomi_{0}\land\Box p_2), \Diamond\Box\Diamond p_1\leq\neg\nomi_{1}, \Diamond\Box\Diamond p_2\leq \neg\nomi_{1}\}$;\\

\textbf{Substage 2}:\\

by applying the residuation rule for $\Box$ twice we get\\

$\{\nomi_{0}\leq\Diamond\nomi_{2},\Diamondblack\nomi_{2}\leq p_1,\Diamondblack\nomi_{0}\leq @_\nomi \nomi_{0}\land\Box p_2, \Diamond\Box\Diamond p_1\leq\neg\nomi_{1}, \Diamond\Box\Diamond p_2\leq \neg\nomi_{1}\}$;\\

by applying the splitting rule for $\land$ we get\\

$\{\nomi_{0}\leq\Diamond\nomi_{2},\Diamondblack\nomi_{2}\leq p_1,\Diamondblack\nomi_{0}\leq @_\nomi \nomi_{0}, \Diamondblack\nomi_{0}\leq\Box p_2, \Diamond\Box\Diamond p_1\leq\neg\nomi_{1}, \Diamond\Box\Diamond p_2\leq \neg\nomi_{1}\}$;\\

by applying the residuation rule for $\Box$ we get\\

$\{\nomi_{0}\leq\Diamond\nomi_{2},\Diamondblack\nomi_{2}\leq p_1,\Diamondblack\nomi_{0}\leq @_\nomi \nomi_{0}, \Diamondblack\Diamondblack\nomi_{0}\leq p_2, \Diamond\Box\Diamond p_1\leq\neg\nomi_{1}, \Diamond\Box\Diamond p_2\leq \neg\nomi_{1}\}$;\\

by applying the residuation rule for $\Diamond$ twice we get\\

$\{\nomi_{0}\leq\Diamond\nomi_{2},\Diamondblack\nomi_{2}\leq p_1,\Diamondblack\nomi_{0}\leq @_\nomi \nomi_{0}, \Diamondblack\Diamondblack\nomi_{0}\leq p_2,\Box\Diamond p_1\leq\blacksquare\neg\nomi_{1}, \Box\Diamond p_2\leq\blacksquare \neg\nomi_{1}\}$;\\

it is easy to see that the second splitting rule is not applicable here;\\

\textbf{Substage 3}:\\

it is easy to see that the packing rule is not applicable here;\\

\textbf{Substage 4}:\\

by applying the right-handed Ackermann rule for $p_1$, we get\\

$\{\nomi_{0}\leq\Diamond\nomi_{2},\Diamondblack\nomi_{0}\leq @_\nomi \nomi_{0}, \Diamondblack\Diamondblack\nomi_{0}\leq p_2,\Box\Diamond \Diamondblack\nomi_{2}\leq\blacksquare\neg\nomi_{1}, \Box\Diamond p_2\leq\blacksquare \neg\nomi_{1}\}$;\\

by applying the right-handed Ackermann rule for $p_2$, we get\\

$\{\nomi_{0}\leq\Diamond\nomi_{2},\Diamondblack\nomi_{0}\leq @_\nomi \nomi_{0},\Box\Diamond \Diamondblack\nomi_{2}\leq\blacksquare\neg\nomi_{1}, \Box\Diamond\Diamondblack\Diamondblack\nomi_{0}\leq\blacksquare \neg\nomi_{1}\}$;\\

\textbf{Stage 3}:\\

Since all the propositional variables are eliminated, we output the quasi-inequality 

$(\nomi_{0}\leq\Diamond\nomi_{2})\ \&\ (\Diamondblack\nomi_{0}\leq @_\nomi \nomi_{0})\ \&\ (\Box\Diamond \Diamondblack\nomi_{2}\leq\blacksquare\neg\nomi_{1})\ \&\ (\Box\Diamond\Diamondblack\Diamondblack\nomi_{0}\leq\blacksquare\neg\nomi_{1})\ \Rightarrow (\nomi_{0}\leq\neg\nomi_{1})$;\\

then we get its standard translation, and use $\forall i_0\forall i_1\forall i_2\forall i$ to universally quantify it.

\end{example}

\begin{example}\label{Example:ALBA:execution:2}
We consider another $(1,1)$-Sahlqvist inequality $\Box\downarrow x.(\Box p_1\land\Box(x\land @_{\nomi}p_2))\leq\Diamond\Box p_1\lor\Diamond\Box p_2$, to show how to use the universal quantifiers and the residuation rules for $\downarrow$ and $@$:\\

\textbf{Stage 1}:\\

The distribution rules, the splitting rules and the monotone and antitone variable elimination rules are not applicable here, so\\

\begin{tabular}{r l}
& $\mathsf{Preprocess}(\Box\downarrow x.(\Box p_1\land\Box(x\land @_{\nomi}p_2))\leq\Diamond\Box p_1\lor\Diamond\Box p_2)$\\

$=$ & $\{\Box\downarrow x.(\Box p_1\land\Box(x\land @_{\nomi}p_2))\leq\Diamond\Box p_1\lor\Diamond\Box p_2\}$;\\
\end{tabular}

$\ $\\
then we apply the first approximation rule and get\\

$\{\nomi_{0}\leq\Box\downarrow x.(\Box p_1\land\Box(x\land @_{\nomi}p_2)),\Diamond\Box p_1\lor\Diamond\Box p_2\leq\neg\nomi_{1}\}$;\\

\textbf{Stage 2}:\\

\textbf{Substage 1}:\\

by applying the splitting rule for $\lor$ we get\\

$\{\nomi_{0}\leq\Box\downarrow x.(\Box p_1\land\Box(x\land @_{\nomi}p_2)),\Diamond\Box p_1\leq\neg\nomi_{1}, \Diamond\Box p_2\leq\neg\nomi_{1}\}$;\\

\textbf{Substage 2}:\\

by applying the residuation rule for $\Box$ and the residuation rule for $\Diamond$ twice we get\\

$\{\Diamondblack\nomi_{0}\leq\downarrow x.(\Box p_1\land\Box(x\land @_{\nomi}p_2)),\Box p_1\leq\blacksquare\neg\nomi_{1},\Box p_2\leq\blacksquare\neg\nomi_{1}\}$;\\

by applying the residuation rule for $\downarrow x$ we get\\

$\{\forall y(\mathsf{A}(y\to\Diamondblack\nomi_{0})\land y\leq\Box p_1\land\Box(y\land @_{\nomi}p_2)),\Box p_1\leq\blacksquare\neg\nomi_{1},\Box p_2\leq\blacksquare\neg\nomi_{1}\}$;\\

by applying the splitting rule for $\land$ we get (notice that here we apply the splitting rule in the scope of the universal quantifier $\forall y$)\\

$\{\forall y(\mathsf{A}(y\to\Diamondblack\nomi_{0})\land y\leq\Box p_1\ \&\ \mathsf{A}(y\to\Diamondblack\nomi_{0})\land y\leq\Box(y\land @_{\nomi}p_2)),\Box p_1\leq\blacksquare\neg\nomi_{1},\Box p_2\leq\blacksquare\neg\nomi_{1}\}$;\\

by applying the second splitting rule we get\\

$\{\forall y(\mathsf{A}(y\to\Diamondblack\nomi_{0})\land y\leq\Box p_1),\forall y(\mathsf{A}(y\to\Diamondblack\nomi_{0})\land y\leq\Box(y\land @_{\nomi}p_2)),\Box p_1\leq\blacksquare\neg\nomi_{1},\Box p_2\leq\blacksquare\neg\nomi_{1}\}$;\\

by applying the residuation rule for $\Box$ twice we get (notice that here we apply the residuation rule in the scope of the universal quantifier $\forall y$)\\

$\{\forall y(\Diamondblack(\mathsf{A}(y\to\Diamondblack\nomi_{0})\land y)\leq p_1),\forall y(\Diamondblack(\mathsf{A}(y\to\Diamondblack\nomi_{0})\land y)\leq y\land @_{\nomi}p_2),\Box p_1\leq\blacksquare\neg\nomi_{1},\Box p_2\leq\blacksquare\neg\nomi_{1}\}$;\\

by applying the splitting rule for $\land$ we get (notice that here we apply the splitting rule in the scope of the universal quantifier $\forall y$)\\

$\{\forall y(\Diamondblack(\mathsf{A}(y\to\Diamondblack\nomi_{0})\land y)\leq p_1),\forall y(\Diamondblack(\mathsf{A}(y\to\Diamondblack\nomi_{0})\land y)\leq y\ \&\ \Diamondblack(\mathsf{A}(y\to\Diamondblack\nomi_{0})\land y)\leq @_{\nomi}p_2),\Box p_1\leq\blacksquare\neg\nomi_{1},\Box p_2\leq\blacksquare\neg\nomi_{1}\}$;\\

by applying the second splitting rule we get\\

$\{\forall y(\Diamondblack(\mathsf{A}(y\to\Diamondblack\nomi_{0})\land y)\leq p_1),\forall y(\Diamondblack(\mathsf{A}(y\to\Diamondblack\nomi_{0})\land y)\leq y),\forall y(\Diamondblack(\mathsf{A}(y\to\Diamondblack\nomi_{0})\land y)\leq @_{\nomi}p_2),\Box p_1\leq\blacksquare\neg\nomi_{1},\Box p_2\leq\blacksquare\neg\nomi_{1}\}$;\\

by applying the residuation rule for $@$ we get\\

$\{\forall y(\Diamondblack(\mathsf{A}(y\to\Diamondblack\nomi_{0})\land y)\leq p_1),\forall y(\Diamondblack(\mathsf{A}(y\to\Diamondblack\nomi_{0})\land y)\leq y),\forall y(\mathsf{E}\Diamondblack(\mathsf{A}(y\to\Diamondblack\nomi_{0})\land y)\land\nomi\leq p_2),\Box p_1\leq\blacksquare\neg\nomi_{1},\Box p_2\leq\blacksquare\neg\nomi_{1}\}$;\\

\textbf{Substage 3}:\\

by applying the packing rule twice we get\\

$\{\exists y(\Diamondblack(\mathsf{A}(y\to\Diamondblack\nomi_{0})\land y))\leq p_1,\forall y(\Diamondblack(\mathsf{A}(y\to\Diamondblack\nomi_{0})\land y)\leq y),\exists y(\mathsf{E}\Diamondblack(\mathsf{A}(y\to\Diamondblack\nomi_{0})\land y)\land\nomi)\leq p_2,\Box p_1\leq\blacksquare\neg\nomi_{1},\Box p_2\leq\blacksquare\neg\nomi_{1}\}$;\\

\textbf{Substage 4}:\\

by applying the right-handed Ackermann rule for $p_1$, we get\\

$\{\forall y(\Diamondblack(\mathsf{A}(y\to\Diamondblack\nomi_{0})\land y)\leq y),\exists y(\mathsf{E}\Diamondblack(\mathsf{A}(y\to\Diamondblack\nomi_{0})\land y)\land\nomi)\leq p_2,\Box \exists y(\Diamondblack(\mathsf{A}(y\to\Diamondblack\nomi_{0})\land y))\leq\blacksquare\neg\nomi_{1},\Box p_2\leq\blacksquare\neg\nomi_{1}\}$;\\

by applying the right-handed Ackermann rule for $p_2$, we get\\

$\{\forall y(\Diamondblack(\mathsf{A}(y\to\Diamondblack\nomi_{0})\land y)\leq y),\Box\exists y(\Diamondblack(\mathsf{A}(y\to\Diamondblack\nomi_{0})\land y))\leq\blacksquare\neg\nomi_{1},\Box\exists y(\mathsf{E}\Diamondblack(\mathsf{A}(y\to\Diamondblack\nomi_{0})\land y)\land\nomi)\leq\blacksquare\neg\nomi_{1}\}$;\\

\textbf{Stage 3}:\\

Since all the propositional variables are eliminated, we output the quasi-inequality 

$(\forall y(\Diamondblack(\mathsf{A}(y\to\Diamondblack\nomi_{0})\land y)\leq y))\ \&\ (\Box\exists y(\Diamondblack(\mathsf{A}(y\to\Diamondblack\nomi_{0})\land y))\leq\blacksquare\neg\nomi_{1})\ \&\ (\Box\exists y(\mathsf{E}\Diamondblack(\mathsf{A}(y\to\Diamondblack\nomi_{0})\land y)\land\nomi)\leq\blacksquare\neg\nomi_{1})\ \Rightarrow (\nomi_{0}\leq\neg\nomi_{1})$;\\

then we get its standard translation, and use $\forall i_0\forall i_1\forall i$ to universally quantify it.
\end{example}

\section{Soundness of $\mathsf{ALBA}^{\downarrow}$}\label{Sec:Soundness}
In the present section, we will prove the soundness of the algorithm $\mathsf{ALBA}^{\downarrow}$ with respect to Kripke frames. The basic proof structure is similar to \cite{Zh21}.

\begin{theorem}[Soundness]\label{Thm:Soundness}
If $\mathsf{ALBA}^{\downarrow}$ runs successfully on $\phi\leq\psi$ and outputs $\mathsf{FO}(\phi\leq\psi)$, then for any Kripke frame $\mathbb{F}=(W,R)$, $$\mathbb{F}\Vdash\phi \leq \psi\mbox{ iff }\mathbb{F}\models\mathsf{FO}(\phi \leq \psi).$$
\end{theorem}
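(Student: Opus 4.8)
The plan is to establish soundness by showing that each transformation rule invoked by $\mathsf{ALBA}^{\downarrow}$ preserves validity on Kripke frames, and then to chain these local equivalences together along the execution of the algorithm. Concretely, I would prove that whenever a rule rewrites a system $S$ into a system $S'$, we have $\mathbb{F}\Vdash S$ if and only if $\mathbb{F}\Vdash S'$, where validity of a system means validity of its conjunction under all valuations $V$ and all assignments $g$. Since the algorithm begins by unfolding $\phi\leq\psi$ via $\mathsf{Preprocess}$ and first approximation, and ends with the pure quasi-(universally quantified) inequalities whose standard translation is $\mathsf{FO}(\phi\leq\psi)$, tracing validity-preservation through every stage yields the biconditional; the final appeal to Proposition \ref{Prop:ST:ineq:quasi:mega} converts the pure quasi-inequalities into their first-order form without changing the frame-validity.

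The work therefore splits into verifying soundness rule-by-rule, grouped by stage. First I would handle the distribution, splitting, and monotone/antitone variable-elimination rules of the preprocessing stage: distribution is just the semantic distributivity of $\Diamond,\land,\downarrow x,@$ over $\lor$ (and dually), splitting is the adjunction characterizing $\land$ and $\lor$ under $\leq$, and the variable-elimination rules rest on the monotonicity/antitonicity of $\beta(p)$ and $\alpha(p)$ in $p$. The first approximation rule, replacing $\phi_i\leq\psi_i$ by $\{\nomi_0\leq\phi_i,\ \psi_i\leq\neg\nomi_1\}$ together with the consequent $\nomi_0\leq\neg\nomi_1$, is justified by the fact that nominals range over all singletons, so quantifying over the fresh $\nomi_0,\nomi_1$ recovers the original inequality. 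For Substage 1 and 2 the approximation and residuation rules are the crux: each must be shown equivalent using the adjunctions $(\Diamond,\blacksquare)$ and $(\Diamondblack,\Box)$, the residuation laws for $@$ in terms of $\mathsf{E},\mathsf{A}$, and, for the two $\downarrow$-rules, the interpretation of $\downarrow x.\beta$ via the $x$-variant $g^x_w$, which is where the universal quantifier $\forall y(\ldots y\to\ldots\ y\ldots)$ and the substitution $\beta[y/x]$ enter. The packing rules are sound because $\beta$ contains no free $x$, so $\forall x(\alpha\leq\beta)$ collapses to $(\exists x\alpha)\leq\beta$; and the Ackermann rules are justified by the Ackermann Lemma, using the minimal/maximal valuation $p:=\alpha_1\lor\ldots\lor\alpha_n$ together with the required polarity (positivity of $\beta_i$, negativity of $\gamma_i$, purity of the $\alpha_i$).

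I expect the main obstacle to be the soundness of the residuation rules for the binder $\downarrow x$ and the satisfaction operator $@$, since these are the genuinely new ingredients compared with standard modal $\mathsf{ALBA}$. Establishing $\alpha\leq\downarrow x.\beta$ $\Leftrightarrow$ $\forall y(\mathsf{A}(y\to\alpha)\land y\leq\beta[y/x])$ requires carefully unwinding the semantics of $\downarrow$, the global modalities $\mathsf{A},\mathsf{E}$, and the state-variable quantifier $\forall y$ in a single computation, keeping track of the assignment $g$ and its $y$-variants and verifying that the fresh $y$ captures exactly the effect of binding $x$ to the current world. A secondary care point is that Substage 2 and 3 rules may be applied inside the scope of $\forall y$; I would therefore phrase each local soundness claim as preservation under an arbitrary fixed assignment, so that applicability within a quantifier scope follows from the clause $(W,R,V),g\Vdash\forall x(\mathsf{Mega})$ iff the body holds for every $x$-variant. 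Once these delicate equivalences are in place, the remaining rules are routine, and concatenating the chain of biconditionals along the successful run of the algorithm delivers $\mathbb{F}\Vdash\phi\leq\psi$ iff $\mathbb{F}\models\mathsf{FO}(\phi\leq\psi)$.
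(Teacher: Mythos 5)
Your overall architecture matches the paper's: chain rule-by-rule equivalences through preprocessing, first approximation, the four substages of the reduction, and finish with Proposition \ref{Prop:ST:ineq:quasi:mega}; and your ``secondary care point'' --- proving the Substage 2 and 3 rules pointwise, for a fixed valuation and assignment, so that they may be applied under $\forall y$ --- is exactly the stronger property the paper isolates for those substages. However, your headline invariant is false, and it is false precisely at the two places where the paper has to be careful. You propose to show, for every rule, that $\mathbb{F}\Vdash S$ iff $\mathbb{F}\Vdash S'$, where validity of the system means validity of its conjunction under \emph{all} valuations and assignments. This fails for the approximation rules of Substage 1: from $\nomi_0\leq\Diamond\top$ the rule produces $\{\nomi_0\leq\Diamond\nomj,\ \nomj\leq\top\}$, and while frame-validity of the former says every world has an $R$-successor, frame-validity of $\nomi_0\leq\Diamond\nomj$ --- which quantifies over all placements of the fresh nominal $\nomj$ --- forces $R=W\times W$. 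It fails even more badly for the Ackermann rules: $\mathbb{F}\Vdash\alpha\leq p$ is refuted by any valuation with $V(p)=\emptyset$ whenever the pure formula $\alpha$ is satisfiable, so the premise system is essentially never frame-valid even when the conclusion system is.

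The invariant the paper actually maintains (stated just before Proposition \ref{Prop:Substage:1}) is preservation of frame-validity of the quasi-inequality $\bigamp S\Rightarrow\nomi_0\leq\neg\nomi_1$, not of $S$ itself; and for Substages 1 and 4 this is established through an existential-valuation condition: if $(\mathbb{F},V),g\Vdash S$ then there exist $V'$, $g'$ with $V'(\nomi_0)=V(\nomi_0)$ and $V'(\nomi_1)=V(\nomi_1)$ such that $(\mathbb{F},V'),g'\Vdash S'$, and conversely. This is what makes the freshness of the introduced nominals and the minimal valuation $V^{p}(p)=\llbracket\alpha_1\lor\ldots\lor\alpha_n\rrbracket$ usable: the valuation may move, but only away from $\nomi_0,\nomi_1$, leaving the consequent untouched. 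You do gesture at the right ingredients (you invoke the minimal valuation for Ackermann, and note that quantifying over the fresh $\nomi_0,\nomi_1$ recovers the original inequality at first approximation), but the uniform biconditional you plan to chain is simply not available for the approximation and Ackermann steps, so the chain breaks as stated. A related conflation: you group the approximation rules with the residuation rules as ``shown equivalent using the adjunctions,'' whereas the approximation rules are witness-introducing and need the existential-valuation argument, not adjunction. Finally, you single out the $\downarrow$ and $@$ residuation rules as the main obstacle; in the paper these are routine pointwise computations (Lemmas \ref{Lemma:at:substage2} and \ref{Lemma:downarrow:substage2}), and the genuine subtlety is the two-tier soundness formulation that your proposal elides.
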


\begin{proof}
The proof goes similarly to \cite[Theorem 8.1]{CoPa12}. Let $\phi_i\leq\psi_i$, $1\leq i\leq n$ denote the inequalities produced by preprocessing $\phi\leq\psi$ after Stage 1, and $\{\nomi_{0}\leq\phi_{i}, \psi_{i}\leq\neg\nomi_{1}\}$ denote the inequalities after the first-approximation rule, $\mathsf{Reduce}(\phi_i\leq\psi_i)$ denote the set of pure (universally quantified) inequalities after Stage 2, and $\mathsf{FO}(\phi \leq \psi)$ denote the standard translation of the quasi-(universally quantified) inequalities into first-order formulas, then we have the following chain of equivalences:

It suffices to show the equivalence from (\ref{aCrct:Eqn0}) to (\ref{aCrct:Eqn4}) given below:

\begin{eqnarray}
&&\mathbb{F}\Vdash\phi\leq\psi\label{aCrct:Eqn0}\\
&&\mathbb{F}\Vdash\phi_i\leq\psi_i,\mbox{ for all }1\leq i\leq n\label{aCrct:Eqn1}\\
&&\mathbb{F}\Vdash(\nomi_{0}\leq\phi_{i}\ \&\ \psi_{i}\leq\neg\nomi_{1})\Rightarrow\nomi_{0}\leq\neg\nomi_{1} \mbox{ for all }1\leq i\leq n\label{aCrct:Eqn2}\\
&&\mathbb{F}\Vdash\mathsf{Reduce}(\phi_{i}\leq\psi_{i})\Rightarrow\nomi_{0}\leq\neg\nomi_{1} \mbox{ for all }1\leq i\leq n\label{aCrct:Eqn3}\\
&&\mathbb{F}\Vdash\mathsf{FO}(\phi\leq\psi)\label{aCrct:Eqn4}
\end{eqnarray}

\begin{itemize}
\item The equivalence between (\ref{aCrct:Eqn0}) and (\ref{aCrct:Eqn1}) follows from Proposition \ref{prop:Soundness:stage:1};
\item the equivalence between (\ref{aCrct:Eqn1}) and (\ref{aCrct:Eqn2}) follows from Proposition \ref{prop:Soundness:first:approximation};
\item the equivalence between (\ref{aCrct:Eqn2}) and (\ref{aCrct:Eqn3}) follows from Propositions \ref{Prop:Substage:1}, \ref{Prop:Substage:2}, \ref{Prop:Substage:3}, \ref{Prop:Substage:4};
\item the equivalence between (\ref{aCrct:Eqn3}) and (\ref{aCrct:Eqn4}) follows from Proposition \ref{Prop:ST:ineq:quasi:mega}.
\end{itemize}
\end{proof}

In the remainder of this section, we prove the soundness of the rules in Stage 1, 2 and 3.

\begin{proposition}[Soundness of the rules in Stage 1]\label{prop:Soundness:stage:1}
For the distribution rules, the splitting rules and the monotone and antitone variable-elimination rules, they are sound in both directions in $\mathbb{F}$, i.e.\ the inequality before the rule is valid in $\mathbb{F}$ iff the inequality(-ies) after the rule is(are) valid in $\mathbb{F}$.
\end{proposition}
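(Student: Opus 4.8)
The plan is to treat the three families of Stage 1 rules separately, since they are justified by different principles, and in each case to argue the biconditional ``valid before the rule iff valid after'' directly in terms of truth sets. Throughout I use that $\mathbb{F}\Vdash\phi\leq\psi$ means $\llbracket\phi\rrbracket^{(\mathbb{F},V),g}\subseteq\llbracket\psi\rrbracket^{(\mathbb{F},V),g}$ for every valuation $V$ and assignment $g$, and that $\land,\lor,\bot,\top$ are interpreted as $\cap,\cup,\emptyset,W$. The splitting rules are the most immediate. From $\llbracket\beta\land\gamma\rrbracket=\llbracket\beta\rrbracket\cap\llbracket\gamma\rrbracket$ one gets, for fixed $V,g$, that $\llbracket\alpha\rrbracket\subseteq\llbracket\beta\land\gamma\rrbracket$ holds iff both $\llbracket\alpha\rrbracket\subseteq\llbracket\beta\rrbracket$ and $\llbracket\alpha\rrbracket\subseteq\llbracket\gamma\rrbracket$; quantifying over all $V,g$ yields that $\mathbb{F}\Vdash\alpha\leq\beta\land\gamma$ iff $\mathbb{F}\Vdash\alpha\leq\beta$ and $\mathbb{F}\Vdash\alpha\leq\gamma$. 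The rule for $\lor$ on the left is dual, via $\llbracket\alpha\lor\beta\rrbracket=\llbracket\alpha\rrbracket\cup\llbracket\beta\rrbracket$.

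For the distribution rules I would first check, one clause at a time, that the two signed subformulas displayed in Figures \ref{Figure:distribution:rules} and \ref{Figure:distribution:rules:2} are \emph{pointwise equivalent}, i.e.\ have identical truth sets under every $V$ and $g$. Each such equivalence --- for instance $\Diamond(\alpha\lor\beta)\equiv\Diamond\alpha\lor\Diamond\beta$, $\downarrow x.(\alpha\lor\beta)\equiv\downarrow x.\alpha\lor\downarrow x.\beta$, $@_{\nomi}(\alpha\lor\beta)\equiv@_{\nomi}\alpha\lor@_{\nomi}\beta$, $(\alpha\lor\beta)\to\gamma\equiv(\alpha\to\gamma)\land(\beta\to\gamma)$, together with their $-\land$ duals --- unfolds directly from the relevant semantic clause and the distributivity of $\cap,\cup$, using that $\Diamond,\Box,\downarrow x,@$ preserve the appropriate joins or meets in each argument. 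It is important here that each equivalence holds under \emph{every} assignment $g$, so that it remains correct when the rewritten subformula lies in the scope of a binder $\downarrow x$, an operator $@_x$, or a quantifier. A routine compositionality (congruence) lemma, proved by induction on the surrounding context, then shows that replacing a subformula by a pointwise-equivalent one leaves the truth set of the whole formula unchanged under every $V,g$; hence both sides of the inequality keep their truth sets and validity is equivalent in both directions.

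The variable-elimination rules are the conceptual heart of the proposition, and their soundness rests on a \emph{monotonicity lemma}: if $p$ occurs only positively in $\gamma$ then the map $X\mapsto\llbracket\gamma\rrbracket^{(\mathbb{F},V^{p}_{X}),g}$ is $\subseteq$-monotone, and if $p$ occurs only negatively it is $\subseteq$-antitone, where $V^{p}_{X}$ agrees with $V$ except that $p\mapsto X$. This is proved by induction on the signed generation tree, the only delicate cases being $\to$ (antitone in its first argument, monotone in its second) and the binders $\downarrow x$ and $@_x$, which rebind \emph{state} variables and so do not disturb monotonicity in the \emph{propositional} variable $p$. Granting the lemma, consider the first rule, where $\alpha$ is negative and $\beta$ positive in $p$. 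The implication from $\alpha(p)\leq\beta(p)$ to $\alpha(\bot)\leq\beta(\bot)$ is immediate: validity of the former quantifies over all valuations, in particular the one sending $p$ to $\emptyset$, and since $\bot$ is interpreted as $\emptyset$ this instance is exactly $\alpha(\bot)\leq\beta(\bot)$. For the converse, fix $V,g$ and set $X=V(p)$; writing all truth sets at $(\mathbb{F},V),g$, antitonicity gives $\llbracket\alpha(p)\rrbracket\subseteq\llbracket\alpha(\bot)\rrbracket$, the assumption gives $\llbracket\alpha(\bot)\rrbracket\subseteq\llbracket\beta(\bot)\rrbracket$, and monotonicity gives $\llbracket\beta(\bot)\rrbracket\subseteq\llbracket\beta(p)\rrbracket$, whence $\llbracket\alpha(p)\rrbracket\subseteq\llbracket\beta(p)\rrbracket$. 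The second rule is symmetric, with $\top$ (interpreted as $W$) replacing $\bot$ and the variances exchanged.

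The main obstacle is the monotonicity lemma underpinning the last family: its inductive proof must cover every connective of the language, and one must verify that substituting $\bot$ (resp.\ $\top$) for $p$ really yields the least (resp.\ greatest) valuation of $p$, so that the single instance at $\emptyset$ (resp.\ $W$) recovers the universally quantified inequality. By contrast, the distribution rules need no new idea beyond a somewhat lengthy clause-by-clause check, and the splitting rules are immediate from the definition of $\leq$ as truth-set inclusion.
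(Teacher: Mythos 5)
Your proposal is correct and takes essentially the same route as the paper's proof: the splitting rules are handled by truth-set inclusion, the distribution rules by the pointwise semantic equivalences (which the paper simply lists, e.g.\ $\downarrow x.(\alpha\lor\beta)\leftrightarrow(\downarrow x.\alpha\lor\downarrow x.\beta)$), and the variable-elimination rules by instantiating $V(p)=\emptyset$ in one direction and, in the other, the chain $\llbracket\alpha(p)\rrbracket\subseteq\llbracket\alpha(\bot)\rrbracket\subseteq\llbracket\beta(\bot)\rrbracket\subseteq\llbracket\beta(p)\rrbracket$, exactly matching the paper's appeal to $\mathbb{F}\vDash\alpha(p)\leq\alpha(\bot)$ and $\mathbb{F}\vDash\beta(\bot)\leq\beta(p)$. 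The only difference is presentational: you state explicitly the congruence (replacement) lemma and the monotonicity/antitonicity lemma that the paper invokes tacitly, which is a sound elaboration rather than a different argument.
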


\begin{proof}
For the soundness of the distribution rules, it follows from the fact that the following equivalences are valid in $\mathbb{F}$:

\begin{itemize}
\item $\Diamond(\alpha\lor\beta)\leftrightarrow\Diamond\alpha\lor\Diamond\beta$;
\item $\neg(\alpha\lor\beta)\leftrightarrow\neg\alpha\land\neg\beta$;
\item $(\alpha\lor\beta)\land\gamma\leftrightarrow(\alpha\land\gamma)\lor(\beta\land\gamma)$;
\item $\alpha\land(\beta\lor\gamma)\leftrightarrow(\alpha\land\beta)\lor(\alpha\land\gamma)$;
\item $\downarrow x.(\alpha\lor\beta)\leftrightarrow (\downarrow x.\alpha\lor\downarrow x.\beta)$;
\item $@_{\nomi}(\alpha\lor\beta)\leftrightarrow (@_{\nomi}\alpha\lor @_{\nomi}\beta)$;
\item $@_{x}(\alpha\lor\beta)\leftrightarrow (@_{x}\alpha\lor @_{x}\beta)$;
\item $((\alpha\lor\beta)\to\gamma)\leftrightarrow((\alpha\to\gamma)\land(\beta\to\gamma))$;
\item $\Box(\alpha\land\beta)\leftrightarrow\Box\alpha\land\Box\beta$;
\item $\neg(\alpha\land\beta)\leftrightarrow\neg\alpha\lor\neg\beta$;
\item $(\alpha\land\beta)\lor\gamma\leftrightarrow(\alpha\lor\gamma)\land(\beta\lor\gamma)$;
\item $\alpha\lor(\beta\land\gamma)\leftrightarrow(\alpha\lor\beta)\land(\alpha\lor\gamma)$;
\item $\downarrow x.(\alpha\land\beta)\leftrightarrow (\downarrow x.\alpha\land \downarrow x.\beta)$;
\item $@_{\nomi}(\alpha\land\beta)\leftrightarrow (@_{\nomi}\alpha\land @_{\nomi}\beta)$;
\item $@_{x}(\alpha\land\beta)\leftrightarrow (@_{x}\alpha\land @_{x}\beta)$;
\item $(\alpha\to\beta\land\gamma)\leftrightarrow(\alpha\to\beta)\land(\alpha\to\gamma)$.
\end{itemize}

For the soundness of the splitting rules, it follows from the following fact:

$$\mathbb{F}\Vdash\alpha\leq\beta\land\gamma\mbox{ iff }(\mathbb{F}\Vdash\alpha\leq\beta\mbox{ and }\mathbb{F}\Vdash\alpha\leq\gamma);$$
$$\mathbb{F}\Vdash\alpha\lor\beta\leq\gamma\mbox{ iff }(\mathbb{F}\Vdash\alpha\leq\gamma\mbox{ and }\mathbb{F}\Vdash\beta\leq\gamma).$$

For the soundness of the monotone and antitone variable elimination rules, we show the soundness for the first rule. Suppose $\alpha(p)$ is negative in $p$ and $\beta$ is positive in $p$. 

If $\mathbb{F}\Vdash\alpha(p)\leq\beta(p)$, then for all valuations $V$ and all assignments $g$, $(\mathbb{F}, V),g\Vdash\alpha(p)\leq\beta(p)$, thus for the valuation $V^{p}_{\emptyset}$ such that $V^{p}_{\emptyset}$ is the same as $V$ except that $V^{p}_{\emptyset}(p)=\emptyset$, $(\mathbb{F}, V^{p}_{\emptyset}),g\Vdash\alpha(p)\leq\beta(p)$, therefore $(\mathbb{F}, V^{p}_{\emptyset}),g\Vdash\alpha(\bot)\leq\beta(\bot)$, thus $(\mathbb{F}, V),g\Vdash\alpha(\bot)\leq\beta(\bot)$, so $\mathbb{F}\Vdash\alpha(\bot)\leq\beta(\bot)$.

For the other direction, suppose $\mathbb{F}\vDash\alpha(\bot)\leq\beta(\bot)$, then by the fact that $\alpha(p)$ is negative in $p$ and $\beta$ is positive in $p$, we have that $\mathbb{F}\vDash\alpha(p)\leq\alpha(\bot)$ and $\mathbb{F}\vDash\beta(\bot)\leq\beta(p)$, therefore $\mathbb{F}\vDash\alpha(p)\leq\beta(p)$.

The soundness of the other rule is similar.
\end{proof}

\begin{proposition}\label{prop:Soundness:first:approximation}
(\ref{aCrct:Eqn1}) and (\ref{aCrct:Eqn2}) are equivalent, i.e.\ the first-approximation rule is sound in $\mathbb{F}$.
\end{proposition}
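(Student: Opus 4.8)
The plan is to prove the equivalence separately for each index $i$, since both (\ref{aCrct:Eqn1}) and (\ref{aCrct:Eqn2}) are conjunctions over $1\leq i\leq n$. Thus it suffices to show, for a fixed inequality $\phi_i\leq\psi_i$ and the special fresh nominals $\nomi_0,\nomi_1$, that $\mathbb{F}\Vdash\phi_i\leq\psi_i$ iff $\mathbb{F}\Vdash(\nomi_0\leq\phi_i\ \&\ \psi_i\leq\neg\nomi_1)\Rightarrow\nomi_0\leq\neg\nomi_1$. I would begin by unfolding the semantic definitions: the right-hand side asserts that for every valuation $V$ and assignment $g$, whenever both $(\mathbb{F},V),g\Vdash\nomi_0\leq\phi_i$ and $(\mathbb{F},V),g\Vdash\psi_i\leq\neg\nomi_1$ hold, then $(\mathbb{F},V),g\Vdash\nomi_0\leq\neg\nomi_1$ holds. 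The structural fact I would exploit throughout is that $V(\nomi_0)$ and $V(\nomi_1)$ are singletons, so each inequality in the quasi-inequality is governed by a single world.

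For the forward direction I would assume $\mathbb{F}\Vdash\phi_i\leq\psi_i$ and fix $V,g$ witnessing the two antecedents; let $w_0$ be the unique world in $V(\nomi_0)$. From $(\mathbb{F},V),g\Vdash\nomi_0\leq\phi_i$ I obtain $(\mathbb{F},V),g,w_0\Vdash\phi_i$; the hypothesis then yields $(\mathbb{F},V),g,w_0\Vdash\psi_i$; and from $(\mathbb{F},V),g\Vdash\psi_i\leq\neg\nomi_1$ I conclude $(\mathbb{F},V),g,w_0\Vdash\neg\nomi_1$. Since $w_0$ is the only world satisfying $\nomi_0$, this gives $(\mathbb{F},V),g\Vdash\nomi_0\leq\neg\nomi_1$, as required. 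This direction is a routine chase through the definitions and does not use freshness.

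The harder direction is the converse, where freshness of $\nomi_0,\nomi_1$ is essential, and here I would argue by contraposition. Suppose $\mathbb{F}\not\Vdash\phi_i\leq\psi_i$, so there are $V,g$ and a world $w$ with $(\mathbb{F},V),g,w\Vdash\phi_i$ but $(\mathbb{F},V),g,w\not\Vdash\psi_i$. I would then define a valuation $V'$ agreeing with $V$ except that $V'(\nomi_0)=V'(\nomi_1)=\{w\}$. Because $\nomi_0$ and $\nomi_1$ do not occur in $\phi_i$ or $\psi_i$, the truth values of $\phi_i$ and $\psi_i$ at every world are unaffected in passing from $V$ to $V'$. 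The key verification is then that $V'$ makes both antecedents true while falsifying the conclusion: $(\mathbb{F},V'),g\Vdash\nomi_0\leq\phi_i$ holds since the only $\nomi_0$-world is $w$ and $w\Vdash\phi_i$; $(\mathbb{F},V'),g\Vdash\psi_i\leq\neg\nomi_1$ holds since the only $\nomi_1$-world is $w$ and $w\not\Vdash\psi_i$; yet $(\mathbb{F},V'),g\not\Vdash\nomi_0\leq\neg\nomi_1$ because $w$ satisfies both $\nomi_0$ and $\nomi_1$. Hence the quasi-inequality fails at $(\mathbb{F},V'),g$, contradicting (\ref{aCrct:Eqn2}). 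I expect the only real subtlety to be the bookkeeping that the nominal reassignment preserves the evaluation of $\phi_i$ and $\psi_i$, which is precisely the point at which the freshness hypothesis is invoked.
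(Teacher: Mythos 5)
Your proposal is correct and takes essentially the same approach as the paper's proof: the forward direction is the same definition chase (the paper phrases it as $(\mathbb{F},V),g,V(\nomi_1)\nVdash\psi_i$, hence $V(\nomi_0)\neq V(\nomi_1)$, while you apply $\psi_i\leq\neg\nomi_1$ directly at the $\nomi_0$-world, which is equivalent), and your backward direction mirrors the paper's exactly, arguing by contraposition with a modified valuation sending both fresh nominals to the refuting world $w$ and invoking freshness to preserve the truth values of $\phi_i$ and $\psi_i$. No gaps.
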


\begin{proof}
(\ref{aCrct:Eqn1}) $\Rightarrow$ (\ref{aCrct:Eqn2}): Suppose $\mathbb{F}\Vdash\phi_i\leq\psi_i$. Then for any valuation $V$ and any assignment $g$ on $\mathbb{F}$, if $(\mathbb{F},V),g\Vdash\nomi_{0}\leq\phi_{i}$ and $(\mathbb{F},V),g\Vdash\psi_{i}\leq\neg\nomi_{1}$, then $(\mathbb{F},V),g,V(\nomi_{0})\Vdash\phi_{i}$ and $(\mathbb{F},V),g,V(\nomi_{1})\nVdash\psi_{i}$, so by $\mathbb{F}\Vdash\phi_{i}\leq\psi_{i}$ we have $(\mathbb{F},V),g,V(\nomi_{0})\Vdash\psi_{i}$, so $V(\nomi_{0})\neq V(\nomi_{1})$, so $(\mathbb{F},V),g\Vdash\nomi_{0}\leq\neg\nomi_{1}$.

(\ref{aCrct:Eqn2}) $\Rightarrow$ (\ref{aCrct:Eqn1}): Suppose $\mathbb{F}\Vdash(\nomi_{0}\leq\phi_{i}\ \&\ \psi_{i}\leq\neg\nomi_{1})\Rightarrow\nomi_{0}\leq\neg\nomi_{1}$. Then if $\mathbb{F}\nVdash\phi_i\leq\psi_i$, there is a valuation $V$ and an assignment $g$ on $\mathbb{F}$ and a $w\in W$ such that $(\mathbb{F}, V),g, w\Vdash\phi_i$ and $(\mathbb{F}, V),g, w\nVdash\psi_i$. Then by taking $V^{\nomi_{0}, \nomi_{1}}_{w, w}$ to be the valuation which is the same as $V$ except that $V^{\nomi_{0}, \nomi_{1}}_{w, w}(\nomi_{0})=V^{\nomi_{0}, \nomi_{1}}_{w, w}(\nomi_{1})=\{w\}$, then since $\nomi_{0}, \nomi_{1}$ do not occur in $\phi_{i}$ and $\psi_{i}$, we have that $(\mathbb{F}, V^{\nomi_{0}, \nomi_{1}}_{w, w}),g, w\Vdash\phi_i$ and $(\mathbb{F}, V^{\nomi_{0}, \nomi_{1}}_{w, w}),g, w\nVdash\psi_i$, therefore $(\mathbb{F}, V^{\nomi_{0}, \nomi_{1}}_{w, w}),g\Vdash\nomi_{0}\leq\phi_i$ and $(\mathbb{F}, V^{\nomi_{0}, \nomi_{1}}_{w, w}),g\Vdash\psi_i\leq\neg\nomi_{1}$, by $\mathbb{F}\Vdash(\nomi_{0}\leq\phi_{i}\ \&\ \psi_{i}\leq\neg\nomi_{1})\Rightarrow\nomi_{0}\leq\neg\nomi_{1}$, we have that $(\mathbb{F}, V^{\nomi_{0}, \nomi_{1}}_{w, w}),g\Vdash\nomi_{0}\leq\neg\nomi_{1}$, so $(\mathbb{F}, V^{\nomi_{0}, \nomi_{1}}_{w, w}),g,w\Vdash\nomi_{0}$ implies that $(\mathbb{F}, V^{\nomi_{0}, \nomi_{1}}_{w, w}),g,w\Vdash\neg\nomi_{1}$, a contradiction. So $\mathbb{F}\Vdash\phi_i\leq\psi_i$.
\end{proof}

The next step is to show the soundness of each rule of Stage 2. For each rule, before the application of this rule we have a set of mega-inequalities $S$ (which we call the \emph{system}), after applying the rule we get a set of mega-inequalities $S'$, the soundness of Stage 2 is then the equivalence of the following two conditions:
\begin{itemize}
\item $\mathbb{F}\Vdash\bigamp S\ \Rightarrow \nomi_0\leq\neg\nomi_1$;

\item $\mathbb{F}\Vdash\bigamp S'\ \Rightarrow \nomi_0\leq\neg\nomi_1$;
\end{itemize}

where $\bigamp S$ denote the meta-conjunction of mega-inequalities of $S$. For substage 1 and substage 4, it suffices to show the following property:

\begin{itemize}\label{condition:1:4:equivalence}
\item For any frame $\mathbb{F}$, any valuation $V$ and any assignment $g$ on it, if $(\mathbb{F},V),g\Vdash S$, then there are valuation $V'$ and assignment $g'$ such that $V'(\nomi_0)=V(\nomi_0)$, $V'(\nomi_1)=V(\nomi_1)$ and $(\mathbb{F},V'),g'\Vdash S'$;
\item For any frame $\mathbb{F}$, any valuation $V'$ and any assignment $g'$ on it, if $(\mathbb{F},V'),g'\Vdash S'$, then there are valuation $V$ and assignment $g$ such that $V(\nomi_0)=V'(\nomi_0)$, $V(\nomi_1)=V'(\nomi_1)$ and $(\mathbb{F},V),g\Vdash S$.
\end{itemize}

For substage 2 and 3, due to the involvement of universal quantifiers, we use the following stronger property. Assume the premise of a rule is $T$, and the conclusions of the rule is $T'$, it suffices to show the following property:

$$(\mathbb{F},V),g\Vdash T\mbox{ iff }(\mathbb{F},V),g\Vdash T'.$$

\begin{proposition}\label{Prop:Substage:1}
The splitting rules, the approximation rules for $\Diamond,\Box,@,\downarrow,\to$, the residuation rules for $\neg$ in Substage 1 are sound in $\mathbb{F}$.
\end{proposition}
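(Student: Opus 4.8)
\emph{The plan.} For every rule of Substage~1 I must verify the two-way transfer property demanded for this substage: whenever $(\mathbb{F},V),g\Vdash S$ there exist $V',g'$ with $V'(\nomi_0)=V(\nomi_0)$, $V'(\nomi_1)=V(\nomi_1)$ and $(\mathbb{F},V'),g'\Vdash S'$, and symmetrically in the other direction. Since each rule touches a single (in)equality of the system and introduces at most two fresh nominals, I would keep $g'=g$ throughout and let $V'$ differ from $V$ only on the freshly introduced nominals; the freshness side-condition then guarantees that $V'$ and $V$ agree on $\nomi_0,\nomi_1$ and on the untouched part of $S$, so the required agreement is automatic and only the rewritten (in)equality has to be checked. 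I would organise the rules according to whether the rewriting is an equivalence under the \emph{same} $V,g$ or whether it genuinely fixes the value of a fresh nominal.

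\emph{Rules that are equivalences under the same valuation.} The splitting rules, the residuation rules for $\neg$, and the approximation rules for $@$ all fall here, and for them the property holds with $V'=V$, $g'=g$. For splitting, $(\mathbb{F},V),g\Vdash\alpha\leq\beta\wedge\gamma$ iff $(\mathbb{F},V),g\Vdash\alpha\leq\beta$ and $(\mathbb{F},V),g\Vdash\alpha\leq\gamma$ follows at once from the clauses for $\wedge$ and $\vee$ and the reading of $\leq$, and dually for $\vee$ on the left. For the $\neg$-rules, unfolding $\leq$ and the clause for $\neg$ gives, e.g., $(\mathbb{F},V),g\Vdash\nomi\leq\neg\alpha$ iff $V(\nomi)\nVdash\alpha$ iff $(\mathbb{F},V),g\Vdash\alpha\leq\neg\nomi$, and likewise for the other three. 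For $@$, the point is that $@_{\nomj}\alpha$ (resp.\ $@_{x}\alpha$) is evaluated at the fixed world $V(\nomj)$ (resp.\ $g(x)$) regardless of the current point; hence $(\mathbb{F},V),g\Vdash\nomi\leq @_{\nomj}\alpha$ is literally $V(\nomj)\Vdash\alpha$, i.e.\ $(\mathbb{F},V),g\Vdash\nomj\leq\alpha$, and the remaining $@$-rules (including the $\leq\neg\nomi$ and $@_x$ variants) unwind the same way.

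\emph{Approximation rules that fix a fresh nominal.} For the $\Diamond$-rule, from $(\mathbb{F},V),g\Vdash\nomi\leq\Diamond\alpha$ I extract a successor $v$ of $V(\nomi)$ with $(\mathbb{F},V),g,v\Vdash\alpha$ and put $V'(\nomj)=\{v\}$; freshness of $\nomj$ makes $V'$ coincide with $V$ on $\alpha$, so $(\mathbb{F},V'),g\Vdash\nomj\leq\alpha$ and $(\mathbb{F},V'),g\Vdash\nomi\leq\Diamond\nomj$ both hold. Conversely $\nomi\leq\Diamond\nomj$ produces a successor of $V'(\nomi)$ equal to $V'(\nomj)$, and $\nomj\leq\alpha$ then yields $\nomi\leq\Diamond\alpha$. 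The left-hand $\Box$-rule is dual, the fresh nominal being set to a successor of $V(\nomi)$ that \emph{refutes} $\alpha$, read off from $V(\nomi)\nVdash\Box\alpha$. The $x$-variants $x\leq\Diamond\alpha$ and $\Box\alpha\leq\neg x$ are handled identically, with the starting world $g(x)$ in place of $V(\nomi)$.

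\emph{The main obstacles: $\downarrow$ and $\to$.} For the $\downarrow$-rules the key is a substitution lemma, $\mathbb{M},g^{x}_{V(\nomi)},w\Vdash\alpha$ iff $\mathbb{M},g,w\Vdash\alpha[\nomi/x]$ (and its $y$-version with $g(y)$ in place of $V(\nomi)$), proved by a routine induction on $\alpha$ that respects the binding of $\downarrow$ and reflects that $x$ under $g^{x}_{V(\nomi)}$ and the nominal $\nomi$ denote the same world. Instantiating $w:=V(\nomi)$ and combining with the clause for $\downarrow x$ turns each $\downarrow$-rule into an equivalence under the same $V,g$, e.g.\ $(\mathbb{F},V),g\Vdash\nomi\leq\downarrow x.\alpha$ iff $(\mathbb{F},V),g\Vdash\nomi\leq\alpha[\nomi/x]$. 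For the $\to$-rule I first note that $(\mathbb{F},V),g\Vdash\alpha\rightarrow\beta\leq\neg\nomi$ forces $V(\nomi)\Vdash\alpha$ and $V(\nomi)\nVdash\beta$ (else $V(\nomi)$ would satisfy both $\alpha\rightarrow\beta$ and $\nomi$). In the forward direction I set $V'(\nomj)=V'(\nomk)=V(\nomi)$, which validates $\nomj\leq\alpha$, $\beta\leq\neg\nomk$ and $\nomj\rightarrow\neg\nomk\leq\neg\nomi$ at once; in the backward direction the conjunct $\nomj\rightarrow\neg\nomk\leq\neg\nomi$ forces $V(\nomj)=V(\nomk)=V(\nomi)$, after which $\nomj\leq\alpha$ and $\beta\leq\neg\nomk$ give $V(\nomi)\Vdash\alpha$ and $V(\nomi)\nVdash\beta$, hence $\alpha\rightarrow\beta\leq\neg\nomi$. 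The $\to$-rule with $\neg x$ on the right is identical with $g(x)$ replacing $V(\nomi)$. I expect the substitution lemma for $\downarrow$ and the double-nominal bookkeeping for $\to$ to be the only places needing genuine care.
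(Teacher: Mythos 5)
Your proposal is correct and takes essentially the same route as the paper: splitting, $\neg$-residuation, $@$- and $\downarrow$-approximation are treated as equivalences under the same $V,g$, while $\Diamond$, $\Box$ and $\to$ are handled by a fresh-nominal variant of the valuation exactly as you describe, including forcing $V(\nomj)=V(\nomk)=V(\nomi)$ from the singleton interpretation of nominals in the backward $\to$ direction. The only cosmetic difference is that you state the substitution lemma for $\downarrow$ explicitly, whereas the paper uses it silently inside its chain of equivalences.
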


\begin{proof}
For the splitting rules, the approximation rules for $\Diamond,\Box,\to$, the residuation rules for $\neg$ in Substage 1, their soundness proofs are given by Lemma \ref{Lemma:Splitting:substage12}, \ref{Lemma:approximation:white:substage1}, \ref{Lemma:Approximation:to:substage1} and \ref{Lemma:residuation:neg:substage12} below. The soundness of the approximation rules for $@$ and $\downarrow$ are proved in Lemma \ref{Lemma:Substage:1:at} and \ref{Lemma:Substage:1:downarrow}.
\end{proof}

\begin{lemma}\label{Lemma:Splitting:substage12}
The splitting rules in Substage 1 and Substage 2 are sound in $\mathbb{F}$.
\end{lemma}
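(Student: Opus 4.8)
The two splitting rules are identical in the two substages and both express that $\leq$ behaves as a meet-homomorphism in its right argument (splitting $\alpha\leq\beta\land\gamma$) and as a join-homomorphism in its left argument (splitting $\alpha\lor\beta\leq\gamma$). The plan is to first isolate the underlying pointwise semantic equivalences at a single model and assignment, and only then to lift these to the two distinct forms of soundness condition recorded on page~\pageref{condition:1:4:equivalence}: the system-level condition for Substage 1 and the $(\mathbb{F},V),g$-level condition for Substage 2.

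First I would record the pointwise equivalences. Fixing a Kripke frame $\mathbb{F}$, a valuation $V$ and an assignment $g$, unfolding the clause for an inequality together with the clause for $\land$ shows that $(\mathbb{F},V),g\Vdash\alpha\leq\beta\land\gamma$ holds iff every $w$ satisfying $\alpha$ satisfies both $\beta$ and $\gamma$; pushing the universal quantifier over $w$ through the conjunction yields exactly $(\mathbb{F},V),g\Vdash\alpha\leq\beta$ together with $(\mathbb{F},V),g\Vdash\alpha\leq\gamma$. Dually, the clause for $\lor$ gives that $(\mathbb{F},V),g\Vdash\alpha\lor\beta\leq\gamma$ holds iff every $w$ satisfying $\alpha$ or $\beta$ satisfies $\gamma$, which splits into $(\mathbb{F},V),g\Vdash\alpha\leq\gamma$ and $(\mathbb{F},V),g\Vdash\beta\leq\gamma$. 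These are the only calculations involved, and both are immediate from the satisfaction clauses; they are the pointwise refinements of the frame-level facts already used in Proposition~\ref{prop:Soundness:stage:1}.

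For Substage 1 the rule rewrites a single member of the system $S$ — replacing $\alpha\leq\beta\land\gamma$ by the pair $\alpha\leq\beta$, $\alpha\leq\gamma$, and symmetrically for the $\lor$-rule — while introducing no new nominals or state variables and leaving $\nomi_0,\nomi_1$ in place. I would therefore take $V'=V$ and $g'=g$ in both bullet conditions: the pointwise equivalence applied to the rewritten inequality, together with the fact that the remaining members of $S$ are left unchanged, yields $(\mathbb{F},V),g\Vdash S$ iff $(\mathbb{F},V),g\Vdash S'$, and hence both directions at once.

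For Substage 2 the rule may be applied in the scope of universal quantifiers, so that the premise $T$ and conclusion $T'$ are mega-inequalities agreeing everywhere except that one occurrence of $\alpha\leq\beta\land\gamma$ has become $\alpha\leq\beta\bigamp\alpha\leq\gamma$ beneath a context built from $\bigamp$ and $\forall y$. The main point requiring care, and essentially the only nonmechanical step, is the congruence under this context: the pointwise equivalences of the second paragraph must be kept quantified over \emph{all} valuations and assignments, so that they remain available at each $y$-variant $g^{y}_{v}$ produced when the clause for $\forall y$ is unfolded. Since the clauses for $\bigamp$ and for $\forall y$ determine the truth of a mega-inequality at $(\mathbb{F},V),g$ solely from the truth of its immediate constituents at $(\mathbb{F},V),g$ and at its $y$-variants, a straightforward induction on the surrounding context then lifts the local equivalence to $(\mathbb{F},V),g\Vdash T$ iff $(\mathbb{F},V),g\Vdash T'$, which is the required property and completes the argument.
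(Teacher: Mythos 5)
Your proof is correct and takes essentially the same route as the paper: both arguments rest on exactly the two pointwise equivalences $(\mathbb{F},V),g\Vdash\alpha\leq\beta\land\gamma$ iff $(\mathbb{F},V),g\Vdash\alpha\leq\beta$ and $(\mathbb{F},V),g\Vdash\alpha\leq\gamma$, and dually for $\lor$, obtained by unfolding the satisfaction clauses. The extra material you supply --- taking $V'=V$, $g'=g$ in the Substage~1 system-level condition and the congruence induction over $\bigamp$/$\forall y$ contexts for Substage~2 --- is a faithful elaboration of lifting steps the paper leaves implicit in its general soundness framework, not a different method.
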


\begin{proof}
For the soundness of the splitting rules, it follows from the fact that for any Kripke frame $\mathbb{F}=(W,R)$, any valuation $V$ and any assignment $g$ on $\mathbb{F}$, 
\begin{itemize}
\item $(\mathbb{F},V),g\Vdash\alpha\leq\beta\land\gamma\mbox{ iff }((\mathbb{F},V),g\Vdash\alpha\leq\beta\mbox{ and }(\mathbb{F},V),g\Vdash\alpha\leq\gamma),$
\item $(\mathbb{F},V),g\Vdash\alpha\lor\beta\leq\gamma\mbox{ iff }((\mathbb{F},V),g\Vdash\alpha\leq\gamma\mbox{ and }(\mathbb{F},V),g\Vdash\beta\leq\gamma).$
\end{itemize}
\end{proof}

\begin{lemma}\label{Lemma:approximation:white:substage1}
The approximation rules for $\Diamond,\Box$ in Substage 1 are sound in $\mathbb{F}$.
\end{lemma}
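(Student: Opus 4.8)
The plan is to verify, for each rule, the two-directional property required of substage-1 rules on page~\pageref{condition:1:4:equivalence}: writing $S$ for the system before the rule and $S'$ for the system after, I must produce, from a satisfying $(\mathbb{F},V),g$ of one system, a satisfying $(\mathbb{F},V'),g'$ of the other with $V'(\nomi_0)=V(\nomi_0)$ and $V'(\nomi_1)=V(\nomi_1)$. I would treat the $\Diamond$-rule and the $\Box$-rule separately; within each, the nominal version and the state-variable version succumb to the same argument.

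For the $\Diamond$-rule, write $S=S_0\cup\{\nomi\leq\Diamond\alpha\}$ and $S'=S_0\cup\{\nomj\leq\alpha,\ \nomi\leq\Diamond\nomj\}$, where $\nomj$ is the fresh nominal introduced by the rule, so $\nomj$ occurs neither in $S_0$ nor in $\alpha$. The key observation is that $(\mathbb{F},V),g\Vdash\nomi\leq\Diamond\alpha$ holds iff the unique point $w$ with $V(\nomi)=\{w\}$ satisfies $\Diamond\alpha$, i.e.\ iff there is $v$ with $Rwv$ and $(\mathbb{F},V),g,v\Vdash\alpha$. For the forward direction I take such a witness $v$ and define $V'$ to agree with $V$ except that $V'(\nomj)=\{v\}$, keeping $g'=g$; freshness of $\nomj$ guarantees that every member of $S_0$ retains its truth value and that $V'(\nomi_0)=V(\nomi_0)$, $V'(\nomi_1)=V(\nomi_1)$. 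Then $v$ witnesses both $\nomj\leq\alpha$ (it is the unique $\nomj$-point and satisfies $\alpha$) and $\nomi\leq\Diamond\nomj$ (as $Rwv$ and $v\Vdash\nomj$), so $(\mathbb{F},V'),g'\Vdash S'$. For the converse I set $V=V'$ and $g=g'$: from $\nomi\leq\Diamond\nomj$ I extract a successor $v$ of $w$ with $v\Vdash\nomj$, whence $v$ is the $\nomj$-point, and $\nomj\leq\alpha$ forces $v\Vdash\alpha$, so $w\Vdash\Diamond\alpha$ and $\nomi\leq\Diamond\alpha$ holds; the members of $S_0$ are unchanged.

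The $\Box$-rule is dual and is the part I expect to require the most care, since its right-hand sides are negated nominals. Here $S=S_0\cup\{\Box\alpha\leq\neg\nomi\}$ and $S'=S_0\cup\{\alpha\leq\neg\nomj,\ \Box\neg\nomj\leq\neg\nomi\}$. The analogue of the key observation is that $(\mathbb{F},V),g\Vdash\Box\alpha\leq\neg\nomi$ holds iff the $\nomi$-point $w$ does \emph{not} satisfy $\Box\alpha$, i.e.\ iff there is $v$ with $Rwv$ and $(\mathbb{F},V),g,v\nVdash\alpha$. For the forward direction I again pick such a $v$, set $V'(\nomj)=\{v\}$, and check that $\alpha\leq\neg\nomj$ holds (the $\nomj$-point $v$ fails $\alpha$) while $\Box\neg\nomj\leq\neg\nomi$ holds (since $Rwv$ and $v\Vdash\nomj$ witness $w\nVdash\Box\neg\nomj$). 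For the converse, $\Box\neg\nomj\leq\neg\nomi$ yields a successor $v$ of $w$ with $v\Vdash\nomj$, and $\alpha\leq\neg\nomj$ forces $v\nVdash\alpha$, so $w\nVdash\Box\alpha$, i.e.\ $\Box\alpha\leq\neg\nomi$ holds.

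Finally, the state-variable versions $x\leq\Diamond\alpha$ and $\Box\alpha\leq\neg x$ are handled verbatim, replacing ``the $\nomi$-point $V(\nomi)$'' by ``the point $g(x)$''; since the rules modify only the valuation of the fresh nominal $\nomj$ and leave the assignment $g$ untouched, the value $g(x)$ is preserved and the same witness construction applies. The only genuinely delicate points throughout are the freshness bookkeeping for $\nomj$—that redefining the valuation at $\nomj$ disturbs neither $S_0$ nor the pinned values at $\nomi_0,\nomi_1$—and reading the negated inequalities in the $\Box$-case as existential statements about a non-satisfying successor.
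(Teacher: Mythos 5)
Your proof is correct and takes essentially the same route as the paper's: extract the witnessing successor of the point named by $\nomi$ (or of $g(x)$), name it with the fresh nominal $\nomj$ via a valuation differing from $V$ only at $\nomj$, and use freshness to argue the converse by reading the new inequalities back as the existence of a (non-)satisfying successor. The paper spells out only the $\Diamond$/nominal case and declares the $\Box$ and state-variable cases similar, while you additionally make explicit the preservation of the rest of the system and of the pinned values at $\nomi_0,\nomi_1$ --- a slightly more careful write-up of the same argument, not a different one.
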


\begin{proof}
We prove for $\Diamond$ and nominals, the cases for $\Box$ and state variables are similar.

For the soundness of the approximation rule for $\Diamond$, it suffices to show that for any Kripke frame $\mathbb{F}=(W,R)$, any valuation $V$ and any assignment $g$ on $\mathbb{F}$, 
\begin{enumerate}
\item if $(\mathbb{F},V),g\Vdash\nomi\leq\Diamond\alpha$, then there is a valuation $V^{\nomj}$ such that $V^{\nomj}$ is the same as $V$ except $V^{\nomj}(\nomj)$, and $(\mathbb{F},V^{\nomj}),g\Vdash\nomi\leq\Diamond\nomj$ and $(\mathbb{F},V^{\nomj}),g\Vdash\nomj\leq\alpha$;
\item if $(\mathbb{F},V),g\Vdash\nomi\leq\Diamond\nomj$ and $(\mathbb{F},V),g\Vdash\nomj\leq\alpha$, then $(\mathbb{F},V),g\Vdash\nomi\leq\Diamond\alpha$.
\end{enumerate}
For item 1, if $(\mathbb{F},V),g\Vdash\nomi\leq\Diamond\alpha$, then $(\mathbb{F},V),g, V(\nomi)\Vdash\Diamond\alpha$, therefore there exists a $w\in W$ such that $(V(\nomi),w)\in R$ and $(\mathbb{F},V),g, w\Vdash\alpha$. Now take $V^{\nomj}$ such that $V^{\nomj}$ is the same as $V$ except that $V^{\nomj}(\nomj)=\{w\}$, then $(V^{\nomj}(\nomi), V^{\nomj}(\nomj))\in R$, so $(\mathbb{F},V^{\nomj}),g\Vdash\nomi\leq\Diamond\nomj$ and $(\mathbb{F},V^{\nomj}),g\Vdash\nomj\leq\alpha$.

For item 2, suppose $(\mathbb{F},V),g\Vdash\nomi\leq\Diamond\nomj$ and $(\mathbb{F},V),g\Vdash\nomj\leq\alpha$. Then $(V(\nomi), V(\nomj))\in R$ and $(\mathbb{F},V),g,V(\nomj)\Vdash\alpha$, so $(\mathbb{F},V),g,V(\nomi)\Vdash\Diamond\alpha$, therefore $(\mathbb{F},V),g\Vdash\nomi\leq\Diamond\alpha$.
\end{proof}

\begin{lemma}\label{Lemma:Approximation:to:substage1}
The approximation rule for $\to$ in Substage 1 is sound in $\mathbb{F}$.
\end{lemma}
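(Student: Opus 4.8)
The plan is to establish the two-directional ``exists-valuation'' property that governs soundness of the Substage~1 rules (page~\pageref{condition:1:4:equivalence}). Write $S$ for the system before the rule and $S'$ for the system afterwards; the rule replaces the single inequality $\alpha\rightarrow\beta\leq\neg\nomi$ by the three inequalities $\nomj\leq\alpha$, $\beta\leq\neg\nomk$ and $\nomj\rightarrow\neg\nomk\leq\neg\nomi$, where $\nomj,\nomk$ are fresh nominals not occurring in $S$. Since $\nomj$ and $\nomk$ occur nowhere else, every other member of the system, as well as the pinned nominals $\nomi_0,\nomi_1$, will be left untouched by the valuation I build; hence it suffices to prove the local equivalence about the one affected inequality and transport it to the whole system by keeping the valuation fixed on all symbols other than $\nomj,\nomk$.

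The key observation I would record first is a reformulation of inequalities whose right-hand side is the negation of a nominal: for any formula $\gamma$ and any nominal $\cnomn$, $(\mathbb{F},V),g\Vdash\gamma\leq\neg\cnomn$ holds iff $(\mathbb{F},V),g,V(\cnomn)\nVdash\gamma$, since the only world refuting $\neg\cnomn$ is $V(\cnomn)$. Applying this to $\alpha\rightarrow\beta\leq\neg\nomi$ and unfolding the clause for $\to$, the inequality is equivalent to the conjunction of $(\mathbb{F},V),g,V(\nomi)\Vdash\alpha$ and $(\mathbb{F},V),g,V(\nomi)\nVdash\beta$; that is, it pins a single world $V(\nomi)$ at which $\alpha$ is true and $\beta$ is false. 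Similarly, $\nomj\rightarrow\neg\nomk\leq\neg\nomi$ is equivalent to $V(\nomj)=V(\nomi)=V(\nomk)$, while $\nomj\leq\alpha$ and $\beta\leq\neg\nomk$ say, respectively, that $\alpha$ holds at $V(\nomj)$ and $\beta$ fails at $V(\nomk)$.

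For the forward direction, given $(\mathbb{F},V),g\Vdash\alpha\rightarrow\beta\leq\neg\nomi$, I read off $V(\nomi)\Vdash\alpha$ and $V(\nomi)\nVdash\beta$, and then define $V'$ to agree with $V$ except that $V'(\nomj)=V'(\nomk)=V(\nomi)$. Because $\nomj,\nomk$ do not occur in $\alpha$ or $\beta$, the truth values of $\alpha$ and $\beta$ at every world are unchanged, so $\nomj\leq\alpha$, $\beta\leq\neg\nomk$ and $\nomj\rightarrow\neg\nomk\leq\neg\nomi$ all hold under $V'$, and $V'$ still satisfies the rest of $S$ with $V'(\nomi_0)=V(\nomi_0)$ and $V'(\nomi_1)=V(\nomi_1)$. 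For the converse, if the three new inequalities hold under some $V'$, the reformulation yields $V'(\nomj)=V'(\nomi)=V'(\nomk)$ together with $\alpha$ true at $V'(\nomj)$ and $\beta$ false at $V'(\nomk)$; substituting the equalities gives $\alpha$ true and $\beta$ false at $V'(\nomi)$, i.e.\ $\alpha\rightarrow\beta\leq\neg\nomi$ under $V'$, and taking $V=V'$ returns a model of $S$. The state-variable variant $\alpha\rightarrow\beta\leq\neg x$ is identical, with the designated world $g(x)$ in place of $V(\nomi)$ throughout.

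Finally, the main thing to watch. There is no deep obstacle: the content is entirely the contrapositive reading of $\gamma\leq\neg\cnomn$ and the truth clause for $\to$. The only care needed is the freshness bookkeeping --- checking that modifying the valuation solely on $\nomj,\nomk$ neither alters the satisfaction of any formula appearing in $S$ (as those formulas do not mention $\nomj,\nomk$) nor disturbs $\nomi_0,\nomi_1$ --- so that the local equivalence upgrades to the required statement about the whole system.
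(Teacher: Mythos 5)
Your proof is correct and takes essentially the same approach as the paper's: both rest on reading $\gamma\leq\neg\cnomn$ as failure of $\gamma$ at $V(\cnomn)$, construct the witnessing valuation by setting $V'(\nomj)=V'(\nomk)=V(\nomi)$ with freshness guaranteeing nothing else changes, and use the singleton interpretation of nominals to extract $V(\nomj)=V(\nomi)=V(\nomk)$ from $\nomj\to\neg\nomk\leq\neg\nomi$ in the converse direction. Your explicit bookkeeping that the rest of the system and $\nomi_0,\nomi_1$ are untouched is a detail the paper leaves implicit, but the argument is the same.
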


\begin{proof}
We only prove it for the nominal version, the state variable version is similar. 

For the soundness of the approximation rule for $\to$, it suffices to show that for any Kripke frame $\mathbb{F}=(W,R)$, any valuation $V$ and any assignment $g$ on $\mathbb{F}$, 
\begin{enumerate}
\item if $(\mathbb{F},V),g\Vdash\alpha\to\beta\leq\neg\nomi$, then there is a valuation $V^{\nomj,\nomk}$ such that $V^{\nomj,\nomk}$ is the same as $V$ except $V^{\nomj,\nomk}(\nomj)$ and $V^{\nomj,\nomk}(\nomk)$, and $(\mathbb{F},V^{\nomj,\nomk}),g\Vdash\nomj\leq\alpha$, $(\mathbb{F},V^{\nomj,\nomk}),g\Vdash\beta\leq\neg \nomk$ and $(\mathbb{F},V^{\nomj,\nomk}),g\Vdash\nomj\to\neg\nomk\leq\neg\nomi$;
\item if $(\mathbb{F},V),g\Vdash\nomj\leq\alpha$, $(\mathbb{F},V),g\Vdash\beta\leq\neg \nomk$ and $(\mathbb{F},V),g\Vdash\nomj\to\neg\nomk\leq\neg\nomi$, then $(\mathbb{F},V),g\Vdash\alpha\to\beta\leq\neg\nomi$.
\end{enumerate}

For item 1, if $(\mathbb{F},V),g\Vdash\alpha\to\beta\leq\neg\nomi$, then $(\mathbb{F},V),g, V(\nomi)\Vdash\alpha$ and $(\mathbb{F},V),g, V(\nomi)\Vdash\neg\beta$. Now take $V^{\nomj,\nomk}$ such that $V^{\nomj,\nomk}$ is the same as $V$ except that $V^{\nomj,\nomk}(\nomj)=V^{\nomj,\nomk}(\nomk)=V(\nomi)$, we have that $(\mathbb{F},V^{\nomj,\nomk}),g, V^{\nomj,\nomk}(\nomj)\Vdash\alpha$ and $(\mathbb{F},V^{\nomj,\nomk}),g, V^{\nomj,\nomk}(\nomk)\Vdash\neg\beta$, so $(\mathbb{F},V^{\nomj,\nomk}),g\Vdash\nomj\leq\alpha$, $(\mathbb{F},V^{\nomj,\nomk}),g\Vdash\beta\leq\neg \nomk$. Since $V^{\nomj,\nomk}(\nomj)=V^{\nomj,\nomk}(\nomk)=V^{\nomj,\nomk}(\nomi)=V(\nomi)$, it is easy to see that $V^{\nomj,\nomk}(\nomj\to\neg\nomk)=V^{\nomj,\nomk}(\neg\nomi)$, so $(\mathbb{F},V^{\nomj,\nomk}),g\Vdash\nomj\to\neg\nomk\leq\neg\nomi$.

For item 2, if $(\mathbb{F},V),g\Vdash\nomj\leq\alpha$, $(\mathbb{F},V),g\Vdash\beta\leq\neg\nomk$ and $(\mathbb{F},V),g\Vdash\nomj\to\neg\nomk\leq\neg\nomi$, then $V(\nomj\to\neg\nomk)\subseteq V(\neg\nomi)$, so $V(\nomi)\subseteq V(\nomj\land\nomk)$, since $\nomi,\nomj,\nomk$ are nominals, there interpretations are singletons, so $V(\nomi)=V(\nomj)=V(\nomk)$. Now from $(\mathbb{F},V),g\Vdash\nomj\leq\alpha$ we have that $(\mathbb{F},V),g, V(\nomj)\Vdash\alpha$, and from $(\mathbb{F},V),g\Vdash\beta\leq\neg\nomk$ we have that $(\mathbb{F},V),g, V(\nomk)\Vdash\neg\beta$, so $(\mathbb{F},V),g, V(\nomi)\Vdash\alpha$ and $(\mathbb{F},V),g, V(\nomi)\Vdash\neg\beta$, so $(\mathbb{F},V),g\Vdash\alpha\to\beta\leq\neg\nomi$.
\end{proof}

\begin{lemma}\label{Lemma:residuation:neg:substage12}
The residuation rules for $\neg$ in Substage 1 and 2 are sound in $\mathbb{F}$.
\end{lemma}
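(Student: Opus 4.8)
The plan is to reduce the soundness of every negation residuation rule to three elementary facts about complements of truth sets, and then feed these into the general soundness criteria already isolated for Substage~1 and Substage~2.

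First I would note that none of these rules introduces a fresh nominal or state variable: each simply rewrites one inequality of the system into another, leaving the rest of the system untouched. Hence for the Substage~1 rules I may take $V'=V$ and $g'=g$ in the criterion recorded above, so that it suffices to check, for a fixed $\mathbb{F}$, valuation $V$ and assignment $g$, that the premise inequality and the conclusion inequality hold together under $(\mathbb{F},V),g$; and for the Substage~2 rules the required property $(\mathbb{F},V),g\Vdash T$ iff $(\mathbb{F},V),g\Vdash T'$ is exactly this same equivalence. Throughout I write $\llbracket\chi\rrbracket$ for the truth set of $\chi$ in $(\mathbb{F},V),g$ and use $\llbracket\neg\chi\rrbracket = W\setminus\llbracket\chi\rrbracket$ together with the fact that $(\mathbb{F},V),g\Vdash\chi\leq\theta$ holds iff $\llbracket\chi\rrbracket\subseteq\llbracket\theta\rrbracket$.

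The core of the argument is the verification of three equivalences, valid for arbitrary formulas $\phi,\psi$ and any $(\mathbb{F},V),g$: that $\phi\leq\neg\psi$ iff $\psi\leq\neg\phi$, since both sides assert $\llbracket\phi\rrbracket\cap\llbracket\psi\rrbracket=\emptyset$; that $\neg\phi\leq\psi$ iff $\neg\psi\leq\phi$, since both sides assert $\llbracket\phi\rrbracket\cup\llbracket\psi\rrbracket=W$; and that $\neg\phi\leq\neg\psi$ iff $\psi\leq\phi$, obtained by taking complements in the defining inclusion. Each rule of the lemma is then a direct instance. The first Substage~1 rule, relating $\nomi\leq\neg\alpha$ to $\alpha\leq\neg\nomi$, and the first Substage~2 rule, relating $\alpha\leq\neg\beta$ to $\beta\leq\neg\alpha$, instantiate the first equivalence; the second Substage~1 rule, relating $\neg\alpha\leq\neg\nomi$ to $\nomi\leq\alpha$, instantiates the third; and the second Substage~2 rule, relating $\neg\alpha\leq\beta$ to $\neg\beta\leq\alpha$, instantiates the second. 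The state-variable versions of the Substage~1 rules are obtained verbatim by putting $x$ in place of $\nomi$, since the three equivalences hold for arbitrary $\phi,\psi$.

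I expect no real obstacle; the only point deserving a remark is that in Substage~2 these rules are permitted to fire inside the scope of universal quantifiers. This is harmless: the three equivalences are established for an arbitrary assignment $g$, hence in particular for every $y$-variant of $g$, and since the interpretation of $\forall y$ and of the meta-conjunction $\bigamp$ is defined pointwise over such variants, the equivalence of the affected subinequality lifts to the equivalence of the surrounding mega-inequalities. This yields $(\mathbb{F},V),g\Vdash T$ iff $(\mathbb{F},V),g\Vdash T'$ in full generality and completes the proof.
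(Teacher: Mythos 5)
Your proof is correct and follows essentially the same route as the paper: the paper also reduces soundness to the pointwise contraposition equivalence $(\mathbb{F},V),g\Vdash\alpha\leq\neg\beta$ iff $(\mathbb{F},V),g\Vdash\beta\leq\neg\alpha$, verified for an arbitrary frame, valuation and assignment, treating the Substage~1 rules as special cases (modulo double negation) of the Substage~2 rules. Your version is just slightly more explicit, spelling out all three complement equivalences and the lift through $\forall y$ and $\bigamp$, where the paper proves one case and declares the rest similar.
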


\begin{proof}
It is easy to see that the residuation rules for $\neg$ in Substage 1 are special cases of the residuation rules for $\neg$ in Substage 2 (modulo double negation elimination). Now we only prove it for the residuation rule in Substage 2 where negation symbols occur on the right-hand side of the inequalities, the other rule is similar. 

For the soundness of the residuation rule for $\neg$, it suffices to show that for any Kripke frame $\mathbb{F}=(W,R)$, any valuation $V$ and any assignment $g$ on $\mathbb{F}$, $(\mathbb{F},V),g\Vdash\alpha\leq\neg\beta$ iff $(\mathbb{F},V),g\Vdash\beta\leq\neg\alpha$. Indeed, it follows from the following equivalence:
\begin{center}
\begin{tabular}{c l}
& $(\mathbb{F},V),g\Vdash\alpha\leq\neg\beta$\\
iff & for all $w\in W$, if $(\mathbb{F},V),g,w\Vdash\alpha$, then $(\mathbb{F},V),g,w\nVdash\beta$\\
iff & for all $w\in W$, if $(\mathbb{F},V),g,w\Vdash\beta$, then $(\mathbb{F},V),g,w\nVdash\alpha$\\
iff & $(\mathbb{F},V),g\Vdash\beta\leq\neg\alpha$.\\
\end{tabular}
\end{center}
\end{proof}

\begin{lemma}\label{Lemma:Substage:1:at}
The approximation rules for $@$ in Substage 1 are sound in $\mathbb{F}$.
\end{lemma}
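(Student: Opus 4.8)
The plan is to prove each of the eight $@$-approximation rules of Substage~1 by establishing, with the \emph{same} valuation and assignment, the equivalence of the inequality above the line and the inequality below it. Since none of these rules introduces a fresh nominal or state variable, the two-part condition required for Substage-1 soundness (the displayed conditions on page~\pageref{condition:1:4:equivalence}) is witnessed trivially by taking $V'=V$ and $g'=g$; hence it suffices to show that for every frame $\mathbb{F}=(W,R)$, every valuation $V$ and every assignment $g$, the premise holds in $(\mathbb{F},V),g$ iff the conclusion does.

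First I would record two elementary semantic facts that do all the work. (i)~The truth value of $@_{\nomj}\beta$ (resp.\ $@_{x}\beta$) at a world is \emph{independent of that world}: by the satisfaction clause, $(\mathbb{F},V),g,w\Vdash @_{\nomj}\beta$ iff $(\mathbb{F},V),g,V(\nomj)\Vdash\beta$, and $(\mathbb{F},V),g,w\Vdash @_{x}\beta$ iff $(\mathbb{F},V),g,g(x)\Vdash\beta$, in either case with no dependence on $w$. (ii)~A nominal $\nomi$ (resp.\ a state variable $x$) is satisfied at exactly one world, namely $V(\nomi)$ (resp.\ $g(x)$); consequently $(\mathbb{F},V),g\Vdash\nomi\leq\gamma$ iff $(\mathbb{F},V),g,V(\nomi)\Vdash\gamma$, and $(\mathbb{F},V),g\Vdash\gamma\leq\neg\nomi$ iff $(\mathbb{F},V),g,V(\nomi)\nVdash\gamma$, with the analogous statements for $x$ and $g(x)$.

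Then I would carry out the reductions. For the rules with a positive left-hand side, e.g.\ $\nomi\leq @_{\nomj}\alpha$ reducing to $\nomj\leq\alpha$, fact~(ii) turns the premise into $(\mathbb{F},V),g,V(\nomi)\Vdash @_{\nomj}\alpha$, and fact~(i) rewrites this as $(\mathbb{F},V),g,V(\nomj)\Vdash\alpha$, which by~(ii) again is exactly $(\mathbb{F},V),g\Vdash\nomj\leq\alpha$; the versions with $x\leq @_{\nomj}\alpha$, $\nomi\leq @_{x}\alpha$, $y\leq @_{x}\alpha$ are identical after substituting $g(x)$ or $g(y)$ for the relevant point. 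For the rules with a negated right-hand side, e.g.\ $@_{\nomj}\alpha\leq\neg\nomi$ reducing to $\alpha\leq\neg\nomj$, fact~(i) shows $@_{\nomj}\alpha$ is true either everywhere or nowhere; so the inequality $@_{\nomj}\alpha\leq\neg\nomi$ holds iff $@_{\nomj}\alpha$ is false everywhere, i.e.\ iff $(\mathbb{F},V),g,V(\nomj)\nVdash\alpha$, which by~(ii) is precisely $(\mathbb{F},V),g\Vdash\alpha\leq\neg\nomj$; the $\neg x$ and $@_{x}$ variants go through the same way.

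The only point requiring a moment's care---and the closest thing to an obstacle---is the negated-right-hand-side case: one must notice that because $@_{\nomj}\alpha$ has a world-independent truth value, the seemingly one-world condition $@_{\nomj}\alpha\leq\neg\nomi$ in fact collapses to the global statement that $\alpha$ fails at $V(\nomj)$, and that the outer $\nomi$ (or $x$) plays no role in the conclusion. Everything else is routine unwinding of the satisfaction clauses.
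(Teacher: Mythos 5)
Your proof is correct and takes essentially the same approach as the paper's: both establish, with the \emph{same} valuation and assignment, the equivalence of the inequality above and below the line by unwinding the satisfaction clauses for $@$ and the singleton semantics of nominals/state variables (the paper writes out the representative case $\nomi\leq @_{\nomj}\alpha$ versus $\nomj\leq\alpha$ and declares the remaining variants similar, exactly the cases you treat via your facts (i) and (ii)). Your opening observation that these rules introduce no fresh symbols, so $V'=V$ and $g'=g$ witness the Substage-1 soundness condition, makes explicit a point the paper leaves tacit, but the substance is identical.
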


\begin{proof}
Since for $@_{\nomj}$ and $@_{x}$, for $\nomi$ and $y$, the proofs are essentially the same, we only prove it for $@_{\nomj}$ and $\nomi$. 

For the left approximation rule for $@_{\nomj}$, it suffices to show that for any Kripke model $\mathbb{M}=(W,R,V)$, any assignment $g$ on $\mathbb{M}$, 

\begin{enumerate}
\item if $\mathbb{M},g\Vdash \nomi\leq @_{\nomj}\alpha$, then $\mathbb{M},g\Vdash \nomj\leq\alpha$;
\item if $\mathbb{M},g\Vdash \nomj\leq\alpha$, then $\mathbb{M},g\Vdash \nomi\leq @_{\nomj}\alpha$.
\end{enumerate}

For item 1, if $\mathbb{M},g\Vdash \nomi\leq @_{\nomj}\alpha$, then $\mathbb{M},g, V(\nomi)\Vdash @_{\nomj}\alpha$, therefore $\mathbb{M},g, V(\nomj)\Vdash \alpha$, thus $\mathbb{M},g\Vdash \nomj\leq\alpha$.

For item 2, if $\mathbb{M},g\Vdash \nomj\leq\alpha$, then $\mathbb{M},g,V(\nomj)\Vdash\alpha$, so $\mathbb{M},g\Vdash @_{\nomj}\alpha$, therefore $\mathbb{M},g,V(\nomi)\Vdash @_{\nomj}\alpha$, thus 
$\mathbb{M},g\Vdash \nomi\leq @_{\nomj}\alpha$.

The right approximation rule for $@_{\nomj}$ is similar.
\end{proof}

\begin{lemma}\label{Lemma:Substage:1:downarrow}
The approximation rules for $\downarrow$ in Substage 1 are sound in $\mathbb{F}$.
\end{lemma}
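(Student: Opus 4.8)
The rules at issue are
$$\infer{\nomi\leq\alpha[\nomi/x]}{\nomi\leq \downarrow x.\alpha}\qquad\infer{y\leq\alpha[y/x]}{y\leq \downarrow x.\alpha}\qquad\infer{\alpha[\nomi/x]\leq\neg\nomi}{\downarrow x.\alpha\leq\neg\nomi}\qquad\infer{\alpha[y/x]\leq\neg y}{\downarrow x.\alpha\leq\neg y}.$$
Since none of them introduces a nominal or state variable not already occurring in the system, to establish the property on page \pageref{condition:1:4:equivalence} it suffices to take $V'=V$ and $g'=g$ and prove, for each rule, the equivalence of its premise inequality and its conclusion inequality under the \emph{same} $\mathbb{F}$, $V$ and $g$. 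The whole proof rests on a substitution lemma matching the semantic effect of $\downarrow x$ (which, evaluated at the world named by $\nomi$, resets the value of $x$ to that world) with the syntactic substitution $[\nomi/x]$.

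First I would prove this lemma: for every model $\mathbb{M}=(W,R,V)$, assignment $g$, world $w$ and formula $\alpha$, writing $V(\nomi)=\{v\}$,
$$\mathbb{M},g,w\Vdash\alpha[\nomi/x]\ \mbox{ iff }\ \mathbb{M},g^{x}_{v},w\Vdash\alpha,$$
and, for every state variable $y\neq x$,
$$\mathbb{M},g,w\Vdash\alpha[y/x]\ \mbox{ iff }\ \mathbb{M},g^{x}_{g(y)},w\Vdash\alpha,$$
where $[\nomi/x]$ and $[y/x]$ denote capture-avoiding substitution of the free occurrences of $x$. The proof is by induction on $\alpha$. The decisive base case is $\alpha=x$: there $x[\nomi/x]=\nomi$, and $\mathbb{M},g,w\Vdash\nomi$ iff $w=v$, while $\mathbb{M},g^{x}_{v},w\Vdash x$ iff $g^{x}_{v}(x)=v=w$, so the sides agree; the cases $\alpha\in\{p,\bot,\top,\nomj\}$ and $\alpha=z$ with $z\neq x$ are immediate since the truth value is independent of the value assigned to $x$. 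The Boolean and ordinary modal connectives ($\neg,\land,\lor,\to,\Box,\Diamond,\blacksquare,\Diamondblack,\mathsf{A},\mathsf{E}$) are handled by pushing the substitution through and invoking the induction hypothesis at the relevant worlds.

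The delicate cases are $@_{z}$ and the binder/quantifiers $\downarrow z,\forall z,\exists z$. When $z=x$ the substitution does not descend past the re-binding of $x$, and semantically the inner variant of $x$ overrides the outer one, so both sides collapse to the same statement; when $z\neq x$ one commutes the two variants via $(g^{x}_{v})^{z}_{u}=(g^{z}_{u})^{x}_{v}$ and applies the induction hypothesis with $g^{z}_{u}$ in place of $g$ (for $@_{x}$ the index itself becomes $\nomi$, and the identity $g^{x}_{v}(x)=v=V(\nomi)$ together with the hypothesis at world $v$ closes the case). The state-variable form is entirely parallel with $g(y)$ in the role of $v$. Granting the lemma, the first rule follows because $\nomi\leq\downarrow x.\alpha$ holds at $(\mathbb{F},V),g$ iff $\mathbb{M},g,v\Vdash\downarrow x.\alpha$, iff $\mathbb{M},g^{x}_{v},v\Vdash\alpha$, iff $\mathbb{M},g,v\Vdash\alpha[\nomi/x]$, iff $(\mathbb{F},V),g\Vdash\nomi\leq\alpha[\nomi/x]$; the third rule follows because $\downarrow x.\alpha\leq\neg\nomi$ holds iff $\mathbb{M},g,v\nVdash\downarrow x.\alpha$, iff $\mathbb{M},g,v\nVdash\alpha[\nomi/x]$, iff $(\mathbb{F},V),g\Vdash\alpha[\nomi/x]\leq\neg\nomi$; and the two remaining rules are obtained identically from the state-variable form of the lemma with $g(y)$ replacing $v$.

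I expect the substitution lemma, and specifically its binder/quantifier cases, to be the main obstacle: one must ensure that $[\nomi/x]$ and $[y/x]$ are genuinely capture-avoiding and that the $z=x$ versus $z\neq x$ distinction is treated correctly, since this is precisely where the interplay between the syntactic substitution and the assignment variants can break down. Everything else reduces to a routine structural induction and an unfolding of the interpretation of $\leq$ using the fact that $V(\nomi)$ (resp.\ $g(y)$) is a single world.
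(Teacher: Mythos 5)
Your proof is correct and follows essentially the same route as the paper's: both reduce each rule to the chain of semantic equivalences $\mathbb{M},g\Vdash \nomi\leq\downarrow x.\alpha$ iff $\mathbb{M},g,V(\nomi)\Vdash\downarrow x.\alpha$ iff $\mathbb{M},g^{x}_{V(\nomi)},V(\nomi)\Vdash\alpha$ iff $\mathbb{M},g,V(\nomi)\Vdash\alpha[\nomi/x]$, with the state-variable rules handled identically using $g(y)$ in place of $V(\nomi)$. The only difference is one of rigor: you isolate and prove by induction the substitution lemma (including its capture-avoiding proviso for nested binders) that the paper invokes tacitly in the single step from $\mathbb{M},g^{x}_{V(\nomi)},V(\nomi)\Vdash\alpha$ to $\mathbb{M},g^{x}_{V(\nomi)},V(\nomi)\Vdash\alpha[\nomi/x]$.
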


\begin{proof}
We prove it for the left approximation rule, the right rule is similar.

For the left approximation rule for $\downarrow x$, it suffices to show that for any Kripke model $\mathbb{M}=(W,R,V)$, any assignment $g$ on $\mathbb{M}$, 

$$\mathbb{M},g\Vdash \nomi\leq \downarrow x.\alpha\mbox{ iff }\mathbb{M},g\Vdash \nomi\leq\alpha[\nomi/x].$$
Indeed, 

\begin{center}
\begin{tabular}{c l}
& $\mathbb{M},g\Vdash \nomi\leq \downarrow x.\alpha$\\
iff & $\mathbb{M},g,V(\nomi)\Vdash \downarrow x.\alpha$\\
iff & $\mathbb{M},g^{x}_{V(\nomi)},V(\nomi)\Vdash\alpha$\\
iff & $\mathbb{M},g^{x}_{V(\nomi)},V(\nomi)\Vdash\alpha[\nomi/x]$\\
iff & $\mathbb{M},g,V(\nomi)\Vdash\alpha[\nomi/x]$\\
iff & $\mathbb{M},g\Vdash\nomi\leq\alpha[\nomi/x]$.
\end{tabular}
\end{center}
\end{proof}

\begin{proposition}\label{Prop:Substage:2}
The splitting rules, the residuation rules for $\neg,\Diamond,\Box,@,\downarrow$, the second splitting rule in Substage 2 are sound in $\mathbb{F}$.
\end{proposition}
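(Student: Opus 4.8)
The plan is to prove the proposition rule by rule, in each case establishing the biconditional $(\mathbb{F},V),g\Vdash T$ iff $(\mathbb{F},V),g\Vdash T'$ demanded by the Substage 2 soundness criterion, where $T$ is the premise (input) of the rule and $T'$ its conclusion(s) (output). The splitting rules and the residuation rules for $\neg$ require no new work, since they are already covered by Lemma \ref{Lemma:Splitting:substage12} and Lemma \ref{Lemma:residuation:neg:substage12}. Hence the substance of the argument concerns the residuation rules for $\Diamond,\Box,@,\downarrow$ and the second splitting rule, and for each I would simply unfold the satisfaction clauses of the expanded language and reduce both sides to the same first-order condition on $(\mathbb{F},V,g)$.

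For the modal residuation rules I would use that $\blacksquare,\Diamondblack$ are interpreted on $R^{-1}$. Unfolding $\Diamond\alpha\leq\beta$ gives ``for all $w$, if some $R$-successor $v$ of $w$ satisfies $\alpha$ then $w$ satisfies $\beta$'', which rearranges to ``for all $v$, if $v$ satisfies $\alpha$ then every $R$-predecessor of $v$ satisfies $\beta$'', i.e.\ $\alpha\leq\blacksquare\beta$; the rule for $\Box,\Diamondblack$ is dual. For the $@$-rules the key observation is that $@_{\nomj}\beta$ (and $@_{x}\beta$) has a world-independent truth value. Thus $\alpha\leq @_{\nomj}\beta$ says ``if $\alpha$ is satisfied at some world then $\beta$ holds at $V(\nomj)$'', and unfolding $\mathsf{E}\alpha\land\nomj\leq\beta$ yields the same statement, since $\mathsf{E}\alpha\land\nomj$ is satisfied at $w$ exactly when $w=V(\nomj)$ and $\alpha$ holds somewhere; the right rule is handled symmetrically, reading $\nomj\to\mathsf{A}\alpha$ as ``$\beta$ at $V(\nomj)$ forces $\mathsf{A}\alpha$'', and the $@_{x}$ versions are identical after replacing $V(\nomj)$ by $g(x)$.

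The residuation rules for $\downarrow$ are the main obstacle, and I would treat them via a substitution/coincidence lemma: since the variable $y$ introduced by the rule is fresh, $\mathbb{M},g^{y}_{v},w\Vdash\beta[y/x]$ iff $\mathbb{M},g^{x}_{v},w\Vdash\beta$ for every $v,w$. For the left rule, unfolding $\forall y(\mathsf{A}(y\to\alpha)\land y\leq\beta[y/x])$ under the $y$-variant $g^{y}_{v}$, the antecedent $\mathsf{A}(y\to\alpha)\land y$ is satisfied at a world $w$ precisely when $w=v$ and $\alpha$ holds at $v$; this collapses the quantified inequality to ``if $\alpha$ holds at $v$ then $\mathbb{M},g^{x}_{v},v\Vdash\beta$'', and the outer $\forall y$ turns this into ``for all $v$, $\alpha$ at $v$ implies $\downarrow x.\beta$ at $v$'', which is exactly $\alpha\leq\downarrow x.\beta$. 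For the right rule, the consequent $y\to\mathsf{E}(y\land\alpha)$ under $g^{y}_{v}$ is vacuously true at every world other than $v$ and reduces to $\alpha$-at-$v$ at the world $v$, so again only the world $v$ contributes and the quantified inequality collapses to ``$\downarrow x.\beta$ at $v$ implies $\alpha$ at $v$''. The delicate point throughout is aligning the $x$-variant hidden in the semantics of $\downarrow x$ with the $y$-variant produced by the fresh $\forall y$, and checking that the global modalities $\mathsf{A},\mathsf{E}$ isolate exactly the single relevant world.

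Finally, the second splitting rule is immediate: by the clause for $\forall x(\mathsf{Mega})$ and the clause for $\mathsf{Mega_1}\bigamp\mathsf{Mega_2}$, $(\mathbb{F},V),g\Vdash\forall x(\mathsf{Mega_1}\bigamp\mathsf{Mega_2})$ holds iff for every $w$ both $g^{x}_{w}$ satisfies $\mathsf{Mega_1}$ and satisfies $\mathsf{Mega_2}$, which is just the commutation of the universal quantifier with meta-conjunction and yields $\forall x(\mathsf{Mega_1})$ and $\forall x(\mathsf{Mega_2})$. Since each rule preserves pointwise satisfaction of the whole system by $(\mathbb{F},V),g$, the equivalence of ``$\mathbb{F}\Vdash\bigamp S\Rightarrow\nomi_0\leq\neg\nomi_1$'' before and after each rule follows, establishing the soundness of Substage 2.
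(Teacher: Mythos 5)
Your proposal is correct and takes essentially the same approach as the paper: a rule-by-rule verification of the pointwise biconditional $(\mathbb{F},V),g\Vdash T$ iff $(\mathbb{F},V),g\Vdash T'$, delegating the splitting and $\neg$-residuation rules to the earlier lemmas, unfolding $R^{-1}$ for $\Diamond,\Box$, using freshness of $y$ together with a substitution/coincidence step for $\downarrow$, and commuting $\forall x$ with meta-conjunction for the second splitting rule. The only divergence is cosmetic: where the paper derives the right-hand $@$ and $\downarrow$ rules from the left-hand ones by dualizing through negation (via $\neg @_{\nomj}\beta \equiv @_{\nomj}\neg\beta$ and $\neg\downarrow x.\beta \equiv \downarrow x.\neg\beta$) and handles $@$ by the equivalence $@_{\nomj}\beta\equiv\mathsf{A}(\nomj\to\beta)$, you verify each rule directly by observing that $\mathsf{A},\mathsf{E}$ and the fresh $y$ isolate the single relevant world --- an equally valid computation.
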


\begin{proof}
The soundness proofs of the splitting rules are given in Lemma \ref{Lemma:Splitting:substage12}. The soundness proofs of the residuation rules for $\neg$ are given in Lemma \ref{Lemma:residuation:neg:substage12}. The soundness proofs of the residuation rules for $\Diamond,\Box$ in Substage 2 are given in Lemma \ref{Lemma:Residuation:white:substage2}. The soundness of the residuation rules for $@$ and $\downarrow$ are proved in Lemma \ref{Lemma:at:substage2} and \ref{Lemma:downarrow:substage2}, and the soundness proof of the second splitting rule is in Lemma \ref{Lemma:2ndsplitting:substage2}. 
\end{proof}

\begin{lemma}\label{Lemma:Residuation:white:substage2}
The residuation rules for $\Diamond,\Box$ in Substage 2 are sound in $\mathbb{F}$. 
\end{lemma}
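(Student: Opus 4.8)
The plan is to establish, for each of the two residuation rules, the biconditional $(\mathbb{F},V),g\Vdash T$ iff $(\mathbb{F},V),g\Vdash T'$ demanded by the Substage 2 soundness criterion, by unfolding both the premise $T$ and the conclusion $T'$ into a single condition on the accessibility relation $R$ and observing that the two unfoldings coincide. Since the target is a biconditional holding at an \emph{arbitrary} $(\mathbb{F},V),g$, this is exactly what licenses applying these rules inside the scope of a universal quantifier $\forall y$: universal quantification over $y$-variants of $g$ preserves an equivalence that already holds for every assignment.

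First I would treat the rule transforming $\Diamond\alpha\leq\beta$ into $\alpha\leq\blacksquare\beta$. Unfolding the premise via the clause for $\leq$ and the semantics of $\Diamond$ gives: for every $w$, if there is $v$ with $Rwv$ and $(\mathbb{F},V),g,v\Vdash\alpha$, then $(\mathbb{F},V),g,w\Vdash\beta$. Moving the existential witness $v$ into the antecedent of the implication turns this into the purely universal statement that for all $w,v$, if $Rwv$ and $(\mathbb{F},V),g,v\Vdash\alpha$, then $(\mathbb{F},V),g,w\Vdash\beta$. On the other side, unfolding $\alpha\leq\blacksquare\beta$ using the semantics of $\blacksquare$ (the box read along $R^{-1}$) gives: for all $v$, if $(\mathbb{F},V),g,v\Vdash\alpha$, then for every $w$ with $Rwv$ we have $(\mathbb{F},V),g,w\Vdash\beta$, which is the same universal statement after reordering the quantifiers. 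Hence premise and conclusion are equivalent. I would present this as a short displayed chain of biconditionals, in the style of Lemma \ref{Lemma:residuation:neg:substage12}.

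The rule transforming $\alpha\leq\Box\beta$ into $\Diamondblack\alpha\leq\beta$ is handled symmetrically. Unfolding $\alpha\leq\Box\beta$ via the semantics of $\Box$ yields: for all $w,v$, if $(\mathbb{F},V),g,w\Vdash\alpha$ and $Rwv$, then $(\mathbb{F},V),g,v\Vdash\beta$; unfolding $\Diamondblack\alpha\leq\beta$ via the semantics of $\Diamondblack$ (the diamond along $R^{-1}$) and again pushing the existential witness into the antecedent yields precisely the same condition. Both reduce to one universally quantified implication over pairs $(w,v)$ with $Rwv$.

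There is no substantive obstacle here; the only point requiring care is the bookkeeping of the direction of $R$, since $\blacksquare$ and $\Diamondblack$ are interpreted along the converse relation rather than along $R$ itself. Concretely, one must consistently write $Rwv$ (not $Rvw$) when passing between a forward modality evaluated at $w$ and a converse modality evaluated at $v$. Getting this orientation right is exactly what makes $\Diamond$/$\blacksquare$ and $\Diamondblack$/$\Box$ genuine adjoint pairs, and that adjunction is the semantic content underlying both residuation rules.
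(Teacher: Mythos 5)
Your proposal is correct and takes essentially the same route as the paper: both establish the pointwise equivalence $(\mathbb{F},V),g\Vdash\Diamond\alpha\leq\beta$ iff $(\mathbb{F},V),g\Vdash\alpha\leq\blacksquare\beta$ (and dually for $\Box$/$\Diamondblack$) at an arbitrary model and assignment, which is the stronger condition needed for rules applied under universal quantifiers, and both rest on the adjunction between the forward modalities and the converse-relation modalities with the orientation of $R$ handled as you describe. The only difference is presentational: the paper argues the two implications separately, while you collapse both sides into one prenex universal condition over pairs with $Rwv$ --- an equally valid verification of the same fact.
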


\begin{proof}
We prove it for $\Diamond$, and the rule for $\Box$ is similar.

To show the soundness of the residuation rule for $\Diamond$ in Substage 2, it suffices to show that for any Kripke frame $\mathbb{F}=(W,R)$, any valuation $V$ and any assignment $g$ on $\mathbb{F}$, $(\mathbb{F},V),g\Vdash\Diamond\alpha\leq\beta$ iff $(\mathbb{F},V),g\Vdash\alpha\leq\blacksquare\beta$.

$\Rightarrow$: if $(\mathbb{F},V),g\Vdash\Diamond\alpha\leq\beta$, then for all $w\in W$, if $(\mathbb{F},V),g,w\Vdash\Diamond\alpha$, then $(\mathbb{F},V),g,w\Vdash\beta$. Our aim is to show that for all $v\in W$, if $(\mathbb{F},V),g,v\Vdash\alpha$, then $(\mathbb{F},V),g,v\Vdash\blacksquare\beta$.

Consider any $v\in W$ such that $(\mathbb{F},V),g,v\Vdash\alpha$. Then for any $u\in W$ such that $(u,v)\in R$, $(\mathbb{F},V),g,u\Vdash\Diamond\alpha$. Since $(\mathbb{F},V),g\Vdash\Diamond\alpha\leq\beta$, we have that $(\mathbb{F},V),g,u\Vdash\beta$, so for any $u\in W$ such that $(v,u)\in R^{-1}$, $(\mathbb{F},V),g,u\Vdash\beta$, so $(\mathbb{F},V),g,v\Vdash\blacksquare\beta$.

$\Leftarrow$: if $(\mathbb{F},V),g\Vdash\alpha\leq\blacksquare\beta$, then for all $w\in W$, if $(\mathbb{F},V),g,w\Vdash\alpha$, then $(\mathbb{F},V),g,w\Vdash\blacksquare\beta$. Our aim is to show that for all $v\in W$, if $(\mathbb{F},V),g,v\Vdash\Diamond\alpha$, then $(\mathbb{F},V),g,v\Vdash\beta$.

Now assume that $(\mathbb{F},V),g,v\Vdash\Diamond\alpha$. Then there is a $u\in W$ such that $(v,u)\in R$ and $(\mathbb{F},V),g,u\Vdash\alpha$. By $(\mathbb{F},V),g\Vdash\alpha\leq\blacksquare\beta$, we have that $(\mathbb{F},V),g,u\Vdash\blacksquare\beta$. Therefore, for $v\in W$, we have $(u,v)\in R^{-1}$, thus $(\mathbb{F},V),g,v\Vdash\beta$.
\end{proof}

\begin{lemma}\label{Lemma:at:substage2}
The residuation rules for $@$ are sound in $\mathbb{F}$.
\end{lemma}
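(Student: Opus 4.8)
The plan is to verify the four residuation rules for $@$ of Substage 2 directly from the satisfaction clauses, using the Substage 2 criterion that the premise $T$ and the conclusion $T'$ of each rule are forced by exactly the same $(\mathbb{F},V),g$. Since the rules for the nominal-indexed operator $@_{\nomj}$ and the state-variable-indexed operator $@_{x}$ differ only in whether the jump target is $V(\nomj)$ or $g(x)$, I would prove the two rules for $@_{\nomj}$ in full and then note that the rules for $@_{x}$ follow by replacing $V(\nomj)$ with $g(x)$ throughout.

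The single observation driving both cases is that the truth value of $@_{\nomj}\beta$ at a world $w$ does not depend on $w$: by the satisfaction clause, $\mathbb{M},g,w\Vdash @_{\nomj}\beta$ holds iff $\mathbb{M},g,V(\nomj)\Vdash\beta$, independently of $w$, and the same world-independence holds for $\mathsf{E}\alpha$ and $\mathsf{A}\alpha$. For the left rule, which rewrites $\alpha\leq @_{\nomj}\beta$ into $\mathsf{E}\alpha\land\nomj\leq\beta$, I would unfold the left inequality to ``for all $w$, if $w\Vdash\alpha$ then $V(\nomj)\Vdash\beta$'', and then use world-independence of the consequent to collapse this universally quantified implication into ``(there exists $w$ with $w\Vdash\alpha$) implies $V(\nomj)\Vdash\beta$'', i.e.\ $\mathsf{E}\alpha$ implies $V(\nomj)\Vdash\beta$. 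Unfolding the conclusion $\mathsf{E}\alpha\land\nomj\leq\beta$ gives ``for all $w$, if $\alpha$ is satisfiable somewhere and $\{w\}=V(\nomj)$ then $w\Vdash\beta$''; since the only candidate world is $V(\nomj)$, this is again ``($\alpha$ satisfiable) implies $V(\nomj)\Vdash\beta$'', so the two sides coincide.

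For the right rule, which rewrites $@_{\nomj}\beta\leq\alpha$ into $\beta\leq\nomj\to\mathsf{A}\alpha$, I would argue dually. Unfolding the left inequality and using that the antecedent $@_{\nomj}\beta$ is world-independent turns ``for all $w$, if $w\Vdash @_{\nomj}\beta$ then $w\Vdash\alpha$'' into ``$V(\nomj)\Vdash\beta$ implies (for all $w$, $w\Vdash\alpha$)'', that is, $V(\nomj)\Vdash\beta$ implies $\mathsf{A}\alpha$. On the other side, unfolding $\beta\leq\nomj\to\mathsf{A}\alpha$ yields ``for all $w$, if $w\Vdash\beta$ and $\{w\}=V(\nomj)$ then $w\Vdash\mathsf{A}\alpha$''; restricting to the unique world $V(\nomj)$ and invoking world-independence of $\mathsf{A}\alpha$ gives ``$V(\nomj)\Vdash\beta$ implies $\mathsf{A}\alpha$'', matching the left side again.

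I expect no genuine obstacle here: each rule reduces to a short chain of equivalences of exactly the kind already used in Lemmas \ref{Lemma:residuation:neg:substage12} and \ref{Lemma:Residuation:white:substage2}. The only point that requires care is the direction of the quantifier collapse, namely recognising that because an $@$-formula supplies a truth value that is constant across worlds, the satisfaction operator behaves as a global test; this is precisely why the global diamond $\mathsf{E}$ surfaces in the left rule and the global box $\mathsf{A}$, guarded by the nominal antecedent $\nomj\to$, surfaces in the right rule.
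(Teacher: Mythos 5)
Your proof is correct, and it establishes exactly the right strength of statement for a Substage~2 rule, namely the pointwise equivalence $(\mathbb{F},V),g\Vdash T$ iff $(\mathbb{F},V),g\Vdash T'$ for a fixed model and assignment (which is what licenses application under universal quantifiers), but it takes a genuinely different route from the paper. The paper argues at the level of inequalities via a short chain of algebraic equivalences: for the left rule it rewrites $@_{\nomj}\beta$ as $\mathsf{A}(\nomj\to\beta)$, then applies the adjunction between $\mathsf{E}$ and $\mathsf{A}$ ($\alpha\leq\mathsf{A}\delta$ iff $\mathsf{E}\alpha\leq\delta$) followed by $\land/\to$ residuation to arrive at $\mathsf{E}\alpha\land\nomj\leq\beta$; for the right rule it does not argue directly at all, but contraposes ($@_{\nomj}\beta\leq\alpha$ iff $\neg\alpha\leq\neg @_{\nomj}\beta$), uses self-duality of the satisfaction operator ($\neg @_{\nomj}\beta\equiv @_{\nomj}\neg\beta$), invokes the already-proved left rule, and contraposes back. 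You instead unfold the satisfaction clauses from first principles and isolate the single semantic fact doing all the work --- that $@_{\nomj}\beta$, $\mathsf{E}\alpha$ and $\mathsf{A}\alpha$ are world-independent, so the universally quantified implication defining an inequality collapses --- and you treat the right rule directly rather than by reduction to the left one. What each approach buys: the paper's version is more modular and transfers verbatim to any setting where the identities $@_{\nomj}\beta\equiv\mathsf{A}(\nomj\to\beta)$ and the $\mathsf{E}/\mathsf{A}$ adjunction hold, though those intermediate identities are themselves asserted without proof and would ultimately be verified by precisely the kind of semantic unfolding you perform; your version is self-contained, makes the quantifier-collapse mechanism explicit (and thereby explains \emph{why} $\mathsf{E}$ appears in the left rule and $\nomj\to\mathsf{A}$ in the right one), at the cost of slightly longer case-by-case bookkeeping. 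Your handling of the $@_{x}$ rules by substituting $g(x)$ for $V(\nomj)$ matches the paper's remark that the two families of rules are proved identically.
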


\begin{proof}
Since for $@_{\nomi}$ and $@_{x}$, the proofs are essentially the same, we only prove it for $@_{\nomi}$. 

For the left residuation rule for $@_{\nomi}$, it suffices to show that for any Kripke model $\mathbb{M}=(W,R,V)$, any assignment $g$ on $\mathbb{M}$, 

$$\mathbb{M},g\Vdash\alpha\leq @_{\nomj}\beta\mbox{ iff }\mathbb{M},g\Vdash\mathsf{E}\alpha\land\nomj\leq\beta.$$

Indeed, 

\begin{center}
\begin{tabular}{c l}
& $\mathbb{M},g\Vdash \alpha\leq @_{\nomj}\beta$\\
iff & $\mathbb{M},g\Vdash \alpha\leq \mathsf{A}(\nomj\to\beta)$\\
iff & $\mathbb{M},g\Vdash \mathsf{E}\alpha\leq\nomj\to\beta$\\
iff & $\mathbb{M},g\Vdash\mathsf{E}\alpha\land\nomj\leq\beta$.
\end{tabular}
\end{center}

For the right residuation rule for $@_{\nomi}$, it suffices to show that for any Kripke model $\mathbb{M}=(W,R,V)$, any assignment $g$ on $\mathbb{M}$, 

$$\mathbb{M},g\Vdash@_{\nomj}\beta\leq\alpha\mbox{ iff }\mathbb{M},g\Vdash\beta\leq\nomj\to\mathsf{A}\alpha.$$

Indeed, 
\begin{center}
\begin{tabular}{c l}
& $\mathbb{M},g\Vdash @_{\nomj}\beta\leq\alpha$\\
iff & $\mathbb{M},g\Vdash \neg\alpha\leq\neg @_{\nomj}\beta$\\
iff & $\mathbb{M},g\Vdash \neg\alpha\leq @_{\nomj}\neg\beta$\\
iff & $\mathbb{M},g\Vdash\mathsf{E}\neg\alpha\land\nomj\leq\neg\beta$\\
iff & $\mathbb{M},g\Vdash\beta\leq\nomj\to\mathsf{A}\alpha$.
\end{tabular}
\end{center}
\end{proof}

\begin{lemma}\label{Lemma:downarrow:substage2}
The residuation rules for $\downarrow$ are sound in $\mathbb{F}$.
\end{lemma}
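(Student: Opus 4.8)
The plan is to treat the two residuation rules for $\downarrow$ separately, in each case establishing the biconditional $(\mathbb{F},V),g\Vdash T$ iff $(\mathbb{F},V),g\Vdash T'$ required for Substage~2 rules, where $T$ is the premise and $T'$ the quantified conclusion. For the left rule I would prove
$$(\mathbb{F},V),g\Vdash \alpha\leq\downarrow x.\beta \quad\text{iff}\quad (\mathbb{F},V),g\Vdash \forall y(\mathsf{A}(y\to\alpha)\land y\leq\beta[y/x]).$$
First I would unfold the left-hand side using the clause for $\downarrow$: it says that for every $w\in W$, if $(\mathbb{F},V),g,w\Vdash\alpha$ then $(\mathbb{F},V),g^x_w,w\Vdash\beta$. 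Next I would unfold the right-hand side: the $\forall y$ ranges over worlds $v$, and for each $v$ the inner inequality asserts that for every $w$, if $(\mathbb{F},V),g^y_v,w\Vdash\mathsf{A}(y\to\alpha)\land y$ then $(\mathbb{F},V),g^y_v,w\Vdash\beta[y/x]$.

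The central computation is to simplify the antecedent $\mathsf{A}(y\to\alpha)\land y$. Reading $y$ at $w$ under $g^y_v$ forces $v=w$, while $\mathsf{A}(y\to\alpha)$ forces $\alpha$ to hold at the unique world named by $y$, namely $v$; since $y$ is fresh (it does not occur in $\alpha$, by the side condition), this reduces to $(\mathbb{F},V),g,w\Vdash\alpha$ together with $w=v$. Hence the right-hand side collapses to: for every $w$, if $(\mathbb{F},V),g,w\Vdash\alpha$ then $(\mathbb{F},V),g^y_w,w\Vdash\beta[y/x]$. I would then invoke the substitution lemma for state variables to rewrite $(\mathbb{F},V),g^y_w,w\Vdash\beta[y/x]$ as $(\mathbb{F},V),(g^y_w)^x_w,w\Vdash\beta$, and use the freshness of $y$ in $\beta$ to drop the $y$-binding, obtaining $(\mathbb{F},V),g^x_w,w\Vdash\beta$; this is exactly the consequent of the left-hand side, closing the biconditional.

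The right rule is handled analogously, establishing
$$(\mathbb{F},V),g\Vdash \downarrow x.\beta\leq\alpha \quad\text{iff}\quad (\mathbb{F},V),g\Vdash \forall y(\beta[y/x]\leq y\to\mathsf{E}(y\land\alpha)).$$
Here the $\forall y$ again supplies a world $v$ naming $y$, and the consequent $y\to\mathsf{E}(y\land\alpha)$ is nonvacuous only at $w=v$ (forced by the antecedent $y$), where $\mathsf{E}(y\land\alpha)$ says that $\alpha$ holds at the unique $y$-world $v$, i.e.\ $(\mathbb{F},V),g,v\Vdash\alpha$ by freshness. After applying the substitution lemma to turn $(\mathbb{F},V),g^y_v,w\Vdash\beta[y/x]$ into $(\mathbb{F},V),g^x_v,w\Vdash\beta$, the body of the inner inequality is trivially satisfied when $v\neq w$ and, at $w=v$, reduces to: for every $v$, if $(\mathbb{F},V),g^x_v,v\Vdash\beta$ then $(\mathbb{F},V),g,v\Vdash\alpha$, which is precisely the unfolded left-hand side.

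I expect the main obstacle to be the careful bookkeeping of the nested assignments together with the substitution lemma. Concretely, one must verify that $(\mathbb{F},V),h,w\Vdash\gamma[y/x]$ iff $(\mathbb{F},V),h^x_{h(y)},w\Vdash\gamma$ whenever $y$ is fresh, and then combine this with the freshness of $y$ to discharge the auxiliary $y$-binding in the correct order. The most delicate point is the coordination of the outer quantifier $\forall y$ with the inequality's own implicit universal quantification over worlds, which is mediated by the nominal-like behaviour of $y$ encoded in $\mathsf{A}(y\to\cdot)\land y$ and in $y\to\mathsf{E}(y\land\cdot)$; getting the equivalence $w=v$ to emerge cleanly from these gadgets is where the verification must be done with care.
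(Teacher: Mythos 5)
Your proposal is correct and takes essentially the same approach as the paper: both establish the local equivalence $(\mathbb{F},V),g\Vdash T$ iff $(\mathbb{F},V),g\Vdash T'$ by unfolding the semantics with a fresh state variable $y$, collapsing the gadgets $\mathsf{A}(y\to\alpha)\land y$ and $y\to\mathsf{E}(y\land\alpha)$ so that the bound $y$ behaves like a name for the evaluation world, and discharging $\beta[y/x]$ via the substitution lemma together with freshness of $y$. The only inessential divergence is that you verify the right rule by a direct symmetric computation, whereas the paper derives it from the left rule by contraposition, using $\neg\downarrow x.\beta\leftrightarrow\downarrow x.\neg\beta$ and the already-proved soundness of the $\neg$-residuation rules.
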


\begin{proof}

For the left residuation rule for $\downarrow$, it suffices to show that for any Kripke model $\mathbb{M}=(W,R,V)$, any assignment $g$ on $\mathbb{M}$, 

$$\mathbb{M},g\Vdash\alpha\leq\downarrow x.\beta\mbox{ iff }\mathbb{M},g\Vdash\forall y(\mathsf{A}(y\to\alpha)\land y\leq\beta[y/x]).$$

Indeed,

\begin{center}
\begin{tabular}{c l}
& $\mathbb{M},g\Vdash \alpha\leq\downarrow x.\beta$\\
iff & for all $w\in W$, if $\mathbb{M},g,w\Vdash \alpha$, then $\mathbb{M},g,w\Vdash \downarrow x.\beta$\\
iff & for all $w\in W$, if $\mathbb{M},g^{y}_{w},w\Vdash \alpha$, then $\mathbb{M},g^{y}_{w},w\Vdash \downarrow x.\beta$\\
iff & for all $w\in W$, if $\mathbb{M},g^{y}_{w}\Vdash y\leq\alpha$, then $\mathbb{M},g^{y,x}_{w,w},w\Vdash\beta$\\
iff & for all $w\in W$, if $\mathbb{M},g^{y}_{w}\Vdash y\leq\alpha$, then $\mathbb{M},g^{y,x}_{w,w},w\Vdash\beta[y/x]$\\
iff & for all $w\in W$, if $\mathbb{M},g^{y}_{w}\Vdash \mathsf{A}(y\to\alpha)$, then $\mathbb{M},g^{y}_{w},w\Vdash\beta[y/x]$\\
iff & for all $w\in W$, if $\mathbb{M},g^{y}_{w}\Vdash \mathsf{A}(y\to\alpha)$, then $\mathbb{M},g^{y}_{w}\Vdash y\leq\beta[y/x]$\\
iff & for all $w\in W$, $\mathbb{M},g^{y}_{w}\Vdash \mathsf{A}(y\to\alpha)\land y\leq\beta[y/x]$\\
iff & $\mathbb{M},g\Vdash \forall y(\mathsf{A}(y\to\alpha)\land y\leq\beta[y/x])$,\\
\end{tabular}
\end{center}

where $y$ is a state variable that does not occur in $\alpha$ or $\beta$.




For the right residuation rule for $\downarrow$, it suffices to show that for any Kripke model $\mathbb{M}=(W,R,V)$, any assignment $g$ on $\mathbb{M}$, 

$$\mathbb{M},g\Vdash\downarrow x.\beta\leq\alpha\mbox{ iff }\mathbb{M},g\Vdash\forall y(\beta[y/x]\leq y\to\mathsf{E}(y\land\alpha)).$$

Indeed, 

\begin{center}
\begin{tabular}{c l}
& $\mathbb{M},g\Vdash\downarrow x.\beta\leq\alpha$\\
iff & $\mathbb{M},g\Vdash\neg\alpha\leq\neg\downarrow x.\beta$\\
iff & $\mathbb{M},g\Vdash\neg\alpha\leq\downarrow x.\neg\beta$\\
iff & $\mathbb{M},g\Vdash\forall y(\mathsf{A}(y\to\neg\alpha)\land y\leq\neg\beta[y/x])$\\
iff & $\mathbb{M},g\Vdash\forall y(\beta[y/x]\leq\neg(\mathsf{A}(y\to\neg\alpha)\land y))$\\
iff & $\mathbb{M},g\Vdash\forall y(\beta[y/x]\leq y\to\mathsf{E}(y\land\alpha))$.\\
\end{tabular}
\end{center}
\end{proof}

\begin{lemma}\label{Lemma:2ndsplitting:substage2}
The second splitting rule in Substage 2 is sound in $\mathbb{F}$.
\end{lemma}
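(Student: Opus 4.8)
The plan is to unfold the two semantic clauses involved — the one for the universal state quantifier $\forall x$ and the one for the meta-conjunction $\bigamp$ — and thereby reduce the claim to the elementary distributivity of universal quantification over conjunction in the metalanguage. Following the criterion stated for Substage 2, where the premise $T$ and conclusion $T'$ of a rule must satisfy $(\mathbb{F},V),g\Vdash T$ iff $(\mathbb{F},V),g\Vdash T'$, it suffices to show that for any Kripke frame $\mathbb{F}=(W,R)$, any valuation $V$ and any assignment $g$ on $\mathbb{F}$,
$$(\mathbb{F},V),g\Vdash \forall x(\mathsf{Mega_1} \bigamp \mathsf{Mega_2})\mbox{ iff }(\mathbb{F},V),g\Vdash \forall x(\mathsf{Mega_1})\mbox{ and }(\mathbb{F},V),g\Vdash \forall x(\mathsf{Mega_2}).$$

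First I would apply the semantic clause for $\forall x$ to rewrite the left-hand side as: for all $w\in W$, $(\mathbb{F},V),g^{x}_{w}\Vdash \mathsf{Mega_1}\bigamp\mathsf{Mega_2}$. Next I would apply the semantic clause for $\bigamp$ at each $x$-variant $g^{x}_{w}$, turning this into: for all $w\in W$, both $(\mathbb{F},V),g^{x}_{w}\Vdash \mathsf{Mega_1}$ and $(\mathbb{F},V),g^{x}_{w}\Vdash \mathsf{Mega_2}$. The crucial step is then to split this universally quantified conjunction into a conjunction of two universally quantified statements, namely into: $\bigl(\mbox{for all }w\in W,\ (\mathbb{F},V),g^{x}_{w}\Vdash \mathsf{Mega_1}\bigr)$ and $\bigl(\mbox{for all }w\in W,\ (\mathbb{F},V),g^{x}_{w}\Vdash \mathsf{Mega_2}\bigr)$. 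Finally, reading each conjunct back through the semantic clause for $\forall x$ yields exactly $(\mathbb{F},V),g\Vdash\forall x(\mathsf{Mega_1})$ and $(\mathbb{F},V),g\Vdash\forall x(\mathsf{Mega_2})$. I would present all of this as a single displayed chain of \emph{iff}'s, as in the preceding lemmas, so that the argument is transparent.

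The main point to isolate is the metalanguage equivalence ``$\forall w(P(w)\wedge Q(w))$ iff $(\forall w\,P(w))\wedge(\forall w\,Q(w))$'', which is the only nontrivial inference and is completely elementary; there is no genuine obstacle here. The one thing to keep an eye on is that $\mathsf{Mega_1}$ and $\mathsf{Mega_2}$ may themselves contain further nested quantifiers and meta-conjunctions. However, since the argument only unfolds the outermost $\forall x$ and the outermost $\bigamp$ and treats each $\mathsf{Mega_i}$ as a black box evaluated at the variant $g^{x}_{w}$, the inner structure of the $\mathsf{Mega_i}$ plays no role, and so this causes no difficulty.
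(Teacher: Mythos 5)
Your proof is correct and follows essentially the same route as the paper, which disposes of the lemma in one line by citing the meta-equivalence $\forall x(\alpha\land\beta)\leftrightarrow\forall x\alpha\land\forall x\beta$; you merely spell out the unfolding of the semantic clauses for $\forall x$ and $\bigamp$ that the paper leaves implicit. Your closing observation that the $\mathsf{Mega}_i$ can be treated as black boxes is a correct and worthwhile clarification, but it does not change the substance of the argument.
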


\begin{proof}
It follows immediately from the meta-equivalence that $\forall x(\alpha\land\beta)\leftrightarrow\forall x\alpha\land\forall x\beta$.
\end{proof}

\begin{proposition}\label{Prop:Substage:3}
The packing rules in Substage 3 are sound in $\mathbb{F}$.
\end{proposition}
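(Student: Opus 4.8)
The plan is to follow the schema announced for Substage 3: since the packing rules may be applied within the scope of universal quantifiers, I would prove the stronger \emph{local} equivalence, namely that the premise $T$ and conclusion $T'$ of each rule satisfy $(\mathbb{F},V),g\Vdash T$ iff $(\mathbb{F},V),g\Vdash T'$ for every frame $\mathbb{F}$, every valuation $V$ and every assignment $g$. There are two rules to treat, and in each case the side condition that $\beta$ contains no occurrence of $x$ is what makes the argument go through: it guarantees that the truth set of $\beta$ is insensitive to the value the assignment gives to $x$, i.e.\ $(\mathbb{F},V),g^{x}_{v},w\Vdash\beta$ iff $(\mathbb{F},V),g,w\Vdash\beta$ for all $v,w\in W$. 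I would record this observation first and use it in both halves.

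For the left packing rule I would show $(\mathbb{F},V),g\Vdash\forall x(\alpha\leq\beta)$ iff $(\mathbb{F},V),g\Vdash(\exists x\alpha)\leq\beta$. Unfolding the meta-level clause for $\forall x$ and then the clause for inequalities, the left-hand side says: for all $v\in W$ and all $w\in W$, if $(\mathbb{F},V),g^{x}_{v},w\Vdash\alpha$ then $(\mathbb{F},V),g^{x}_{v},w\Vdash\beta$; applying the side condition to rewrite the consequent with $g$ in place of $g^{x}_{v}$ turns this into: for all $v,w\in W$, if $(\mathbb{F},V),g^{x}_{v},w\Vdash\alpha$ then $(\mathbb{F},V),g,w\Vdash\beta$. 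Unfolding the right-hand side via the clauses for inequalities and for $\exists x$ gives: for all $w\in W$, if there is some $v\in W$ with $(\mathbb{F},V),g^{x}_{v},w\Vdash\alpha$, then $(\mathbb{F},V),g,w\Vdash\beta$. The two are identified by the first-order equivalence $\bigl((\exists v\,P(v,w))\to Q(w)\bigr)\equiv\forall v\,(P(v,w)\to Q(w))$, together with the harmlessness of interchanging two universal quantifiers.

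The right packing rule is entirely dual: I would establish $(\mathbb{F},V),g\Vdash\forall x(\beta\leq\alpha)$ iff $(\mathbb{F},V),g\Vdash\beta\leq(\forall x\alpha)$ by the same unfolding, this time applying the side condition in the \emph{antecedent} and invoking $\bigl(Q(w)\to(\forall v\,P(v,w))\bigr)\equiv\forall v\,(Q(w)\to P(v,w))$. Both directions can be presented as a single chain of biconditionals in the tabular style already used in the lemmas above.

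I expect no serious obstacle here; the one point requiring care — and the place where a naive argument would break — is the bookkeeping of the two distinct binders of $x$. The meta-level $\forall x$ ranging over the mega-inequality and the object-level $\exists x$ (resp.\ $\forall x$) modality both quantify the \emph{same} state variable over the \emph{same} domain $W$, so the witness or counterexample $v$ transfers verbatim between the two formulations; it is precisely the hypothesis $x\notin\beta$ that prevents this shared $x$ from disturbing the occurrence of $\beta$, and I would flag explicitly that the equivalence fails without it.
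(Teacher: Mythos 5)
Your proposal is correct and takes essentially the same route as the paper: the paper proves exactly the local equivalence $(\mathbb{F},V),g\Vdash\forall x(\alpha\leq\beta)$ iff $(\mathbb{F},V),g\Vdash(\exists x\alpha)\leq\beta$ by the same unfolding into ``for all $v,w\in W$,'' uses the side condition $x\notin\beta$ to replace $g^{x}_{v}$ by $g$ in the consequent, and concludes via the quantifier shift $\bigl((\exists v\,P(v,w))\to Q(w)\bigr)\equiv\forall v\,(P(v,w)\to Q(w))$. The only difference is presentational: the paper writes the left rule as a tabular chain of biconditionals and dismisses the right rule as ``similar,'' whereas you spell out the dual case and the two-binders bookkeeping explicitly.
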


\begin{proof}
We only prove the soundness of the first packing rule, the other is similar.

For the left packing rule, to show its soundness, it suffices to show that for any Kripke model $\mathbb{M}=(W,R,V)$, any assignment $g$ on $\mathbb{M}$, 

$$\mathbb{M},g\Vdash\forall x(\alpha\leq\beta)\mbox{ iff }\mathbb{M},g\Vdash(\exists x\alpha)\leq\beta,$$

where $\beta$ does not contain occurrences of $x$. Indeed, 

\begin{center}
\begin{tabular}{c l}
& $\mathbb{M},g\Vdash\forall x(\alpha\leq\beta)$\\
iff & for all $w\in W$, $\mathbb{M},g^{x}_{w}\Vdash\alpha\leq\beta$\\
iff & for all $w,v\in W$, if $\mathbb{M},g^{x}_{w},v\Vdash\alpha$, then $\mathbb{M},g^{x}_{w},v\Vdash\beta$\\
iff & for all $w,v\in W$, if $\mathbb{M},g^{x}_{w},v\Vdash\alpha$, then $\mathbb{M},g,v\Vdash\beta$ (since $\beta$ does not contain occurrences of $x$)\\
iff & for all $v\in W$, if there exists a $w\in W$ such that $\mathbb{M},g^{x}_{w},v\Vdash\alpha$, then $\mathbb{M},g,v\Vdash\beta$\\
iff & for all $v\in W$, if $\mathbb{M},g,v\Vdash\exists x\alpha$, then $\mathbb{M},g,v\Vdash\beta$\\
iff & $\mathbb{M},g\Vdash(\exists x\alpha)\leq\beta$.\\
\end{tabular}
\end{center}
\end{proof}

\begin{proposition}\label{Prop:Substage:4}
The Ackermann rules in Substage 4 are sound in $\mathbb{F}$.
\end{proposition}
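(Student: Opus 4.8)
The plan is to verify, for the right-handed Ackermann rule, the reduction property singled out for substages~1 and~4 on page~\pageref{condition:1:4:equivalence} (the left-handed rule is entirely dual, replacing the join $\alpha_1\vee\ldots\vee\alpha_n$ by the corresponding meet and interchanging the roles of the positive and negative occurrences of $p$). Write $S$ for the premise system $\{\alpha_1\leq p,\ldots,\alpha_n\leq p\}\cup\{\forall\vec x_j(\beta_j\leq\gamma_j)\}_{j=1}^{m}$, write $S'$ for the system after the rule, and abbreviate $\chi:=\alpha_1\vee\ldots\vee\alpha_n$. First I would record two standard auxiliary facts, each proved by a routine induction on formula structure. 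The \emph{substitution lemma}: for any assignment $h$, $\llbracket\delta[\chi/p]\rrbracket^{(\mathbb{F},V),h}=\llbracket\delta\rrbracket^{(\mathbb{F},V^{p}_{A}),h}$, where $A=\llbracket\chi\rrbracket^{(\mathbb{F},V),h}$ and $V^{p}_{A}$ agrees with $V$ except that it sends $p$ to $A$. The \emph{monotonicity fact}: if $\delta$ is positive (resp.\ negative) in $p$ and $U\subseteq U'$, then $\llbracket\delta\rrbracket^{(\mathbb{F},V^{p}_{U}),h}\subseteq\llbracket\delta\rrbracket^{(\mathbb{F},V^{p}_{U'}),h}$ (resp.\ $\supseteq$).

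For the forward direction, suppose $(\mathbb{F},V),g\Vdash S$; I would show $(\mathbb{F},V),g\Vdash S'$, so that the witness $(V',g')$ demanded by the property can simply be $(V,g)$ (whence $V'(\nomi_0)=V(\nomi_0)$ and $V'(\nomi_1)=V(\nomi_1)$ hold trivially). From $(\mathbb{F},V),g\Vdash\alpha_i\leq p$ for every $i$ I obtain $A:=\llbracket\chi\rrbracket^{(\mathbb{F},V),g}\subseteq V(p)$. Set $V_A:=V^{p}_{A}$. Fixing $j$ and a tuple $\vec w$, the monotonicity fact (using $\beta_j$ positive and $\gamma_j$ negative in $p$, together with $A\subseteq V(p)$) gives $\llbracket\beta_j\rrbracket^{(\mathbb{F},V_A),g^{\vec x_j}_{\vec w}}\subseteq\llbracket\beta_j\rrbracket^{(\mathbb{F},V),g^{\vec x_j}_{\vec w}}$ and $\llbracket\gamma_j\rrbracket^{(\mathbb{F},V),g^{\vec x_j}_{\vec w}}\subseteq\llbracket\gamma_j\rrbracket^{(\mathbb{F},V_A),g^{\vec x_j}_{\vec w}}$; chaining these with the inclusion furnished by $(\mathbb{F},V),g\Vdash\forall\vec x_j(\beta_j\leq\gamma_j)$ yields $(\mathbb{F},V_A),g\Vdash\forall\vec x_j(\beta_j\leq\gamma_j)$. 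Since $p$ and the variables $\vec x_j$ do not occur in any $\alpha_i$ and each $\alpha_i$ is pure, $\llbracket\chi\rrbracket$ is unchanged under the relevant $\vec x_j$-variants, so the substitution lemma converts this into $(\mathbb{F},V),g\Vdash\forall\vec x_j(\beta_j[\chi/p]\leq\gamma_j[\chi/p])$, which is exactly $S'$.

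For the backward direction, suppose $(\mathbb{F},V'),g'\Vdash S'$. I would take $g:=g'$ and $V:=(V')^{p}_{A'}$ with $A':=\llbracket\chi\rrbracket^{(\mathbb{F},V'),g'}$; as only $p$ is altered and $p$ is a propositional variable, $V(\nomi_0)=V'(\nomi_0)$ and $V(\nomi_1)=V'(\nomi_1)$. Purity of the $\alpha_i$ gives $V(p)=A'=\llbracket\chi\rrbracket^{(\mathbb{F},V),g}$, so $(\mathbb{F},V),g\Vdash\alpha_i\leq p$ for each $i$. For the quantified inequalities, the substitution lemma identifies $\llbracket\beta_j\rrbracket^{(\mathbb{F},V),g^{\vec x_j}_{\vec w}}$ with $\llbracket\beta_j[\chi/p]\rrbracket^{(\mathbb{F},V'),g^{\vec x_j}_{\vec w}}$, and similarly for $\gamma_j$, so that $(\mathbb{F},V'),g'\Vdash\forall\vec x_j(\beta_j[\chi/p]\leq\gamma_j[\chi/p])$ transfers to $(\mathbb{F},V),g\Vdash\forall\vec x_j(\beta_j\leq\gamma_j)$. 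Hence $(\mathbb{F},V),g\Vdash S$.

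The two inductions are genuinely routine; the step I expect to require the most care is the bookkeeping around the quantifiers $\vec x_j$. One must check that $\llbracket\chi\rrbracket$ is insensitive both to the $\vec x_j$-variants of the assignment (guaranteed by the requirement that the $\vec x_j$ not occur in the $\alpha_i$) and to the value assigned to $p$ (guaranteed by purity of the $\alpha_i$), so that the single valuation $V_A$ (resp.\ $(V')^{p}_{A'}$) simultaneously witnesses the substitution lemma underneath every quantifier $\forall\vec x_j$. Verifying this commutation of the substitution with the binder-induced state quantifiers is precisely what makes their presence harmless, and is the only point at which the argument departs from the classical Ackermann lemma.
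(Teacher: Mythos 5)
Your proof is correct and takes essentially the same approach as the paper's: there the statement is reduced to the right-handed Ackermann lemma (Lemma \ref{Lemma:Right:Ackermann}), which likewise uses the minimal valuation $V^{p}(p)=\llbracket\alpha\rrbracket^{(\mathbb{F},V),g}$ together with the substitution identity in one direction, and the chain $\beta[\alpha/p]\leq\beta\leq\gamma\leq\gamma[\alpha/p]$ from monotonicity of $\beta$ and antitonicity of $\gamma$ in $p$ in the other, relying on exactly your observation that purity of the $\alpha_i$ and the absence of the $\vec{x}_j$ in them keep the minimal valuation stable under the state quantifiers. The only differences are presentational: the paper treats the case $n=m=1$ without loss of generality and inlines your substitution lemma and monotonicity fact rather than isolating them as separate inductive claims.
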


\begin{proof}
We only prove it for the right-handed Ackermann rule, the left-handed Ackermann rule is similar. Without loss of generality we assume that $n=m=1$. By the discussion on page \pageref{condition:1:4:equivalence}, it suffices to show the following \emph{right-handed Ackermann lemma}:
\begin{lemma}\label{Lemma:Right:Ackermann}
Assume $\alpha$ is pure and does not contain state variables in $\vec{x}$, $\beta$ is positive in $p$ and $\gamma$ is negative in $p$, then for any Kripke frame $\mathbb{F}=(W,R)$, any valuation $V$ and any assignment $g$ on it, the following are equivalent:

\begin{itemize}
\item $(\mathbb{F},V),g\Vdash\forall\vec{x}(\beta[\alpha/p]\leq\gamma[\alpha/p])$;
\item there exists a valuation $V^p$ such that $(\mathbb{F},V^{p}),g\Vdash\alpha\leq p$ and $(\mathbb{F},V^{p}),g\Vdash\forall\vec{x}(\beta\leq\gamma)$, where $V^{p}$ is the same as $V$ except $V^{p}(p)$.
\end{itemize}
\end{lemma}

$\Rightarrow$: Take $V^{p}$ such that $V^{p}$ is the same as $V$ except that $V^{p}(p)=\llbracket\alpha\rrbracket^{(\mathbb{F},V),g}$. Since $\alpha$ is pure, $\alpha$ does not contain $p$, it is easy to see that $\llbracket\alpha\rrbracket^{(\mathbb{F},V^{p}),g}=\llbracket\alpha\rrbracket^{(\mathbb{F},V),g}= \llbracket p\rrbracket^{(\mathbb{F},V^{p}),g}$. Therefore $(\mathbb{F},V^{p}),g\Vdash\alpha\leq p$. Then for any $w\in W$, any assignment $g'$ on $\mathbb{F}$ such that $g$ and $g'$ disagree at most at state variables in $\vec x$,

\begin{center}
\begin{tabular}{r l}
& $(\mathbb{F},V^{p}),g',w\Vdash p$\\
iff & $(\mathbb{F},V^{p}),g,w\Vdash p$\\
iff & $(\mathbb{F},V),g,w\Vdash\alpha$\\
iff & $(\mathbb{F},V),g',w\Vdash\alpha$,\\
\end{tabular}
\end{center}

so 
$$(\mathbb{F},V^{p}),g',w\Vdash\beta\mbox{ iff }(\mathbb{F},V),g',w\Vdash\beta[\alpha/p],$$ and 
$$(\mathbb{F},V^{p}),g',w\Vdash\gamma\mbox{ iff }(\mathbb{F},V),g',w\Vdash\gamma[\alpha/p],$$ so from $(\mathbb{F},V),g\Vdash\forall\vec{x}(\beta[\alpha/p]\leq\gamma[\alpha/p])$ one can get $(\mathbb{F},V^{p}),g\Vdash\forall\vec{x}(\beta\leq\gamma)$.\\

$\Leftarrow$: Assume that there exists a valuation $V^p$ such that $(\mathbb{F},V^{p}),g\Vdash\alpha\leq p$ and $(\mathbb{F},V^{p}),g\Vdash\forall\vec{x}(\beta\leq\gamma)$, where $V^{p}$ is the same as $V$ except $V^{p}(p)$. Then for any assignment $g'$ on $\mathbb{F}$ such that $g$ and $g'$ disagree at most at state variables in $\vec x$, 

since $\alpha$ and $p$ do not contain state variables in $\vec x$, from $$(\mathbb{F},V^{p}),g\Vdash\alpha\leq p$$ we have $$(\mathbb{F},V^{p}),g'\Vdash\alpha\leq p;$$

from $$(\mathbb{F},V^{p}),g\Vdash\forall\vec{x}(\beta\leq\gamma)$$ we have 

$$(\mathbb{F},V^{p}),g'\Vdash\beta\leq\gamma;$$ so from the monotonicity of $\beta$ in $p$ and the antitonicity of $\gamma$ in $p$ we get

$$(\mathbb{F},V^{p}),g'\Vdash\beta[\alpha/p]\leq\gamma[\alpha/p];$$

since $\beta[\alpha/p]$ and $\gamma[\alpha/p]$ do not contain $p$, we have 

$$(\mathbb{F},V),g'\Vdash\beta[\alpha/p]\leq\gamma[\alpha/p];$$

by the arbitrariness of $g'$ we have 

$$(\mathbb{F},V),g\Vdash\forall\vec{x}(\beta[\alpha/p]\leq\gamma[\alpha/p]).$$
\end{proof}

\section{Success of $\mathsf{ALBA}^{\downarrow}$}\label{Sec:Success}

In the present section we show that $\mathsf{ALBA}^{\downarrow}$ succeeds on all Sahlqvist inequalities.

\begin{theorem}\label{Thm:Success}
$\mathsf{ALBA}^{\downarrow}$ succeeds on all Sahlqvist inequalities.
\end{theorem}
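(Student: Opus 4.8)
The plan is to adapt the standard $\mathsf{ALBA}$ success argument (as in \cite{CoPa12,Zh21}) by tracing an arbitrary $\epsilon$-Sahlqvist inequality $\phi\leq\psi$ through the three stages and maintaining a shape invariant on the system. First I would dispose of Stage 1: the distribution rules only rearrange outer $\lor$/$\land$ underneath outer nodes, so they preserve the property that every $\epsilon$-critical branch is excellent; the splitting rules break the inequality into finitely many inequalities each of which is again $\epsilon$-Sahlqvist; and the monotone/antitone variable-elimination rules remove any propositional variable occurring uniformly. Hence after preprocessing and the first-approximation rule we may work with systems $\{\nomi_0\leq\phi_i,\ \psi_i\leq\neg\nomi_1\}$ in which $+\phi_i$ and $-\psi_i$ are $\epsilon$-Sahlqvist, and in which each active inequality has a nominal or state variable on one side.

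The core is a two-part invariant reflecting the decomposition of each excellent branch into its inner part $P_1$ (near the leaf) and outer part $P_2$ (near the root). For \textbf{Substage 1} I would argue by induction on $P_2$ that, whenever a nominal or state variable $\mathbf{z}$ sits opposite an outer-headed formula, exactly one splitting or approximation rule applies: $+\Diamond,-\Box,+@,-@,+{\downarrow}x,-{\downarrow}x,-{\to}$ are handled by the approximation rules (Lemmas \ref{Lemma:approximation:white:substage1}, \ref{Lemma:Substage:1:at}, \ref{Lemma:Substage:1:downarrow}, \ref{Lemma:Approximation:to:substage1}), while the binary outer nodes are handled by splitting, and each approximation step introduces fresh nominals/state variables that are again pushed toward the inner part. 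Exhaustively applying these rules therefore terminates with every inequality being either pure or of the form $\mathbf{z}\leq\delta$ (or $\delta\leq\neg\mathbf{z}$), where the generation tree of $\delta$ consists, apart from variable leaves, of inner nodes only along each critical branch; that is, $P_2$ has been stripped off.

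For \textbf{Substage 2} I would show that every inner node of $\delta$ carries a matching residuation rule — $\neg,\Diamond,\Box,@,{\downarrow}$ via Lemmas \ref{Lemma:residuation:neg:substage12}, \ref{Lemma:Residuation:white:substage2}, \ref{Lemma:at:substage2}, \ref{Lemma:downarrow:substage2} — while the binary inner nodes $+\land,-\lor$ are consumed by the splitting rule, and the meta-conjunctions generated inside quantifier scopes by the second splitting rule. Because each critical branch is \emph{excellent}, the side formulas detached at these inner $\land$/$\lor$ nodes are pure, so every $\epsilon$-critical occurrence is eventually isolated as $\alpha\leq p$ (when $\epsilon(p)=1$) or $p\leq\alpha$ (when $\epsilon(p)=\partial$) with $\alpha$ pure. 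Substage 3 packing then removes the residual $\forall y$ produced by the $\downarrow$- and $@$-rules, bringing the system into Ackermann shape, and Substage 4 eliminates $p$; iterating over all variables yields a pure system, so the algorithm never gets stuck.

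The hard part will be controlling the binder. Its residuation rules transform $\alpha\leq{\downarrow}x.\beta$ into $\forall y(\mathsf{A}(y\to\alpha)\land y\leq\beta[y/x])$, introducing a universal quantifier and substituting a fresh state variable throughout $\beta$; I must check that (i) the subsequent splitting and residuation steps performed in the scope of $\forall y$ keep the system well formed and the process terminating, (ii) the substituted state variable remains pure and does not interfere with the variable being solved for, and (iii) after packing the two sides still satisfy the positivity/negativity side conditions of the Ackermann rules. Verifying these polarity conditions uniformly — so that no propositional variable is ever left in a position blocking both Ackermann rules — is the heart of the argument, and the excellent-branch condition is precisely what guarantees the clean outer/inner split on which it rests.
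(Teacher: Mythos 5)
Your proposal follows the paper's own proof essentially step for step: Stage 1 reduction to definite $\epsilon$-Sahlqvist inequalities (Lemma \ref{Lemma:Stage:1}), stripping the outer part $P_2$ in Substage 1 to leave inner $\epsilon$-Sahlqvist one-sided inequalities (Lemma \ref{Lemma:Substage:2:1}), residuating the inner part under universal quantifiers with the second splitting rule (Lemma \ref{Lemma:Substage:2:2}), packing (Lemma \ref{Lemma:Substage:2:3}), and the Ackermann step (Lemma \ref{Lemma:Substage:2:4}), including the same attention to the $\downarrow$-residuation as the delicate point.

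There is, however, one claim in your Substage 2 paragraph that is wrong as stated: the side formulas detached at inner $+\land$/$-\lor$ nodes are \emph{not} pure in general. Excellence constrains only the $\epsilon$-critical branches; a detached conjunct or disjunct may contain non-critical occurrences of propositional variables. Already in the paper's running example (Example \ref{Example:ALBA:execution}) the inequalities $\Box\Diamond p_1\leq\blacksquare\neg\nomi_1$ and $\Box\Diamond p_2\leq\blacksquare\neg\nomi_1$ survive Substage 2 with $p_1,p_2$ still present. The correct invariant --- recorded as forms 3 and 4 of Lemma \ref{Lemma:Substage:2:2} --- is that the non-pure side is $\epsilon^\partial$-\emph{uniform}: every surviving occurrence of each variable has the non-critical polarity. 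This is exactly the hypothesis the Ackermann rules consume (each $\beta_i$ positive and each $\gamma_i$ negative in $p$ for the right-handed rule); if your ``pure'' claim were true, those polarity side conditions would be vacuous, which should have been a warning sign. Your own item (iii), demanding verification of the positivity/negativity conditions after packing, effectively restores the right invariant, so the slip is local and repairable: replace ``pure'' by ``$\epsilon^\partial$-uniform'' and check that splitting, residuation and packing preserve uniform polarity of non-critical occurrences. What must remain pure is only the accumulating minimal-valuation side $\alpha$ in the isolated inequalities $\alpha\leq p$ and $p\leq\alpha$, and that holds because the residuation rules push only nominals, state variables and previously pure material across the inequality sign. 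A second, much smaller omission: your Substage 1 inventory lists approximation and splitting rules only, but the outer nodes $\pm\neg$ are handled there by the residuation rules for $\neg$ (Lemma \ref{Lemma:residuation:neg:substage12}), which your case analysis should include for exhaustiveness.
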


\begin{definition}[Definite $\epsilon$-Sahlqvist inequality]
Given an order type $\epsilon$, $*\in\{-,+\}$, the signed generation tree $*\phi$ of the term $\phi(p_1,\ldots, p_n)$ is \emph{definite $\epsilon$-Sahlqvist} if there is no $+\lor,-\land$ occurring in the outer part on an $\epsilon$-critical branch. An inequality $\phi\leq\psi$ is definite $\epsilon$-Sahlqvist if the trees $+\phi$ and $-\psi$ are both definite $\epsilon$-Sahlqvist.
\end{definition}

\begin{lemma}\label{Lemma:Stage:1}
Let $\{\phi_i\leq\psi_i\}_{i\in I}=\mathsf{Preprocess}(\phi\leq\psi)$ obtained by exhaustive application of the rules in Stage 1 on an input $\epsilon$-Sahlqvist inequality $\phi\leq\psi$. Then each $\phi_i\leq\psi_i$ is a definite $\epsilon$-Sahlqvist inequality.
\end{lemma}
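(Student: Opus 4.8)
The plan is to establish the lemma in three movements: first, that every rule of Stage~1 sends an $\epsilon$-Sahlqvist inequality to (one or two) $\epsilon$-Sahlqvist inequalities; second, that exhaustive application terminates (confirming that $\mathsf{Preprocess}(\phi\leq\psi)$ is well defined, as already asserted in the setup); and third --- the crux --- that in any state to which no Stage~1 rule applies, no $\epsilon$-critical branch of $+\phi_i$ or $-\psi_i$ carries a $+\lor$ or $-\land$ node, which is exactly definiteness. Throughout I will use the fact, read off from Table~\ref{aJoin:and:Meet:Friendly:Table}, that $+\lor$ and $-\land$ are always classified as \emph{outer}; hence on an excellent branch any such node necessarily lies in the outer segment $P_2$, so that ``$+\lor$ / $-\land$ in the outer part on a critical branch'' simplifies to ``$+\lor$ / $-\land$ on a critical branch''.

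\textbf{Preservation.} First I would check that each rule preserves the property that $+\phi_i$ and $-\psi_i$ are $\epsilon$-Sahlqvist. For the splitting rules this is immediate: splitting $\alpha\lor\beta\leq\gamma$ into $\alpha\leq\gamma$ and $\beta\leq\gamma$ (and dually for $\alpha\leq\beta\land\gamma$) only passes to signed generation subtrees, and every $\epsilon$-critical branch of such a subtree is a terminal segment of an $\epsilon$-critical branch of the original, hence still excellent. For the distribution rules the point is that each one merely reorders or De~Morgan--interconverts nodes that are all outer: pushing $+\Diamond,+\land,+\downarrow x,+@,-\neg,-\to$ below a $+\lor$ (resp.\ $-\Box,-\lor,-\downarrow x,-@,+\neg,-\to$ below a $-\land$) leaves along each critical branch an outer segment still consisting of outer nodes only, so excellence is preserved; duplication of a side formula (as in $(\alpha\lor\beta)\land\gamma$) merely copies an excellent branch. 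For the monotone/antitone variable-elimination rules, replacing an eliminable variable by $\bot$ or $\top$ deletes all of its occurrences (hence all of its critical branches) and alters no internal node, so the critical branches of the remaining variables are untouched; the result is $\epsilon$-Sahlqvist for $\epsilon$ restricted to the surviving variables, and an eliminated variable contributes no critical branch, satisfying definiteness vacuously.

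\textbf{Terminal-state analysis.} This is where the real work lies. Fix an output inequality $\phi_i\leq\psi_i$ and suppose, for contradiction, that some $\epsilon$-critical branch of $+\phi_i$ (the case of $-\psi_i$ is symmetric) carries a splittable node, i.e.\ a $+\lor$ or $-\land$. Among all splittable outer nodes lying on $\epsilon$-critical branches choose one, $N$, of minimal depth. If $N$ is the root of $+\phi_i$, then, the root being positive, $N$ must be $+\lor$, so $\phi_i=\alpha\lor\beta$ and the left splitting rule on $\alpha\lor\beta\leq\psi_i$ still applies --- contradicting exhaustiveness. Otherwise $N$ has a parent $P$ on the same critical branch. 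Since $N$ is outer it sits in $P_2$, so $P$, being above $N$, is outer as well; and by minimality of $N$ the node $P$ is not itself splittable. Thus $P$ is a \emph{non-splittable outer} node whose child $N$ is splittable. A case inspection on the connective and sign of $P$, matched against the sign required of its child, shows that exactly one distribution rule of Figures~\ref{Figure:distribution:rules}--\ref{Figure:distribution:rules:2} is applicable: if $N=+\lor$ then $P$ has a positive child and must be one of $+\Diamond,+\land,+\downarrow x,+@_\nomi,+@_x,-\neg,-\to$; if $N=-\land$ then $P$ has a negative child and must be one of $-\Box,-\lor,-\downarrow x,-@_\nomi,-@_x,+\neg,-\to$. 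Comparing these two lists with the outer rows of Table~\ref{aJoin:and:Meet:Friendly:Table} confirms they exhaust all non-splittable outer parents of the correct child-sign, so in every case a distribution rule fires, again contradicting exhaustiveness. Hence no such $N$ exists.

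\textbf{Termination and conclusion.} To see the process is well founded I would use a lexicographic measure whose first component is the number of propositional variables still present (strictly decreased by each variable-elimination step, never increased by the others) and whose second component is a distribution weight on the outer skeleton: leaves and inner-rooted subtrees have weight $1$; a splittable node has weight $m(A)+m(B)+1$; and every non-splittable outer connective doubles the product of its arguments' weights. A short computation shows this weight strictly decreases under each distribution step and, read as a multiset over the current set of inequalities, strictly decreases under each splitting step, while variable elimination leaves it unchanged but drops the first component. Combining the three movements, at termination each $\phi_i\leq\psi_i$ is $\epsilon$-Sahlqvist (Preservation) and admits no $+\lor$ or $-\land$ on any $\epsilon$-critical branch (Terminal-state analysis), which is precisely the definition of definite $\epsilon$-Sahlqvist. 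I expect the main obstacle to be the bookkeeping in the terminal-state analysis: verifying that the two families of distribution rules are genuinely \emph{complete} over all (non-splittable outer parent, splittable child) sign/connective configurations, so that a topmost offending node can never be left stuck.
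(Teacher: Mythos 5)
Your proof is correct and takes essentially the same route as the paper's: every Stage~1 rule preserves $\epsilon$-Sahlqvistness, and exhaustive application of the distribution and splitting rules removes all $+\lor$ and $-\land$ from the outer part of $\epsilon$-critical branches, which is exactly definiteness. The paper simply asserts this removal (``distribution pushes $+\lor,-\land$ up towards the root, splitting then eliminates them''), whereas you verify it via a minimal-offender stuck-state analysis plus a termination measure --- welcome rigor (notably the completeness check of the distribution rules against Table~\ref{aJoin:and:Meet:Friendly:Table}) that the paper leaves implicit, but not a genuinely different argument.
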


\begin{proof}
It is easy to see that by applying the distribution rules, all occurrences of $+\lor$ and $-\land$ in the outer part of an $\epsilon$-critical branch have been pushed up towards the root of the signed generation trees $+\phi$ and $-\psi$. Then by exhaustively applying the splitting rules, all such $+\lor$ and $-\land$ are eliminated. Since by applying the distribution rules, the splitting rules and the monotone/antitone variable elimination rules do not change the $\epsilon$-Sahlqvistness of a signed generation tree, in $\mathsf{Preprocess}(\phi\leq\psi)$, each signed generation tree $+\phi_i$ and $-\psi_i$ are $\epsilon$-Sahlqvist, and since they do not have $+\lor$ and $-\land$ in the outer part in the $\epsilon$-critical branches, they are definite.
\end{proof}

\begin{definition}[Inner $\epsilon$-Sahlqvist signed generation tree]
Given an order type $\epsilon$, $*\in\{-,+\}$, the signed generation tree $*\phi$ of the term $\phi(p_1,\ldots, p_n)$ is \emph{inner $\epsilon$-Sahlqvist} if its outer part $P_2$ on an $\epsilon$-critical branch is always empty, i.e.\ its $\epsilon$-critical branches have inner nodes only.
\end{definition}

\begin{lemma}\label{Lemma:Substage:2:1}
Given inequalities $\nomi_0\leq\phi_i$ and $\psi_i\leq\neg\nomi_1$obtained from Stage 1 where $+\phi_i$ and $-\psi_i$ are definite $\epsilon$-Sahlqvist, by applying the rules in Substage 1 of Stage 2 exhaustively, the inequalities that we get are in one of the following forms:

\begin{enumerate}
    \item pure inequalities which does not have occurrences of propositional variables;
    \item inequalities of the form $\nomi\leq\alpha$ or $x\leq\alpha$ where $+\alpha$ is inner $\epsilon$-Sahlqvist;
    \item inequalities of the form $\beta\leq\neg\nomi$ or $\beta\leq\neg x$ where $-\beta$ is inner $\epsilon$-Sahlqvist.
\end{enumerate}
\end{lemma}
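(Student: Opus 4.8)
The plan is to argue by well-founded induction on the size of the principal formula, run simultaneously over the two shapes $\nomi\leq\alpha$, $x\leq\alpha$ (with $+\alpha$ definite $\epsilon$-Sahlqvist) and $\beta\leq\neg\nomi$, $\beta\leq\neg x$ (with $-\beta$ definite $\epsilon$-Sahlqvist), showing that exhaustive application of the Substage 1 rules leaves an inequality of one of the three listed kinds. The preliminary fact I would establish is that definiteness is inherited by signed subtrees: the outer/inner status of a node depends only on its label and sign via Table \ref{aJoin:and:Meet:Friendly:Table}, and an excellent branch is an outer segment $P_2$ (near the root) followed by an inner segment $P_1$ (near the leaf); hence every sub-branch of an $\epsilon$-critical branch is again excellent and free of $+\lor,-\land$ in its outer part. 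This guarantees that the induction hypothesis applies to all immediate subformulas generated below.

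Next I would run the case distinction on the root connective of $\alpha$ (dually $\beta$). The connectives that admit a Substage 1 rule are handled directly: $+\land$ by splitting into $\nomi\leq\alpha_1$ and $\nomi\leq\alpha_2$; $+\Diamond$ by the approximation rule, producing the pure inequality $\nomi\leq\Diamond\nomj$ together with the strictly smaller $\nomj\leq\alpha_1$; $+@_{\nomj}$ and $+@_{x}$ by their approximation rules, giving $\nomj\leq\alpha_1$ resp.\ $x\leq\alpha_1$; $+\downarrow x$ by the approximation rule, giving $\nomi\leq\alpha_1[\nomi/x]$; and $+\neg$ by the $\neg$-residuation rule, giving $\alpha_1\leq\neg\nomi$, which passes control to the dual shape. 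The inequalities with $-\beta$ are treated symmetrically, the only extra point being that the $-\to$ approximation rule spins off a pure inequality $\nomj\to\neg\nomk\leq\neg\nomi$ alongside one inequality of each shape. In every case the non-pure outputs are strictly smaller; I would also check that replacing a bound state variable by $\nomi$ or $y$ in the $\downarrow$-rules disturbs no propositional-variable leaf, so that definiteness (an inherited subtree property together with this invariance) is preserved and the induction hypothesis closes the case.

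The decisive cases are the terminal ones, where no Substage 1 rule fires and the inequality must already be of the required form. For $\nomi\leq\alpha$ these are $+\Box$, a propositional leaf $+p$, and $+\lor$ at the root (dually $-\Diamond$, $-p$, $-\land$ for $\beta\leq\neg\nomi$). When the root is $+\Box$, I would use that $+\Box$ is an inner-only node in Table \ref{aJoin:and:Meet:Friendly:Table}: it cannot lie in the outer segment $P_2$ of any branch, so $P_2$ is empty on every $\epsilon$-critical branch through the root, i.e.\ $+\alpha$ is inner $\epsilon$-Sahlqvist and we land in case 2; the leaf case $\nomi\leq p$ is immediate. The hard part will be the disjunctive root $+\lor$: since $+\lor$ is an outer-only node, any $\epsilon$-critical branch through the root would carry $+\lor$ in its outer part $P_2$, contradicting definiteness. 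Hence $+\alpha$ contains no $\epsilon$-critical leaf at all, so the condition defining inner $\epsilon$-Sahlqvist holds vacuously and the inequality again lies in case 2 (indeed in case 1 if $\alpha$ is variable-free). The mirror-image argument, turning on $-\land$ being outer-only and $-\Diamond$ inner-only, settles the shapes $\beta\leq\neg\nomi$ and $\beta\leq\neg x$.

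Finally I would record termination: each rule either yields a pure inequality or decreases the size of the principal formula, so exhaustive application halts; combined with the case analysis, every halting inequality is pure (case 1) or has its principal side inner $\epsilon$-Sahlqvist (cases 2 and 3). The one genuinely load-bearing use of the hypothesis, rather than mere bookkeeping, is the disjunction/conjunction argument of the previous paragraph, which is precisely where definiteness—no $+\lor,-\land$ in the outer part of a critical branch—is consumed.
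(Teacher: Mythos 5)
Your proposal is correct and follows essentially the same route as the paper's own (much more compressed) argument: an induction through the Substage~1 rules showing that each application yields pure inequalities or strictly smaller inequalities of the two invariant shapes, with definiteness inherited by signed subtrees, plus a terminal-case analysis in which the $+\lor$ (dually $-\land$) root is vacuously inner $\epsilon$-Sahlqvist because definiteness forbids any critical branch through an outer-only root --- a case the paper's proof glosses over and you rightly identify as the load-bearing step. One small omission: your list of terminal roots for $\nomi\leq\alpha$ misses $+\to$ (no Substage~1 rule applies to $\nomi\leq\alpha_1\to\alpha_2$, since the $\to$-approximation rule only fires on the left-hand shape), but since $+\to$ appears in neither column of Table~\ref{aJoin:and:Meet:Friendly:Table}, excellence of critical branches rules it out of any critical branch and your vacuity argument for $+\lor$ applies verbatim, landing the inequality in case~2 (and the remaining terminal leaves $\nomj$, $x$, $\bot$, $\top$ are pure, falling under case~1).
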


\begin{proof}
Indeed, the rules in the Substage 1 of Stage 2 deal with outer nodes in the signed generation trees $+\phi_i$ and $-\psi_i$ except $+\lor$,$-\land$. For each rule, without loss of generality assume we start with an inequality of the form $\nomi\leq\alpha$, then by applying the rules in Substage 1 of Stage 2, the inequalities we get are either a pure inequality without propositional variables, or 
an inequality where the left-hand side (resp.\ right-hand side) is $\nomi$ or $x$ (resp.\ $\neg\nomi$ or $\neg x$), and the other side is a formula $\alpha'$ which is a subformula of $\alpha$, such that $\alpha'$ has one root connective less than $\alpha$. Indeed, if $\alpha'$ is on the left-hand side (resp.\ right-hand side) then $-\alpha'$ ($+\alpha'$) is definite $\epsilon$-Sahlqvist.

By applying the rules in the Substage 1 of Stage 2 exhaustively, we can eliminate all the outer connectives in the critical branches, so for non-pure inequalities, they become of form 2 or form 3.
\end{proof}

\begin{lemma}\label{Lemma:Substage:2:2}
Assume we have an inequality $\nomi\leq\alpha$ or $\beta\leq\neg\nomi$ where $+\alpha$ and $-\beta$ are inner $\epsilon$-Sahlqvist, by applying the rules in Substage 2 of Stage 2, we have (universally quantified) inequalities ($k$ can be 0 where a universally quantified inequality becomes an inequality) of the following form:

\begin{enumerate}
\item $\forall x_1\ldots\forall x_k(\alpha\leq p)$,
where $\epsilon(p)=1$, $\alpha$ is pure;

\item $\forall x_1\ldots\forall x_k(p\leq\beta)$,
where $\epsilon(p)=\partial$, $\beta$ is pure;

\item $\forall x_1\ldots\forall x_k(\alpha\leq\gamma)$,
where $\alpha$ is pure and $+\gamma$ is $\epsilon^{\partial}$-uniform;

\item $\forall x_1\ldots\forall x_k(\gamma\leq\beta)$,
where $\beta$ is pure and $-\gamma$ is $\epsilon^{\partial}$-uniform.
\end{enumerate}
\end{lemma}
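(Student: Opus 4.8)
The plan is to prove the statement by a simultaneous induction on the structure of the inner $\epsilon$-Sahlqvist signed generation trees $+\alpha$ and $-\beta$, peeling off the inner connectives along the $\epsilon$-critical branches one at a time. The invariant I would maintain is that every inequality under consideration has one of the two shapes $\forall\vec{y}(\iota\leq\delta)$, with $\iota$ \emph{pure} and $+\delta$ inner $\epsilon$-Sahlqvist, or dually $\forall\vec{y}(\delta\leq\iota)$, with $\iota$ pure and $-\delta$ inner $\epsilon$-Sahlqvist (the starting cases $\nomi\leq\alpha$ and $\beta\leq\neg\nomi$ being instances with empty prefix). Since the residuation rule for $\neg$ moves a negated subformula across the inequality, turning $\iota\leq\neg\delta$ into $\delta\leq\neg\iota$ and conversely, the two shapes must be handled together in a single mutual induction rather than separately.

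For the inductive step on $\forall\vec{y}(\iota\leq\delta)$, I would distinguish cases on the root connective of $\delta$, which on an $\epsilon$-critical branch can only be one of the inner nodes $+\wedge,+\Box,+@,+\downarrow x,+\neg$. A node $+\wedge$ is handled by the splitting rule, producing $\iota\leq\delta_1$ and $\iota\leq\delta_2$; the nodes $+\Box,+@,+\downarrow x$ are handled by the corresponding Substage~2 residuation rules, which replace the inequality respectively by $\Diamondblack\iota\leq\delta'$, by $\mathsf{E}\iota\land\nomj\leq\delta'$ (or $\mathsf{E}\iota\land x\leq\delta'$), and by $\forall y(\mathsf{A}(y\to\iota)\land y\leq\delta'[y/x])$; and $+\neg$ is handled by the residuation rule for $\neg$, switching to the dual shape. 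In every case the solved-for side stays pure, since it only accretes $\Diamondblack,\mathsf{E},\mathsf{A},\land,\forall$ applied to pure formulas, while the non-pure side $\delta'$ is a proper subformula whose inherited sign keeps it inner $\epsilon$-Sahlqvist. The induction thus terminates at the leaves: a critical leaf $+p$ (so $\epsilon(p)=1$) yields form~1, a remaining subtree with no critical leaf yields form~3, and dually the right-pure shape yields forms~2 and~4.

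The key structural fact to isolate is that any subtree of $+\alpha$ (resp.\ $-\beta$) lying off the $\epsilon$-critical branches is $\epsilon^{\partial}$-uniform: if $+p$ occurs there it is non-critical, whence $\epsilon(p)=\partial$, i.e.\ $\epsilon^{\partial}(p)=1$ matching the sign $+$, and symmetrically for $-p$; as $\epsilon(p)$ is fixed, each such variable occurs with a single sign, so the subtree is uniform with exactly the sign prescribed by $\epsilon^{\partial}$. This justifies classifying as forms~3 and~4 those inequalities whose non-pure side carries no critical variable, and it also explains why no rule gets stuck: the only Substage~2-unhandled connectives on the non-pure side are the outer ones ($+\lor,+\Diamond,+\to$ and their duals), which by inner $\epsilon$-Sahlqvistness cannot occur on a critical branch and hence sit only inside an $\epsilon^{\partial}$-uniform $\gamma$.

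I expect the main obstacle to be the bookkeeping of the universal-quantifier prefixes introduced by the $\downarrow$-residuation rule. Once a $\forall y$ is introduced, all subsequent peeling must be carried out inside its scope: a splitting step first produces a meta-conjunction $\forall y(\mathsf{Mega}_1\bigamp\mathsf{Mega}_2)$, which must then be broken up by the second splitting rule into $\forall y(\mathsf{Mega}_1)$ and $\forall y(\mathsf{Mega}_2)$, as illustrated in Example~\ref{Example:ALBA:execution:2}. Formalising the induction so that the hypothesis applies uniformly under an arbitrary prefix $\forall\vec{y}$, and checking that the substitution $\delta'[y/x]$ preserves the signed generation tree up to renaming of a bound state variable—so that inner $\epsilon$-Sahlqvistness and the node classification are unaffected—are the points demanding the most care; the rest of the case analysis is routine given the soundness results already established in Section~\ref{Sec:Soundness}.
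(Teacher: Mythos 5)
Your proposal is correct and follows essentially the same route as the paper's proof: both maintain the invariant that each (universally quantified) inequality keeps one side pure and the other inner $\epsilon$-Sahlqvist, handle the inner connectives $+\wedge,+\Box,+@,+\downarrow x,+\neg$ (and duals) via the splitting, residuation and second splitting rules, and terminate with either a critical $p$ on the non-pure side or an $\epsilon^{\partial}$-uniform subtree. Your write-up merely makes explicit what the paper leaves implicit --- the mutual induction forced by $\neg$-residuation, the quantifier-prefix bookkeeping, and the observation that off-critical subtrees are $\epsilon^{\partial}$-uniform --- so it is a faithful, more detailed rendering of the same argument.
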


\begin{proof}
First of all, from the rules of the Substage 2 of Stage 2, it is easy to see that from the given inequality, what we will obtain would be a set of mega-inequalities, and by applying the second splitting rule we would get universally quantified inequalities of the form $\forall x_1\ldots\forall x_k(\gamma\leq\delta)$. Now it suffices to check the shape of $\gamma$ and $\delta$. (From now on we call $\gamma\leq\delta$ the \emph{head} of the universally quantified inequality.)

Notice that for each input inequality, it is of the form $\nomi\leq\alpha$, $x\leq\alpha$ or $\beta\leq\neg\nomi$, $\beta\leq\neg x$, where $+\alpha$ and $-\beta$ are inner $\epsilon$-Sahlqvist. By applying the splitting rules and the residuation rules in this substage, it is easy to check that the head of the (universally quantified) inequality will have one side of the inequality pure, and the other side still inner $\epsilon$-Sahlqvist. By applying these rules exhaustively, one will either have $p$ as the non-pure side (with this $p$ on a critical branch), or have an inner $\epsilon$-Sahlqvist signed generation tree with no critical branch, i.e.,\ $\epsilon^{\partial}$-uniform.
\end{proof}

\begin{lemma}\label{Lemma:Substage:2:3}
Assume we have (universally quantified) inequalities of the form as described in Lemma \ref{Lemma:Substage:2:2}. Then we can get (universally quantified) inequalities of the following form:

\begin{enumerate}
\item $\alpha\leq p$ where $\epsilon(p)=1$, $\alpha$ is pure;
\item $p\leq\alpha$ where $\epsilon(p)=\partial$, $\alpha$ is pure;
\item $\forall x_1\ldots\forall x_k(\alpha\leq\gamma)$,
where $\alpha$ is pure and $+\gamma$ is $\epsilon^{\partial}$-uniform;

\item $\forall x_1\ldots\forall x_k(\gamma\leq\beta)$,
where $\beta$ is pure and $-\gamma$ is $\epsilon^{\partial}$-uniform.
\end{enumerate} 
\end{lemma}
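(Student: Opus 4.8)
The plan is to split on the four input shapes supplied by Lemma \ref{Lemma:Substage:2:2} and to observe that only the first two — those whose non-pure side is a single propositional variable — need to be touched, via the packing rules of Substage 3, while the last two are already in the required form. The governing observation throughout is that a propositional variable contains no state variable, so the side condition of the packing rules (the formula packed against must not contain the quantified state variable) is automatically met in exactly the cases where we wish to apply them.

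First I would treat the type-1 inequality $\forall x_1\ldots\forall x_k(\alpha\leq p)$ with $\alpha$ pure and $\epsilon(p)=1$. Since $p$ is a propositional variable, none of the bound variables $x_1,\ldots,x_k$ occurs in $p$, so the left packing rule applies; applying it $k$ times yields $(\exists x_1\ldots\exists x_k\alpha)\leq p$. As $\alpha$ is pure and prefixing existential state quantifiers introduces no propositional variables, $\exists x_1\ldots\exists x_k\alpha$ is again pure, so the result is exactly of the target form 1. Dually, for the type-2 inequality $\forall x_1\ldots\forall x_k(p\leq\beta)$ with $\epsilon(p)=\partial$ and $\beta$ pure, the right packing rule applies (again because $p$ contains no $x_i$), and $k$ applications give $p\leq\forall x_1\ldots\forall x_k\beta$, which is of the target form 2. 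For the type-3 and type-4 inequalities $\forall x_1\ldots\forall x_k(\alpha\leq\gamma)$ and $\forall x_1\ldots\forall x_k(\gamma\leq\beta)$, the non-pure side $\gamma$ is merely $\epsilon^{\partial}$-uniform and in general genuinely depends on the bound variables (as already witnessed by the inequality $\forall y(\Diamondblack(\mathsf{A}(y\to\Diamondblack\nomi_{0})\land y)\leq y)$ in Example \ref{Example:ALBA:execution:2}), so the packing rules are not applicable and I would leave these unchanged; they already appear verbatim as forms 3 and 4. The soundness of all these manipulations is supplied by Proposition \ref{Prop:Substage:3}.

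The argument is essentially bookkeeping, so I do not expect a deep obstacle; the only point requiring genuine care is verifying the side condition of the packing rules, namely that the propositional variable $p$ packed against does not contain the quantified state variables — which holds trivially since propositional variables never contain state variables — together with the routine check that purity is preserved under prefixing the existential (resp.\ universal) state-quantifier block.
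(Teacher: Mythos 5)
Your proposal is correct and takes essentially the same approach as the paper's proof: apply the packing rules to the type-1 and type-2 inequalities, justified by the observation that the propositional variable $p$ contains no state variables, and leave the type-3 and type-4 inequalities untouched since they are already in the target shape. The additional details you supply (iterating the packing rule $k$ times inside the quantifier prefix and checking that purity is preserved under prefixing $\exists x_1\ldots\exists x_k$ or $\forall x_1\ldots\forall x_k$) are routine elaborations of the same argument.
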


\begin{proof}
For universally quantified inequalities of form 1 and 2 in Lemma \ref{Lemma:Substage:2:2}, we can apply the packing rule since $p$ does not contain occurrences of state variables. For universally quantified inequalities of form 3 and 4 in Lemma \ref{Lemma:Substage:2:2}, we do not need to apply any rules in this stage.
\end{proof}

\begin{lemma}\label{Lemma:Substage:2:4}
Assume we have (universally quantified) inequalities of the form as described in Lemma \ref{Lemma:Substage:2:3}, the Ackermann lemmas are applicable and therefore all propositional variables can be eliminated.
\end{lemma}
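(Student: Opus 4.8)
The plan is to argue by induction on the number of propositional variables still present in the system, showing that at each step one variable can be removed by a single application of an Ackermann rule, and that the shape described in Lemma \ref{Lemma:Substage:2:3} is preserved by that elimination so that the induction can continue. Fix a propositional variable $p$ occurring in the system and split into two cases according to $\epsilon(p)$.

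First I would treat the case $\epsilon(p)=1$ and show that the right-handed Ackermann rule applies. The inequalities of Lemma \ref{Lemma:Substage:2:3} are sorted into the two groups demanded by that rule: the form-1 inequalities $\alpha\leq p$ having \emph{this} $p$ on the right play the role of the premises $\alpha_j\leq p$, and all remaining inequalities form the second group $\forall\vec x_i(\beta_i\leq\gamma_i)$. The verification reduces to checking the three side conditions. Purity of each $\alpha_j$ is immediate from form 1; moreover, after the packing step each such $\alpha_j$ has the shape $\exists\vec x\,\alpha$ and contains no free state variable, so neither $p$ nor the fresh universally quantified variables $\vec x_i$ occur in it. The monotonicity condition ($\beta_i$ positive, $\gamma_i$ negative in $p$) is read off from the $\epsilon^{\partial}$-uniformity: in a form-3 inequality $\forall\vec x(\alpha\leq\gamma)$ the left side $\alpha$ is pure, while in $+\gamma$ every leaf $p$ carries the sign dictated by $\epsilon^{\partial}(p)=\partial$, namely $-p$, so $\gamma$ is negative in $p$; dually, in a form-4 inequality $\forall\vec x(\gamma\leq\beta)$ the side $\beta$ is pure and, since $-\gamma$ is $\epsilon^{\partial}$-uniform, the leaf $p$ appears there as $-p$, hence as $+p$ in $+\gamma$, making $\gamma$ positive in $p$. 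Form-1 and form-2 inequalities in a variable $q\neq p$ contain no $p$ and satisfy the requirement vacuously. The case $\epsilon(p)=\partial$ is entirely symmetric, using the left-handed Ackermann rule with the form-2 inequalities $p\leq\alpha$ as premises.

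Having applied the rule, I would then check that the output system is again of the form prescribed by Lemma \ref{Lemma:Substage:2:3} for the remaining variables. Since every $\alpha_j$ substituted for $p$ is pure, the substitution $[\bigvee_j\alpha_j/p]$ replaces $p$ by a formula containing no propositional variables; hence it leaves untouched the occurrences of any other variable $q$, preserves the purity of the pure sides, and preserves the $\epsilon^{\partial}$-uniformity in the $q$'s of the surviving form-3 and form-4 inequalities. Thus the inductive hypothesis applies to the strictly smaller set of remaining variables, and iterating the argument eliminates them all. Soundness of each elimination is furnished by Lemma \ref{Lemma:Right:Ackermann} and its left-handed analogue, so the whole reduction is equivalence-preserving.

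The main obstacle I expect is the sign bookkeeping in the second paragraph: one must track carefully the interplay between the sign of a leaf in the positive versus the negative signed generation tree and the passage from $\epsilon$ to $\epsilon^{\partial}$, so as to certify precisely that the non-critical occurrences of $p$ sit in the monotone/antitone positions matching the Ackermann side conditions. A secondary point requiring care is the freshness of the state variables $\vec x_i$ introduced in Substage 2, which must be invoked to guarantee that none of them occurs free in the premises $\alpha_j$, as required by the hypothesis on $\vec x$ in Lemma \ref{Lemma:Right:Ackermann}.
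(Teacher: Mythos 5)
Your proposal is correct and follows the only available route: the paper's own proof consists of the single sentence ``Immediate observation from the requirements of the Ackermann lemmas,'' and your argument is exactly that observation carried out in detail --- matching the four forms of Lemma \ref{Lemma:Substage:2:3} to the side conditions of the right- and left-handed Ackermann rules via the sign reading of $\epsilon^{\partial}$-uniformity, the purity of the $\alpha_j$, and the freshness of the quantified state variables. The additional iteration step (purity of the substituted formula preserves the shape for the remaining variables) is a detail the paper leaves implicit but which your write-up correctly supplies.
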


\begin{proof}
Immediate observation from the requirements of the Ackermann lemmas.
\end{proof}

\begin{proof}[Proof of Theorem \ref{Thm:Success}]
Assume we have an $\epsilon$-Sahlqvist inequality $\phi\leq\psi$ as input. By Lemma \ref{Lemma:Stage:1}, we get a set of definite  $\epsilon$-Sahlqvist inequalities. Then by Lemma \ref{Lemma:Substage:2:1}, we get inequalities as described in Lemma \ref{Lemma:Substage:2:1}. By Lemma \ref{Lemma:Substage:2:2}, we get the universally quantified inequalities as described. Therefore by Lemma \ref{Lemma:Substage:2:3}, we can apply the packing rules to get inequalities and universally quantified inequalities as described in the lemma. Finally by Lemma \ref{Lemma:Substage:2:4}, the (universally quantified) inequalities are in the right shape to apply the Ackermann rules, and thus we can eliminate all the propositional variables and the algorithm succeeds on the input.
\end{proof}

\section{Conclusion}\label{Sec:Conclusion}

In the present paper, we investigates the correspondence theory for hybrid logic with binder $\mathcal{H}(@,\downarrow)$. We define the class of Sahlqvist $\mathcal{H}(@,\downarrow)$-inequalities, and show that each of these inequalities has a first-order frame correspondent by an algorithm $\mathsf{ALBA}^{\downarrow}$. 

For future directions, we consider the canonicity theory for $\mathcal{H}(@,\downarrow)$, i.e.\ which class of $\mathcal{H}(@,\downarrow)$-formulas is preserved under taking canonical extensions or MacNeille completions, as well as develop the correspondence theory and canonicity theory for other very expressive hybrid languages, e.g.\ with $\exists, \Downarrow, \Sigma$ binders, where the $\exists$ binder is interpreted as in page \pageref{page:downarrow}, and $\Downarrow$ and $\Sigma$ binders are interpreted as follows:

\begin{center}
$\mathbb{M},g,w\Vdash \Sigma x.\varphi$ iff $\mathbb{M},g^{x}_{w'},w'\Vdash\varphi$ for some $w'\in W$;

$\mathbb{M},g,w\Vdash \Downarrow x.\varphi$ iff $\mathbb{M},g^{x}_{w},w'\Vdash\varphi$ for some $w'\in W$.
\end{center}

\paragraph{Acknowledgement} The research of the author is supported by Taishan University Starting Grant ``Studies on Algebraic Sahlqvist Theory'' and the Taishan Young Scholars Program of the Government of Shandong Province, China (No.tsqn201909151).

\bibliographystyle{abbrv}
\bibliography{Binder}

\end{document}